\newtheorem{thm}{Theorem}[section]
\newtheorem{lem}[thm]{Lemma}
\newtheorem{clm}[thm]{Claim}
\newtheorem{cor}[thm]{Corollary}
\newtheorem{probl}[thm]{Problem}
\newtheorem{obs}[thm]{Observation}
\newtheorem{conj}{Conjecture}
\newtheorem{dfn}[thm]{Definition}
\newtheorem{ex}[thm]{Example}
\newtheorem{rem}[thm]{Remark}
\newcommand{\+}{\oplus} 
\renewcommand{\vec}[1]{\boldsymbol{#1}} 
\newcommand{\Alpha}{\vec{\alpha}}
\newcommand{\Beta}{\vec{\beta}}
\newcommand{\x}{\vec{x}}
\newcommand{\y}{\vec{y}}
\newcommand{\z}{\vec{z}}
\newcommand{\e}{\vec{e}}
\renewcommand{\a}{\vec{a}}
\renewcommand{\b}{\vec{b}}
\renewcommand{\c}{\vec{c}}
\renewcommand{\v}{\vec{v}}
\newcommand{\p}{\vec{p}}
\newcommand{\m}{\vec{m}}
\newcommand{\Deg}[2]{\mathrm{deg}_{#2}(#1)} 
\newcommand{\nulis}{\vec{0}} 
\newcommand{\vienas}{\vec{1}} 
\newcommand{\cov}[1]{\mathrm{cov}(#1)} 
\newcommand{\rk}[1]{\mathrm{rk}(#1)} 
\newcommand{\minrk}[1]{\mathrm{mr}(#1)} 
\newcommand{\maxrk}[1]{\mathrm{Mr}(#1)} 
\newcommand{\wminrkcol}[1]{\mathrm{mr}_{\mathrm{col}}(#1)}
\newcommand{\wminrkrow}[1]{\mathrm{mr}_{\mathrm{row}}(#1)}
\newcommand{\rig}[2]{{\cal R}_{#1}(#2)} 
\newcommand{\cl}[1]{\omega(#1)} 
\newcommand{\lin}[1]{\mathrm{lin}(#1)} 
\newcommand{\opt}[1]{\mathrm{opt}(#1)} 
\newcommand{\proj}[2]{#1\!\!\upharpoonright_{#2}} 
\newcommand{\Ball}[1]{\mathrm{Ball}(#1)} 
\newcommand{\skl}[2]{\langle#1,#2\rangle} 
\newcommand{\gf}{\{0,1\}} 
\renewcommand{\ker}[1]{\mathrm{ker}(#1)} 
\newcommand{\spnm}[1]{\widehat{#1}} 
\newcommand{\width}[1]{\mathrm{width}(#1)} 
\newcommand{\match}[1]{\mathrm{m}(#1)} 
\begin{document}

\title{Min-Rank Conjecture for Log-Depth Circuits
\thanks{Research of both authors supported by a DFG grant SCHN~503/4-1.}}

\author{S. Jukna \ \ \ \ G. Schnitger\\[1ex]
{\small University of Frankfurt, Institute of Computer Science}\\
{\small Frankfurt am Main, Germany}}

\date{} 

\maketitle

  \begin{abstract}
    A completion of an $m$-by-$n$ matrix $A$ with entries in
    $\{0,1,\ast\}$ is obtained by setting all $\ast$-entries to
    constants $0$ and $1$.  A system of semi-linear equations over
    $GF_2$ has the form $M\x=f(\x)$, where $M$ is a completion of $A$
    and $f:\gf^n\to\gf^m$ is an operator, the $i$th coordinate of
    which can only depend on variables corresponding to $\ast$-entries
    in the $i$th row of~$A$. We conjecture that no such system can
    have more than $2^{n-\epsilon\cdot\minrk{A}}$ solutions, where
    $\epsilon>0$ is an absolute constant and $\minrk{A}$ is the
    smallest rank over $GF_2$ of a completion of~$A$.  The conjecture
    is related to an old problem of proving super-linear lower bounds
    on the size of log-depth boolean circuits computing linear
    operators $\x\mapsto M\x$.  The conjecture is also a
    generalization of a classical question about how much larger can
    non-linear codes be than linear ones.  We prove some special cases
    of the conjecture and establish some structural properties of
    solution sets.
  \end{abstract}

\section{Introduction}

One of the challenges in circuit complexity is to prove a super-linear
lower bound for log-depth circuits over $\{\&,\lor,\neg\}$ computing
an explicitly given boolean operator $f:\{0,1\}^n\to\{0,1\}^n$.
Attempts to solve it have led to several weaker problems which are
often of independent interest.  The problem is open even if we impose
an additional restriction that the depth of the circuit is~$O(\log
n)$. It is even open for \emph{linear} log-depth circuits, that is,
for log-depth circuits over the basis $\{\oplus,1\}$, in spite of the
apparent simplicity of such circuits. It is clear that the operators
computed by linear circuits must also be linear, that is, be
matrix-vector products $\x\to M\x$ over the field
$GF_2=(\gf,\+,\cdot)$,

An important result of Valiant \cite{valiant} reduces the lower bounds
problem for log-depth circuits over $\{\&,\lor,\neg\}$ to proving
lower bounds for certain depth-$2$ circuits, where we allow
\emph{arbitrary} boolean functions as gates.

\subsection{Reduction to depth-$2$ circuits}

A depth-$2$ circuit of \emph{width} $w$ has $n$ boolean variables
$x_1,\ldots,x_n$ as input nodes, $w$ arbitrary boolean functions
$h_1,\ldots,h_w$ as gates on the middle layer, and $m$ arbitrary
boolean functions $g_1,\ldots,g_m$ as gates on the output layer.
Direct input-output wires, connecting input variables with output
gates, are allowed. Such a circuit computes an operator
$f=(f_1,\ldots,f_m):\gf^n\to\gf^m$ if, for every $i=1,\ldots,m$,
\[
f_i(\x)=g_i(\x,h_1(\x),\ldots,h_w(\x))\,.
\]
The \emph{degree} of such a circuit is the maximum, over all output
gates $g_i$, of the number of wires going directly from input
variables $x_1,\ldots,x_n$ to the gate $g_i$.  That is, we ignore the
wires incident with the gates on the middle layer.  Let $\Deg{f}{w}$
denote the smallest degree of a depth-$2$ circuit of width $w$
computing~$f$.

It is clear that $\Deg{f}{n}=0$ for $f:\{0,1\}^n\to\{0,1\}^n$: just
put the functions $f_1,\ldots,f_n$ on the middle layer. Hence, this
parameter is only nontrivial for $w<n$.  Especially interesting is the
case when $w=O(n/\ln\ln n)$ (see also  Theorem~2.2 in  \cite{PRS}
for more details):

\begin{lem}[Valiant \cite{valiant}]\label{lem:valiant}
  If $\Deg{f}{w}=n^{\Omega(1)}$ for $w=O(n/\ln\ln n)$, then the
  operator $f$ cannot be computed by a circuit of depth $O(\ln n)$
  using $O(n)$ constant fan-in gates.
\end{lem}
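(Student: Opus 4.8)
The plan is to prove the contrapositive. Assume $f$ is computed by a circuit $C$ of depth $d\le c\ln n$ built from $s\le\lambda n$ gates of fan-in at most $2$ (a larger constant fan-in is handled identically). I will show that for \emph{every} constant $\epsilon>0$ there is a depth-$2$ circuit of width $w=O(n/\ln\ln n)$ computing $f$ whose degree is at most $n^{\epsilon}$. Hence $\Deg{f}{w}\le n^{\epsilon}$ for all large $n$, and since this holds for every $\epsilon>0$ it contradicts $\Deg{f}{w}=n^{\Omega(1)}$.

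The heart is Valiant's graph-theoretic depth-reduction argument on the DAG underlying $C$. Label each gate $v$ by $\ell(v)\in\{0,1,\dots,d\}$, the length of a longest directed path ending at $v$, and write $\ell(v)$ in binary using $t:=\lceil\log_2(d+1)\rceil=\Theta(\ln\ln n)$ bit positions. Every wire $(u,v)$ has $\ell(u)<\ell(v)$; attach to it the index of the most significant bit at which $\ell(u)$ and $\ell(v)$ differ, partitioning the wires into $t$ classes. For a set $J$ of $r$ bit positions, delete every wire whose index lies in $J$. On any surviving wire $(u,v)$ the top differing bit $b$ lies outside $J$, all bits above $b$ agree, and the bit at $b$ goes from $0$ to $1$; hence the number formed by the bits of $\ell(\cdot)$ at positions outside $J$ strictly increases along every surviving directed path. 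As that number takes fewer than $2^{t-r}$ values, every surviving path has fewer than $2^{t-r}$ edges. Choosing $J$ to be the $r$ classes with the fewest wires, at most $\frac{r}{t}\cdot(\text{number of wires})\le\frac{2rs}{t}$ wires are deleted.

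Now build the depth-$2$ circuit. Let $S$ be the set of tails of the deleted wires, so $|S|\le 2rs/t$, and let $C'$ be $C$ with those wires removed; $C'$ has depth below $2^{t-r}$ and fan-in $\le 2$, so each output gate $o_i$ is reached in $C'$ by a set $A_i$ of at most $2^{2^{t-r}}$ input variables. Evaluating $C$ gate by gate in topological order shows that $f_i(\x)$ is determined by the variables in $A_i$ together with the values that $C$ computes at the gates of $S$: each gate lying on a $C'$-path into $o_i$ has every in-neighbour either in $S$ (the wire to it was deleted) or again on such a path and earlier in the order. So put the $|S|$ arbitrary functions $\x\mapsto(\text{value of }v\text{ in }C)$, $v\in S$, on the middle layer and let $g_i$ be the function that, reading these together with the variables of $A_i$ fed in directly, recomputes $f_i$. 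This is a depth-$2$ circuit of width $|S|\le 2rs/t$ and degree at most $2^{2^{t-r}}$.

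It remains to choose $r$. Given $\epsilon>0$, let $r=r(\epsilon,c)$ be a constant large enough that $2^{t-r}\le\epsilon\log_2 n$ for all large $n$; this is possible because $2^{t}=\Theta(d)=\Theta(\ln n)$. Then the degree is at most $2^{\epsilon\log_2 n}=n^{\epsilon}$, while the width is $2rs/t=O(n/\ln\ln n)$ since $r=O(1)$, $s=O(n)$ and $t=\Theta(\ln\ln n)$. The delicate point — and the reason for the binary labeling rather than a naive cut through every $\Delta$-th level — is this joint bookkeeping: the surviving depth must be reduced enough to pull the degree below $n^{\epsilon}$, yet few enough wires may be deleted to keep the width $O(n/\ln\ln n)$. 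The labeling makes both possible at once, because each halving of the surviving depth costs only a $1/t=O(1/\ln\ln n)$ fraction of the wires, so the constantly many halvings needed to reach degree $n^{\epsilon}$ leave the width at $O(n/\ln\ln n)$; the labeling also tolerates wires that skip over several $\ell$-levels, which a periodic level cut does not.
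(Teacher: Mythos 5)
Your proof is correct and is precisely Valiant's edge-labeling depth-reduction argument, which the paper does not reproduce but only cites from \cite{valiant} (see also Theorem~2.2 of \cite{PRS}): the strict increase of the projected label along surviving wires, the depth-2 circuit built from the tails of the deleted wires, and the width/degree accounting with a constant $r$ all check out. One cosmetic point: define $t$ from the depth \emph{bound} $c\ln n$ rather than from the actual depth $d$ (or dispose of the case $d\le\epsilon\log_2 n$ trivially, since then width $0$ and degree $2^{d}\le n^{\epsilon}$ already suffice), so that $t=\Theta(\ln\ln n)$ and hence the width bound $2rs/t=O(n/\ln\ln n)$ remain valid even when $d=o(\ln n)$.
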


Recently, there was a substantial progress in proving lower bounds on
the \emph{size} of (that is, on the total number of wires in)
depth-$2$ circuits. Superlinear lower bounds of the form
$\Omega(n\log^2n)$ were proved using graph-theoretic arguments by
analyzing some superconcentration properties of the circuit as a graph
\cite{DDPW,Pippenger1,Pippenger2,PS,Pudlak,AP,PRS,RT,RS}.  Higher
lower bounds of the form $\Omega(n^{3/2})$ were proved using
information theoretical arguments~\cite{cherukhin,juk}. But the
highest known lower bound on the \emph{degree} of width $w$ circuits
has the form $\Omega((n/w)\ln (n/w))$~\cite{PRS}, and is too weak to
have a consequence for log-depth circuits.

A natural question therefore was to improve the lower bound on the
degree at least for \emph{linear} circuits, that is, for depth-$2$
circuits whose middle gates as well as output gates are linear boolean
functions (parities of their inputs).  Such circuits compute linear
operators $\x\mapsto M\x$ for some $(0,1)$-matrix $M$; we work
over~$GF_2$.  By Valiant's reduction, this would give a super-linear
lower bound for log-depth circuits over $\{\oplus,1\}$.

This last question attracted attention of many researchers because of
its relation to a purely algebraic characteristic of the underlying
matrix $M$---its rigidity. The \emph{rigidity} $\rig{M}{r}$ of a
$(0,1)$-matrix $M$ is the smallest number of entries of $M$ that must
be changed in order to reduce its rank over $GF_2$ to $r$. It is not
difficult to show (see \cite{valiant}) that any linear depth-$2$
circuit of width $w$ computing $M\x$ must have degree at least
$\rig{M}{w}/n$: If we set all direct input-output wires to $0$, then
the resulting degree-$0$ circuit will compute some linear
transformation $M'\x$ where the rank of $M'$ does not exceed the width
$w$. On the other hand, $M'$ differs from $M$ in at most $dn$ entries,
where $d$ is the degree of the original circuit. Hence,
$\rig{M}{w}\leq dn$ from which $d\geq \rig{M}{w}/n$ follows.

Motivated by its connection to proving lower bounds for log-depth
circuits, matrix rigidity (over different fields) was considered by
many authors, \cite{razborov,alon,PV,Friedman,Pudlak,PRS,SSS,RKh,
  lokam1,lokam2,paturi,Wolf} among others. It is therefore somewhat
surprising that the highest known lower bounds on $\rig{M}{r}$ (over
the field $GF_2$), proved in \cite{Friedman,SSS} also have the form
$\Omega((n^2/r)\ln(n/r))$, resulting to the same lower bound
$\Omega((n/w)\ln (n/w))$ on the degree of linear circuits as that for
general depth-$2$ circuits proved in~\cite{PRS}.  This phenomenon is
particularly surprising, because general circuits may use
\emph{arbitrary} (not just linear) boolean functions as gates.  We
suspect that the absence of higher lower bounds for linear circuits
than those for non-linear ones could be not just a coincidence.

\begin{conj}[Linearization conjecture for depth-$2$ circuits]\label{conj:gen}
  Depth-$2$ circuits can be linearized. That is, every depth-$2$
  circuit computing a linear operator can be transformed into an
  equivalent \emph{linear} depth-$2$ circuit without substantial
  increase of its width or its degree.
\end{conj}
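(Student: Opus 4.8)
The plan is to replace the informal Conjecture~\ref{conj:gen} by the precise min-rank statement announced in the abstract --- a bound of $2^{n-\epsilon\cdot\minrk{A}}$ on the number of solutions of a semi-linear system $M\x=f(\x)$ --- to prove that bound outright for structurally restricted $\ast$-patterns, and to establish enough structure about solution sets that the remaining gap is clearly isolated.

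\textbf{Step 1: reduce to a counting statement.} The bridge is the largest-fiber trick. Given a depth-$2$ circuit of width $w$ and degree $d$ computing $\x\mapsto M\x$, fix the values $\vec c\in\gf^{w}$ of the middle gates on a largest fiber $X_{\vec c}=\{\x:h_k(\x)=c_k\text{ for all }k\}$, so that $|X_{\vec c}|\ge 2^{n-w}$. Let $A$ be $M$ with the entry $(i,j)$ replaced by $\ast$ exactly when $x_j$ is wired directly into the output gate $g_i$; then $A$ has at most $d$ stars per row, and moving the $\ast$-columns of each row $i$ to the right-hand side turns ``$g_i$ agrees with $(M\x)_i$ on $X_{\vec c}$'' into an equation $\bigoplus_{j:\,A_{ij}\neq\ast}M_{ij}x_j=f_i(\x)$ whose right-hand side depends only on the star-variables of row $i$. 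Thus the $\ge 2^{n-w}$ points of $X_{\vec c}$ all solve $\bar A\x=f(\x)$ for a completion $\bar A$ of $A$; if the conjectured bound holds, then $\minrk{A}\le w/\epsilon$, i.e.\ some completion $M'$ of $A$ has rank $\le w/\epsilon$ and differs from $M$ in at most $d$ entries per row, and then the linear depth-$2$ circuit whose middle layer computes a basis of the row space of $M'$ and whose $i$th output gate adds the at most $d$ corrections $x_j$ with $M_{ij}\neq M'_{ij}$ computes $M\x$ with width $O(w)$ and degree $\le d$: a linearization. Via the rigidity argument recalled above, such a linearization transfers every linear-circuit lower bound to general depth-$2$ circuits, and via Lemma~\ref{lem:valiant} it bears on log-depth circuits. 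So it suffices to bound the number of solutions of a semi-linear system, and I take that as the working form of the conjecture.

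\textbf{Step 2: special cases.} Two opposite restrictions are within reach. If only few rows of $A$ carry stars, the remaining rows are genuine linear equations and a short induction on the number of $\ast$-rows gives the bound with $\epsilon=1$. Dually, if every row has at most one star, then each $f_i$ is a function of a single bit and hence affine, so $M\x=f(\x)$ is equivalent to $M'\x=\vec c$ for some completion $M'$ of $A$, and the number of solutions is at most $2^{\,n-\rk{M'}}\le 2^{\,n-\minrk{A}}$. The first genuinely nonlinear case, two stars per row, is the natural next target: I would use that every $2$-bit function agrees with an affine function on three of its four inputs, replace each $f_i$ by such an approximant --- chosen, if possible, so that the modified left-hand side is still a completion of $A$ --- and control the accumulated error by an inclusion--exclusion over the exceptional subcubes.

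\textbf{Step 3: the general bound, structure, and the obstacle.} For an arbitrary $\ast$-pattern I would pass to the Fourier side: writing $N$ for the number of solutions and $g_{\y}(\x)=(-1)^{\skl{\y}{f(\x)}}=\prod_{i:\,y_i=1}(-1)^{f_i(\x)}$, one has
\[
N=2^{-m}\!\!\sum_{\y\in\gf^{m}}\ \sum_{\x\in\gf^{n}}\!(-1)^{\skl{M^{T}\y}{\x}}(-1)^{\skl{\y}{f(\x)}}=\sum_{\y\in\gf^{m}}2^{\,n-m}\,\widehat{g_{\y}}(M^{T}\y),
\]
where $g_{\y}$ is a product of functions supported, respectively, on the $\ast$-columns of the rows $i$ with $y_i=1$. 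Only the $\y$ for which $M^{T}\y$ lies in the span of those columns contribute --- equivalently, the $\y$ annihilated by \emph{some} completion of $A$ --- and the whole weight of the argument is in bounding how many such $\y$ there are and how large the coefficients $\widehat{g_{\y}}(M^{T}\y)$ can be when the supports of the factors overlap. Alongside this I would prove structural facts that should hold unconditionally: solution sets are invariant under the affine constraints coming from the $\ast$-free rows, are fibered by the $\ast$-variables, and only grow when stars are added, so one may always normalise the pattern before counting. The main obstacle is that the general statement looks out of reach: its case with $f$ constant and a few full $\ast$-columns already contains the open question of how far a nonlinear code can beat a linear one, and through Step~1 it is linked to the rigidity problem, on which progress has been stuck for decades. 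So I expect the honest outcome to be what the abstract promises --- the conjecture for restricted $\ast$-patterns together with structural properties of solution sets --- and I expect the Fourier/inclusion--exclusion estimate of Step~3 to be exactly where a general argument stalls.
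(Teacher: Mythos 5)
The statement you were asked about is a \emph{conjecture}: the paper never proves it, and neither do you --- which you state candidly, so there is no hidden gap, only an honest non-proof on both sides. What can actually be carried out, you do, and it is essentially the paper's own route. Your Step~1 is precisely the paper's reduction: the largest-fiber argument (fix the middle-layer values $\vec b$ maximizing the fiber, get $|L|\ge 2^{n-w}$, observe $G(\x,\vec b)$ is consistent with the star-pattern $A_F$) is Lemma~\ref{lem:circ2}, the construction of a linear circuit from a completion of $A_F$ of rank $\minrk{A_F}$, using the existing direct wires to add the star-position corrections so the degree is preserved, is Lemma~\ref{lem:circ1}, and the combination is exactly Corollary~\ref{cor:circ}: the Min-Rank Conjecture implies the Linearization Conjecture. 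Where you differ is in which partial cases of the counting statement you target: your one-star-per-row case is the paper's Remark~\ref{rem:one-star}, but the paper's actual theorems go in other directions (column min-rank, Theorem~\ref{thm:weak-col}; star-monotone independent rows, Theorem~\ref{thm:weak-row}; code matrices, Lemma~\ref{lem:codes}), and its structural results are phrased via the forbidden set $K_A$ and Cayley-graph independence (Theorem~\ref{thm:cayley}) rather than your Fourier expansion. Your proposed two-stars-per-row attack via affine approximants and the Fourier identity are plausible but unproven sketches; be aware that the paper's isolation technique (Lemmas~\ref{lem:isol}--\ref{lem:almost}) already handles some multi-star rows unconditionally, so if you pursue Step~3 you should compare against it. In short: correct and essentially the paper's approach for the reduction, a somewhat different (and less developed) menu of special cases for the counting bound, and the conjecture itself remains open in both treatments.
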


If true, the conjecture would have important consequences for
log-depth circuits. Assuming this conjecture, any proof that every
depth-$2$ circuit of width $w=O(n/\ln\ln n)$ with unbounded fan-in
parity gates for a given linear operator $M\x$ requires
degree~$n^{\Omega(1)}$ would imply that $M\x$ requires a super-linear
number of gates in any log-depth circuit over $\{\&,\lor,\neg\}$. In
particular, this would mean that proving high lower bounds on matrix
rigidity is a much more difficult task than assumed before: such
bounds would yield super-linear lower bounds for log-depth circuits
over a general basis $\{\&,\lor,\neg\}$, not just for circuits over
$\{\oplus,1\}$.

As the first step towards Conjecture~\ref{conj:gen}, in this paper we
relate it to a purely combinatorial conjecture about partially defined
matrices---the \emph{min-rank conjecture}, and prove some results
supporting this last conjecture.  This turns the problem about the
linearization of depth-$2$ circuits into a problem of Combinatorial
Matrix Theory concerned with properties of completions of partially
defined matrices (see, e.g., the survey \cite{Jo90}).  Hence, the
conjecture may also be of independent interest.

Unfortunately, we were not able to prove the conjecture in its full
generality.  So far, we are only able to prove that some of its
special cases are true.  This is not very surprising because the
conjecture touches a basic problem in circuit complexity: Can
non-linear gates help to compute linear operators? This paper is just
the first step towards this question.

\subsection{The Min-Rank Conjecture}

A \emph{completion} of a $(0,1,\ast)$-matrix $A$ is a $(0,1)$-matrix
$M$ obtained from $A$ by setting all $\ast$'s to constants $0$
and~$1$.  A \emph{canonical completion} of $A$ is obtained by setting
all $\ast$'s in $A$ to~$0$.

If $A$ is an $m$-by-$n$ matrix, then each its completion $M$ defines a
linear operator mapping each vector $\x\in\gf^n$ to a vector $M\x\in
\gf^m$. Besides such (linear) operators we also consider general
ones. Each operator $G:\gf^n\to\gf^m$ can be looked at as a sequence
$G=(g_1,\ldots,g_m)$ of $m$ boolean functions $g_i:\gf^n\to\gf$.

We say that an operator $G=(g_1,\ldots,g_m)$ is \emph{consistent} with
an $m$-by-$n$ $(0,1,\ast)$-matrix $A=(a_{ij})$ if the $i$th boolean
function $g_i$ can only depend on those variables $x_j$ for which
$a_{ij}=\ast$. That is, the $i$th component $g_i$ of $G$ can only
depend on variables on which the $i$th row of $A$ has stars (see
Example \ref{ex:1}).

\begin{dfn}\rm
  With some abuse in notation, we call a set $L\subseteq\gf^n$ a
  \emph{solution} for a partial matrix $A$ if there is a completion
  $M$ of $A$ and an operator $G$ such that $G$ is consistent with $A$
  and $M\x=G(\x)$ holds for all $\x\in L$. A~solution $L$ is
  \emph{linear} if it forms a linear subspace of $\gf^n$ over $GF_2$.
\end{dfn}

That is, a solution for $A$ is a \emph{set} $L$ of $(0,1)$-vectors of
the form $L=\{\x\colon M\x=G(\x)\}$, where $M$ is a completion of $A$,
and $G$ is an operator consistent with $A$.  A solution $L$ is linear,
if $\x\+\y\in L$ for all $\x,\y\in L$.

Since, besides the consistency, there are no other restrictions on the
operator $G$ in the definition of the solution $L$, we can always
assume that $M$ is the canonical completion of $A$ (with all stars set
to $0$).

\begin{obs}[Canonical completions]
  If $L=\{\x\colon M\x=G(\x)\}$ is a solution for $A$, and $M'$ is
  the canonical completion of $A$, then there is an operator $G'$ such
  that $G'$ is consistent with $A$ and $L=\{\x\colon M'\x=G'(\x)\}$.
\end{obs}

\begin{proof}
  The $i$th row $\m_i$ of $M$ must have the form $\m_i=\m_i'+\p_i$,
  where $\m_i'\in\gf^n$ is the $i$th row of the canonical completion
  $M'$ of $A$, and $\p_i\in\gf^n$ is a vector with no $1$'s in
  positions where the $i$th row of $A$ has no stars. We can then
  define an operator $G'=(g_1',\ldots,g_m')$ by
  $g_i'(\x):=g_i(\x)\+\skl{\p_i}{\x}$. (As customary, the scalar
  product of two vectors $\x,\y\in\gf^n$ over $GF_2$ is
  $\skl{\x}{\y}=\sum_{i=1}^nx_iy_i\bmod{2}$.)  Since $G$ was
  consistent with $A$, the new operator $G'$ is also consistent with
  $A$.  Moreover, for every vector $\x\in\gf^n$, we have that
  $\skl{\m_i}{\x}=g_i(\x)$ iff $\skl{\m_i'}{\x}=g_i'(\x)$.
\end{proof}

We are interested in how much the maximum $\opt{A}=\max_L |L|$ over
all solutions $L$ for $A$ can exceed the maximum $\lin{A}=\max_L |L|$
over all linear solutions $L$ for $A$.  It can be shown
(Corollary~\ref{cor:lin-mr} below) that
\[
\lin{A}=2^{n-\minrk{A}}\,,
\]
where $\minrk{A}$ is the \emph{min-rank} of $A$ defined as the
smallest possible rank of its completion:
\[
\minrk{A}=\min\{\rk{M}\colon\mbox{$M$ is a completion of $A$}\}\,.
\]

If we only consider \emph{constant} operators $G$, that is, operators
with $G(\x)=\b$ for some $\b\in\gf^m$ and all $\x\in\gf^n$, then
Linear Algebra tells us that no solution for $A$ can have more than
$2^{n-r}$ vectors, where $r=\rk{M}$ is the rank (over $GF_2$) of the
canonical completion $M$ of~$A$, obtained by setting all stars to~$0$.

If we only consider \emph{affine} operators $G$, that is, operators of
the form $G(\x)=H\x\+\b$ where $H$ is an $m$-by-$n$ $(0,1)$-matrix,
then no solution for $A$ can have more than $2^{n-\minrk{A}}$ vectors,
because then the consistency of $G(\x)$ with $A$ ensures that, for
every completion $M$ of $A$, the matrix $M\+H$ is a completion of $A$
as well.
\begin{rem}\label{rem:one-star}\rm
  This last observation implies, in particular, that $\opt{A}\leq
  2^{n-\minrk{A}}$ for all $(0,1,\ast)$-matrices $A$ with at most one
  $\ast$ in each row: In this case each $g_i$ can depend on at most
  one variable, and hence, must be a linear boolean function.
\end{rem}
We conjecture that a similar upper bound also holds for \emph{any}
operator $G$, as long as it is consistent with~$A$. That is, we
conjecture that linear operators are almost optimal.

\begin{conj}[Min-Rank Conjecture]\label{conj:goal}
  There exists a constant $\epsilon>0$ such that for every $m$-by-$n$
  $(0,1,\ast)$-matrix~$A$ we have that $\opt{A}\leq
  2^{n-\epsilon\cdot\minrk{A}}$ or, equivalently,
  \begin{equation}\label{eq:first}
    \opt{A}\leq 2^n\bigg(\frac{\lin{A}}{2^n} \bigg)^{\epsilon}\,.
  \end{equation}
\end{conj}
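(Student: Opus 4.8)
The two displayed bounds are equivalent with no work: by Corollary~\ref{cor:lin-mr} we have $\lin{A}=2^{n-\minrk{A}}$, so $2^n(\lin{A}/2^n)^{\epsilon}=2^{n-\epsilon\cdot\minrk{A}}$, and it suffices to establish the first inequality. Write $r:=\minrk{A}$; the goal is $\opt{A}\leq 2^{n-\epsilon r}$ for an absolute $\epsilon>0$.

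The plan is to first replace the canonical completion by a \emph{min-rank} completion, and then exploit the resulting rank-$r$ structure. Extending the observation on canonical completions above: if $M$ is the canonical completion, $M^{\ast}$ a completion with $\rk{M^{\ast}}=r$, and $P:=M^{\ast}\+M$, then every row $P_i$ of $P$ is supported on the star-positions $S_i$ of the $i$th row of $A$ (since $M$ and $M^{\ast}$ agree off the stars). Hence for any solution $L=\{\x\colon M\x=G(\x)\}$ with $G=(g_1,\dots,g_m)$ consistent with $A$, the operator $G^{\ast}$ given by $g_i^{\ast}(\x):=g_i(\x)\+\skl{P_i}{\x}$ is again consistent with $A$, and $L=\{\x\colon M^{\ast}\x=G^{\ast}(\x)\}$. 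So we may assume $\rk{M}=r$. Put $W:=\ker{M}$, $\dim W=n-r$. Then $L$ breaks up along the cosets of $W$: on a coset $\x_v+W$ one has $M\x=M\x_v=:\c_v$ for all $\x$, so $L\cap(\x_v+W)=\{\x\in\x_v+W\colon G(\x)=\c_v\}$, and $|L|=\sum_v|\{\x\in\x_v+W\colon G(\x)=\c_v\}|$. When $G$ is affine this sum is carried by essentially one coset and recovers the bound $2^{n-r}$; the content of the conjecture is that nonlinearity of $G$ cannot inflate it beyond a fixed power.

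The crux is therefore to bound this sum for an arbitrary consistent $G$. I would try three lines of attack. (i) Induction on the number of stars: delete one star of $A$, relating $\opt$ and $\minrk$ of the smaller pattern to those of $A$; the obstacle is that $\minrk$ is not monotone when a single $\ast$ is fixed, so the recursion does not close unless one carefully controls how far $\minrk$ can drop and amortizes this against the factor lost in $\opt$. (ii) A packing/dimension argument using that $M(\gf^n)$ is only $r$-dimensional while the constraint $(M\x)_i=g_i(\x)$ couples only the columns in $S_i$: when the $S_i$ are small this forces many almost-independent linear constraints (the case of Remark~\ref{rem:one-star} being the extreme), and one would hope to reduce the general case to this by grouping rows with overlapping star-supports. (iii) A Fourier-analytic estimate on the indicator $\mathbf{1}_L$, where the product form of the consistency condition limits the supports of the relevant characters.

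I expect the genuine obstacle to be exactly the regime that makes the conjecture non-trivial: rows carrying many stars, where $g_i$ is a truly nonlinear function of many variables and none of the affine-case reasoning applies — this is precisely where ``non-linear codes beating linear ones'' enters, as flagged in the abstract, and it is the point at which all three attacks above stall. A realistic intermediate target is to prove the bound in structured cases — $\minrk{A}$ close to $n$ (few free directions $W$), each row or each column of $A$ containing few stars, or the star-pattern forming a staircase — and, for general $A$, to establish structural restrictions on maximum-size solutions $L$ in lieu of the full quantitative bound.
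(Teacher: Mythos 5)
The statement you were asked about is a \emph{conjecture}, and the paper does not prove it: the authors explicitly say they can only establish special cases (Theorems~\ref{thm:weak-col} and~\ref{thm:weak-row}, Lemma~\ref{lem:codes}) and structural facts about solutions. So there is no proof in the paper to match your proposal against, and your proposal, by your own admission, is not a proof either — all three attack lines (induction on stars, packing by star-supports, Fourier analysis on $\mathbf{1}_L$) are left at the level of plans that ``stall'' exactly where the conjecture is hard, namely rows with many stars and genuinely non-linear $g_i$. That is the genuine gap, and it is the same gap the paper leaves open.

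Your preliminary reductions are sound and consistent with the paper's machinery: the equivalence of the two displayed bounds does follow from $\lin{A}=2^{n-\minrk{A}}$ (Corollary~\ref{cor:lin-mr}), and your switch from the canonical completion to a min-rank completion is just the paper's Observation on canonical completions run in the other direction — the difference of two completions is supported on star-positions, so it can be absorbed into a consistent operator. The coset decomposition along $\ker{M}$ for a rank-$\minrk{A}$ completion correctly recovers $2^{n-\minrk{A}}$ in the affine case, matching Remark~\ref{rem:one-star} and the discussion before Conjecture~\ref{conj:goal}. Your proposed ``structured cases'' are in fact close to what the paper actually proves: few stars per row in a nested (staircase) pattern is Theorem~\ref{thm:weak-row}, independence of columns is Theorem~\ref{thm:weak-col}, and the coding-theoretic regime is Section~\ref{sec:codes}; the paper's Cayley-graph reformulation (Theorem~\ref{thm:cayley}) and the structural Theorem~\ref{thm:struct2} play the role of your ``structural restrictions on maximum-size solutions.'' So your write-up is a reasonable research plan aligned with the paper's partial results, but it should not be presented as a proof of the conjecture, which remains open.
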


\begin{rem}\rm
To have consequences for log-depth circuits, it would be enough,
by Lemma~\ref{lem:valiant}, that the conjecture holds at least
for $\epsilon=o(1/\log\log n)$.
\end{rem}

\begin{ex}\label{ex:1}\rm
  To illustrate the introduced concepts, let us consider the following
  system of $3$ equations in~$6$ variables:
  \begin{align}
    x_1\+x_6&=x_3\cdot x_5\nonumber \\
    x_2\+x_3\+x_4&=x_1\cdot (x_5\+x_6) \label{eq:system}\\
    x_4&=(x_2\+x_5)\cdot(x_3\+x_6)\nonumber\,.
  \end{align}
  The corresponding $(0,1,\ast)$-matrix for this system is
  \begin{equation}\label{eq:matrix}
    A=
    \begin{pmatrix}
      1 & 0 & \ast & 0 & \ast & 1\\
      \ast & 1 & 1 & 1 & \ast & \ast\\
      0 & \ast & \ast & 1 & \ast & \ast
    \end{pmatrix}\,,
  \end{equation}
  and the system itself has the form $M\x=G(\x)$, where $M$ is the
  canonical completion of~$A$:
  \[
  M=
  \begin{pmatrix}
    1 & 0 & \underline{0} & 0 & \underline{0} & 1\\
    \underline{0} & 1 & 1 & 1 & \underline{0} & \underline{0}\\
    0 & \underline{0} & \underline{0} & 1 & \underline{0} &
    \underline{0}
  \end{pmatrix}\,,
  \]
  and $G=(g_1,g_2,g_3):\gf^6\to\gf^3$ is an operator with
  \begin{align*}
    g_1(\x)&= x_3\cdot x_5\,;\\
    g_2(\x)&= x_1\cdot (x_5\+x_6)\,;\\
    g_3(\x)&= (x_2\+x_5)\cdot(x_3\+x_6)\,.
  \end{align*}
  The min-rank of $A$ is equal $2$, and is achieved by the following
  completion:
  \[
  M'=
  \begin{pmatrix}
    1 & 0 & \underline{0} & 0 & \underline{0} & 1\\
    \underline{0} & 1 & 1 & 1 & \underline{0} & \underline{0}\\
    0 & \underline{1} & \underline{1} & 1 & \underline{0} &
    \underline{0}
  \end{pmatrix}\,.
  \]
\end{ex}

 \subsection{Our results}

 In Section~\ref{sec:log-depth} we prove the main consequence of the
 min-rank conjecture for boolean circuits: If true, it would imply
 that non-linear gates are powerless when computing linear operators
 $M\x$ by depth-$2$ circuits (Lemmas~\ref{lem:circ1}
 and~\ref{lem:circ2}).

 In Sections~\ref{sec:special} and \ref{sec:row-col} we prove some
 partial results supporting Conjectures~\ref{conj:gen} and
 \ref{conj:goal}. We first show (Corollary~\ref{cor:match}) that every
 depth-$2$ circuit of width $w$ computing a linear operator can be
 transformed into an equivalent \emph{linear} depth-$2$ circuit of the
 same degree and width at most $w$ plus the maximum number of wires in
 a matching formed by the input-output wires of the original circuit.

 We then prove two special cases of Min-Rank Conjecture.  A set of
 $(0,1,\ast)$-vectors is \emph{independent} if they cannot be made
 linearly dependent over $GF_2$ by setting stars to constants $0$ and
 $1$.  If $A$ is a $(0,1,\ast)$-matrix, then the upper bound
 $\opt{A}\leq 2^{n-r}$ holds if the matrix $A$ contains $r$
 independent columns (Theorem~\ref{thm:weak-col}).  The same upper
 bound also holds if $A$ contains $r$ independent rows, and the sets
 of star positions in these rows form a chain with respect to
 set-inclusion (Theorem~\ref{thm:weak-row}).

 After that we concentrate on the \emph{structure} of solutions.  In
 Section~\ref{sec:cayley} we show that solutions for a
 $(0,1,\ast)$-matrix $A$ are precisely independent sets in a Cayley
 graph over the Abelian group $(\gf^n,\+)$ generated by a special set
 $K_A\subseteq\gf^n$ of vectors defined by the matrix $A$
 (Theorem~\ref{thm:cayley}).

 In Section~\ref{sec:linear} we first show that every linear solution
 for $A$ lies in the kernel of some completion of $A$
 (Theorem~\ref{thm:lin-str}).  This, in particular, implies that
 $\lin{A}=2^{n-\minrk{A}}$ (Corollary~\ref{cor:lin-mr}), and gives an
 alternative definition of the min-rank $\minrk{A}$ as the smallest
 rank of a boolean matrix $H$ such that $H\x\neq\nulis$ for all $\x\in
 K_A$ (Corollary~\ref{cor:mr}).  In Section~\ref{sec:gen} we show that
 non-linear solutions $L$ must be ``very non-linear'': 
if $s$ is the maximum number of  $\ast$'s in a row of~$A$, and if $L$
contains a linear space $V$ such that no nozero vector with $s$ or fewer
$1$'s is orthogonal to $V$, then $L$ is contained in a \emph{linear}
solution for~$A$ (Theorem~\ref{thm:struct2}).

 In Section~\ref{sec:codes} we consider the relation of the min-rank
 conjecture with error-correcting codes.  We define
 $(0,1,\ast)$-matrices $A$, the solutions for which are
 error-correcting codes, and show that the min-rank conjecture for
 these matrices is true: In this case the conjecture is implied by
 well known lower and upper bounds on the size of linear and nonlinear
 error correcting codes (Lemma~\ref{lem:codes}).

 For readers convenience, we summarize the introduced concepts
 at the end of the paper (see Table~\ref{tab:1}).

 \section{Min-rank conjecture and depth-$2$ circuits}
 \label{sec:log-depth}

 Let $F$ be a depth-$2$ circuit computing a linear operator $\x\to
 M\x$, where $M$ is an $m$-by-$n$ $(0,1)$-matrix.  Say that the
 $(i,j)$th entry of $M$ is \emph{seen} by the circuit, if there is a
 direct wire from $x_j$ to the $i$th output gate. Replace all entries
 of $M$ seen by the circuit with $\ast$'s, and let $A_F$ be the
 resulting $(0,1,\ast)$-matrix. That is, given a depth-$2$ circuit $F$
 computing a linear operator $\x\to M\x$, we replace by $\ast$'s all
 entries of $M$ seen by the circuit, and denote the resulting
 $(0,1,\ast)$-matrix by $A_F$.  Note that the original matrix $M$ is
 one of the completions of $A_F$; hence, $\rk{M}\geq\minrk{A_F}$.

\begin{lem}\label{lem:circ}
  Every linear depth-$2$ circuit $F$ has $\width{F}\geq \minrk{A_F}$.
\end{lem}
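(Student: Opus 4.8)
The plan is to extract a low-rank completion of $A_F$ directly from the wiring of $F$. Since $F$ is linear, each middle gate computes a parity $h_k(\x)=\skl{\a_k}{\x}$ and each output gate computes $g_i(\x,h_1,\ldots,h_w)=\skl{\c_i}{\x}\+\sum_{k=1}^{w}\lambda_{ik}h_k(\x)$, each possibly with an added constant coming from the basis element $1$; here the support of $\c_i$ lies among the positions $j$ that have a direct wire $x_j\to g_i$, i.e.\ among the entries of row $i$ that are ``seen'' by $F$ and thus replaced by $\ast$ in $A_F$. Collecting the $\c_i$ as the rows of a matrix $C$, the $\a_k$ as the rows of a $w$-by-$n$ matrix $H$, and the $\lambda_{ik}$ into an $m$-by-$w$ matrix $\Lambda$, the circuit computes the map $\x\mapsto C\x\+\Lambda H\x\+\b$ for some fixed $\b\in\gf^m$ absorbing all the constants.

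First I would dispose of $\b$: evaluating the identity $M\x=C\x\+\Lambda H\x\+\b$ at $\x=\nulis$ gives $\b=\nulis$, since $M\nulis=\nulis$. Hence, as linear maps on $\gf^n$, we get $M=C\+\Lambda H$, that is, $M\+C=\Lambda H$ over $GF_2$. Now two observations finish the proof. On one hand $\rk{\Lambda H}\leq\rk{H}\leq w=\width{F}$, because $H$ has only $w$ rows. On the other hand $M\+C$ is a completion of $A_F$: at every position not seen by $F$ the matrix $C$ has a $0$, so $M\+C$ agrees there with $M$ and hence with $A_F$ (which carries no $\ast$ there), while at the seen positions $A_F$ has $\ast$'s, so any $0/1$ entries of $M\+C$ are legal. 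Therefore $\minrk{A_F}\leq\rk{M\+C}=\rk{\Lambda H}\leq\width{F}$, as claimed.

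I do not expect a genuine obstacle here; the two points that need a line of care are the bookkeeping of the additive constants on the middle and output gates, and the sanity check that the direct input-output wires into $g_i$ contain the support of the linear form $\c_i$, which is exactly the feature of $A_F$ used above (its $\ast$-positions in row $i$ are precisely the entries of row $i$ seen by $F$). Everything else is linear algebra over $GF_2$.
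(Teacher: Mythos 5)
Your proof is correct and is essentially the paper's argument: both isolate the contribution of the direct input-output wires, observing that what remains (your $M\+C=\Lambda H$, the paper's operator $B\x$ obtained by setting the direct wires to constants) is a completion of $A_F$ computed entirely through the $w$ middle gates. The only cosmetic difference is the last step: you bound $\rk{M\+C}\leq w$ by the explicit factorization through the $w$-row matrix $H$, whereas the paper gets the same bound by counting the $2^{\rk{B}}$ distinct values of $B\x$ against the at most $2^w$ possible middle-layer values.
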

In particular, if $F$ computes a linear operator $\x\mapsto M\x$ and
has no direct input-output wires at all, then $A_F=M$ and
$\width{F}\geq \rk{M}$.

\begin{proof}
  Let $M\x$ be a linear operator computed by $F$. Every assignment of
  constants to direct input-output wires leads to a depth-$2$ circuit
  of degree $d=0$ computing a linear operator $B\x$, where $B$ is a
  completion of $A_F$.  This operator takes $2^{\rk{B}}$ different
  values.  Hence, the operator $H:\gf^n\to\gf^w$ computed by
  $w=\width{F}$ boolean functions on the middle layer of $F$ must take
  at least so many different values, as well. This implies that the
  width $w$ must be large enough to fulfill $2^w\geq 2^{\rk{B}}$, from
  which $w\geq \rk{B}\geq \minrk{A_F}$ follows.
\end{proof}

\begin{lem}\label{lem:circ1}
  Every depth-$2$ circuit $F$ computing a linear operator can be
  transformed into an equivalent linear depth-$2$ circuit of the same
  degree and width at most $\minrk{A_F}$.
\end{lem}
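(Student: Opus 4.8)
The plan is to give an explicit construction, starting from a rank-minimizing completion of $A_F$. Set $r:=\minrk{A_F}$ and fix a completion $B$ of $A_F$ with $\rk{B}=r$. Since $B$ agrees with $M$ in every non-$\ast$ position of $A_F$, the matrix $N:=M+B$ has, in each row $i$, support contained in the set of $\ast$-positions of the $i$th row of $A_F$, i.e. in the set of entries of $M$ \emph{seen} by $F$ in row $i$. Over $GF_2$ we then have the splitting $M=B+N$, hence $M\x=B\x+N\x$, and the point is that the ``low-rank part'' $B\x$ can be produced by few parity gates on the middle layer, while the ``residual part'' $N\x$ is exactly what $F$ was already allowed to read directly.

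Concretely, I would build the linear depth-$2$ circuit $F'$ as follows. Choose a basis $\vec{b}_1,\dots,\vec{b}_r$ of the row space of $B$ over $GF_2$ and put the parity gates $h_k(\x):=\skl{\vec{b}_k}{\x}$, for $k=1,\dots,r$, on the middle layer; this already fixes the width of $F'$ to be $r=\minrk{A_F}$. Each row $\m_i'$ of $B$ is a combination $\m_i'=\bigoplus_{k\in S_i}\vec{b}_k$ of basis vectors, so $(B\x)_i=\bigoplus_{k\in S_i}h_k(\x)$ is a parity of middle gates only. Writing $\vec{n}_i$ for the $i$th row of $N$, I define the $i$th output gate of $F'$ to be the parity gate $g_i'(\x,h_1,\dots,h_r):=\bigl(\bigoplus_{k\in S_i}h_k\bigr)\oplus\skl{\vec{n}_i}{\x}$, where the summand $\skl{\vec{n}_i}{\x}$ is realized by direct input-output wires from the variables lying in the support of $\vec{n}_i$.

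It remains to verify the three claims. The circuit $F'$ is linear by construction, and it computes $\x\mapsto(B+N)\x=M\x$, so it is equivalent to $F$; its width is $r=\minrk{A_F}$ as noted. For the degree, observe that direct input-output wires into the $i$th output gate of $F'$ are precisely the coordinates in $\operatorname{supp}(\vec{n}_i)$, so their number is at most the number of $\ast$'s in the $i$th row of $A_F$, which is the number of entries of $M$ seen by $F$ in that row, which is at most the number of direct wires into the $i$th output gate of $F$, hence at most the degree of $F$. Thus $F'$ has degree no larger than that of $F$ (and one may insert zero-weight wires to make it exactly equal, if ``the same degree'' is read literally). The only step needing care is this degree bookkeeping, namely the observation that passing from $M$ to the low-rank completion $B$ alters exactly the entries that $F$ is permitted to ``see'', so that the residual matrix $N$ is, row by row, no denser than the direct-wire pattern of $F$; everything else is routine linear algebra. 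Applying Lemma~\ref{lem:circ} to $F'$ shows in addition that the width $\minrk{A_F}$ is optimal, so the bound cannot be improved.
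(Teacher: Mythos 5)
Your proposal is correct and is essentially the paper's own argument: both fix a rank-minimizing completion $B$ of $A_F$, place $r=\minrk{A_F}$ parities $\skl{\vec{b}_k}{\x}$ for a basis of the row space of $B$ on the middle layer, and realize the residual $M\+B$ (whose rows are supported on the $\ast$-positions, i.e.\ your $\vec{n}_i$ versus the paper's $\p_i\+\m_i'$) through the already existing direct input-output wires, so the degree does not increase. Your packaging of the residue as the single matrix $N=M\+B$ is just a compact rewriting of the paper's row-wise decomposition, not a different route.
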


Together with Lemma~\ref{lem:circ}, this implies that
$\width{F}=\minrk{A_F}$ for every optimal linear depth-$2$
circuit~$F$.

\begin{proof}
  Let $\x\to M\x$ be the operator computed by $F$, and let $A=A_F$ be
  the $(0,1,\ast)$-matrix of $F$.  We can construct the desired
  \emph{linear} depth-$2$ circuit computing $M\x$ as follows. Take a
  completion $B$ of $A$ with $\rk{B}=\minrk{A}$. By the definition of
  completions, the $i$th row $\b_i$ of $B$ has the form
  $\b_i=\a_i+\p_i$, where $\a_i$ is the $i$th row of $A$ with all
  stars set to $0$, and $\p_i$ is a $(0,1)$-vector having no $1$'s in
  positions, where this row of $A$ has non-stars.  The $i$th row
  $\m_i$ of the original $(0,1)$-matrix $M$ is of the form
  $\m_i=\a_i+\m_i'$, where $\m_i'$ is a $(0,1)$-vector which coincides
  with $\m_i$ in all positions, where the $i$th row of $A$ has stars,
  and has $0$'s elsewhere.

  The matrix $B$ has $r=\rk{B}=\minrk{A}$ linearly independent rows.
  Assume w.l.o.g. that these are the first rows $\b_1,\ldots,\b_r$ of
  $B$, and add $r$ linear gates computing the scalar products
  $\skl{\b_1}{\x},\ldots,\skl{\b_r}{\x}$ over $GF_2$ on the middle
  layer. Connect by wires each of these linear gates with all input
  and all output nodes.  Note that the $i$th output gate, knowing the
  vectors $\p_i$ and $\m_i'$, can compute both scalar products
  $\skl{\p_i}{\x}$ and $\skl{\m_i'}{\x}$ by only using existing direct
  wires from inputs $x_1,\ldots,x_n$ to this gate. Hence, using the
  $r$ linear gates $\skl{\b_1}{\x},\ldots,\skl{\b_r}{\x}$ on the
  middle layer, the $i$th output gate, for $i\leq r$, can also compute
  the whole scalar product $\skl{\m_i}{\x}$ of the input vector with
  the $i$th row of $M$ by:
  \[
  \skl{\m_i}{\x}=\skl{\a_i}{\x}\+\skl{\m_i'}{\x}
  =\skl{\b_i}{\x}\+\skl{\p_i}{\x}\+\skl{\m_i'}{\x}\,.
  \]
  For $i>r$, just replace vector $\b_i$ in this expression by the
  corresponding linear combination of $\b_1,\ldots,\b_r$.  We have
  thus constructed an equivalent linear depth-$2$ circuit of the same
  degree and of width $r=\minrk{A_F}$.
\end{proof}

By Lemma~\ref{lem:circ1}, the main question is: How much the width of
a circuit $F$ can be smaller than the min-rank of its matrix $A_F$?
Ideally, we would like to have that $\width{F}\geq \epsilon\cdot
\minrk{A_F}$: then the width of the resulting \emph{linear} circuit
would be at most $1/\epsilon$ times larger than that of the original
circuit~$F$.

Lemma~\ref{lem:circ} lower bounds the width of \emph{linear} circuits
$F$ in terms of the min-rank of their $(0,1,\ast)$-matrices $A_F$. We
now show that the Min-Rank Conjecture implies a similar fact also for
general (non-linear) circuits.

\begin{lem}\label{lem:circ2}
  For every depth-$2$ circuit $F$ computing a linear operator in $n$
  variables, we have that
  \[
  \width{F}\geq n-\log_2\opt{A_F}\,.
  \]
\end{lem}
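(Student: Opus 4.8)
The plan is to mimic the argument for Lemma~\ref{lem:circ}, replacing the rank count there with a count of solutions. Let $F$ be a depth-$2$ circuit of width $w=\width{F}$ computing the linear operator $\x\mapsto M\x$, and let $A=A_F$ be its associated $(0,1,\ast)$-matrix, so that $M$ is a completion of $A$ and every $\ast$-entry of $A$ in row $i$ corresponds to a direct input-output wire into the $i$th output gate. Write $H=(h_1,\ldots,h_w):\gf^n\to\gf^w$ for the operator computed by the middle layer. Fix any value $\vec{c}\in\gf^w$ that $H$ actually takes, and let $L_{\vec{c}}=\{\x\in\gf^n : H(\x)=\vec{c}\}$ be the corresponding fibre; these fibres partition $\gf^n$, and there are at most $2^w$ nonempty ones, so some fibre satisfies $|L_{\vec{c}}|\geq 2^{n-w}$.

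The key step is to show that each such fibre $L_{\vec{c}}$ is a solution for $A$ in the sense of the definition in the excerpt. On $L_{\vec{c}}$ the middle-layer values are frozen to the constant $\vec{c}$, so the $i$th output gate computes $g_i(\x)=\tilde g_i(\x,\vec{c})$, a boolean function that depends only on the input variables feeding it directly --- that is, only on those $x_j$ with $a_{ij}=\ast$. Hence the operator $G=(g_1,\ldots,g_m)$ obtained in this way is consistent with $A$, and since $F$ computes $M\x$ we have $M\x=G(\x)$ for every $\x\in L_{\vec{c}}$. Therefore $L_{\vec{c}}\subseteq\{\x : M\x=G(\x)\}$, a solution for $A$, and so $|L_{\vec{c}}|\leq\opt{A_F}$. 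Combining with $|L_{\vec{c}}|\geq 2^{n-w}$ for a suitable fibre gives $2^{n-w}\leq\opt{A_F}$, i.e. $w\geq n-\log_2\opt{A_F}$, as claimed.

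The only point requiring care --- and the one I would expect a referee to press on --- is the freezing step: one must be sure that fixing the middle-layer outputs to the constant $\vec{c}$ really does leave each output gate depending only on its direct input wires, with no hidden dependence on $\x$ routed through the (now constant) middle layer, and that the resulting $G$ meets the consistency condition exactly as defined. This is immediate from the structure of a width-$w$ depth-$2$ circuit: the inputs to $g_i$ are precisely the middle gates (all now constants) together with the direct wires recorded as $\ast$'s in row $i$ of $A_F$, so no genuine difficulty arises. Everything else is the pigeonhole bound $2^w\geq\#\{\text{nonempty fibres}\}$, which is routine.
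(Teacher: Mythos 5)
Your proposal is correct and follows essentially the same argument as the paper: pigeonhole over the at most $2^w$ possible middle-layer values, freeze the middle layer to a constant, observe the resulting operator is consistent with $A_F$, and conclude $\opt{A_F}\geq 2^{n-w}$. The only cosmetic difference is that you take the fibre $\{\x : H(\x)=\vec{c}\}$ while the paper takes the (possibly larger) set $\{\x : M\x=G(\x,\b)\}$; both are solutions for $A_F$ of size at least $2^{n-w}$, so the bound follows identically.
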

Hence, the Min-Rank Conjecture (stating that $\opt{A}\leq
2^{n-\epsilon\cdot\minrk{A}}$) implies that $\width{F}\geq
\epsilon\cdot \minrk{A_F}$.
\begin{proof}
  Let $M$ be an $m$-by-$n$ $(0,1)$-matrix.  Take a depth-$2$ circuit
  $F$ of width $w$ computing $M\x$, and let $A_F$ be the corresponding
  $(0,1,\ast)$-matrix. Let $H=(h_1,\ldots,h_w)$ be an operator
  computed at the gates on the middle layer, and $G=(g_1,\ldots,g_m)$
  an operator computed at the gates on the output layer. Hence,
  $M\x=G(\x,H(\x))$ for all $\x\in\gf^n$.  Fix a vector $\b\in\gf^w$
  for which the set $L=\{\x\in\gf^n\colon M\x=G(\x,\b)\}$ is the
  largest one; hence, $|L|\geq 2^{n-w}$.  Note that the operator
  $G'(\x):=G(\x,\b)$ must be consistent with $A$: its $i$th component
  $g_i'(\x)$ can only depend on input variables $x_j$ to which the
  $i$th output gate $g_i$ is connected. Hence, $L$ is a solution for
  $A_F$, implying that $\opt{A_F}\geq |L|\geq 2^{n-w}$ from which the
  desired lower bound $w\geq n-\log_2\opt{A_F}$ on the width of $F$
  follows.
\end{proof}

We can now show that the Min-Rank Conjecture
(Conjecture~\ref{conj:goal}) indeed implies the Linearization
Conjecture (Conjecture~\ref{conj:gen}).

\begin{cor}\label{cor:circ}
  Conjecture~\ref{conj:goal} implies Conjecture~\ref{conj:gen}.
\end{cor}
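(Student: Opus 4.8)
The plan is to combine Lemma~\ref{lem:circ1} and Lemma~\ref{lem:circ2} directly. The target statement, Corollary~\ref{cor:circ}, asserts that the Min-Rank Conjecture implies the Linearization Conjecture, so I would start from an arbitrary depth-$2$ circuit $F$ computing a linear operator $\x\mapsto M\x$, form its associated $(0,1,\ast)$-matrix $A_F$ as defined just before Lemma~\ref{lem:circ}, and track the two relevant parameters --- width and degree --- through the transformation.

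First I would invoke Lemma~\ref{lem:circ2} to get $\width{F}\geq n-\log_2\opt{A_F}$, i.e.\ $\opt{A_F}\geq 2^{n-\width{F}}$. Then, assuming Conjecture~\ref{conj:goal}, there is an absolute constant $\epsilon>0$ with $\opt{A_F}\leq 2^{n-\epsilon\cdot\minrk{A_F}}$. Chaining these two inequalities gives $2^{n-\width{F}}\leq 2^{n-\epsilon\cdot\minrk{A_F}}$, hence $\width{F}\geq\epsilon\cdot\minrk{A_F}$, equivalently $\minrk{A_F}\leq\width{F}/\epsilon$. Next I would apply Lemma~\ref{lem:circ1}, which transforms $F$ into an equivalent \emph{linear} depth-$2$ circuit $F'$ of the \emph{same} degree and of width exactly $\minrk{A_F}$. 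Combining, $F'$ is linear, computes the same operator $M\x$, has the same degree as $F$, and has width at most $\width{F}/\epsilon$. Since $\epsilon$ is an absolute constant, this is precisely the statement of Conjecture~\ref{conj:gen}: the width has increased by only a constant factor $1/\epsilon$, and the degree has not increased at all.

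The only subtlety worth spelling out is the direction of the definition of $A_F$: entries of $M$ \emph{seen} by $F$ (those reached by a direct input-output wire) become $\ast$'s, while the degree of $F$ equals the maximum number of such direct wires into any single output gate. In the transformation of Lemma~\ref{lem:circ1} the new linear output gates reconstruct $\skl{\m_i}{\x}$ using exactly the pre-existing direct wires (to compute $\skl{\p_i}{\x}$ and $\skl{\m_i'}{\x}$) plus the $r$ new middle-layer parity gates; no new direct input-output wires are added, so the degree is genuinely preserved. I would state this once to make clear the chain of inequalities is about width only and the degree claim is handled separately by Lemma~\ref{lem:circ1}.

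There is essentially no hard step here --- the corollary is a bookkeeping consequence of the two preceding lemmas and the conjecture. The one place to be slightly careful is in matching the two notions of ``equivalent'' and in confirming that ``without substantial increase of width or degree'' in Conjecture~\ref{conj:gen} is met by ``width multiplied by $1/\epsilon$, degree unchanged''; since $1/\epsilon$ is an absolute constant independent of $n$, $m$, and $F$, this is exactly the intended reading. I would therefore present the proof as a short paragraph: apply Lemma~\ref{lem:circ2}, apply the Min-Rank Conjecture, deduce $\minrk{A_F}\leq\width{F}/\epsilon$, and finish with Lemma~\ref{lem:circ1}.
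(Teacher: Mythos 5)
Your proposal is correct and follows exactly the paper's own argument: chain Lemma~\ref{lem:circ2} with the Min-Rank Conjecture to obtain $\minrk{A_F}\leq\width{F}/\epsilon$, then apply Lemma~\ref{lem:circ1} to produce an equivalent linear circuit of the same degree and width at most $\minrk{A_F}$. The extra remarks on degree preservation are fine but already built into Lemma~\ref{lem:circ1}.
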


\begin{proof}
  Let $F$ be a depth-$2$ circuit computing a linear operator in $n$
  variables.  Assuming Conjecture~\ref{conj:goal},
  Lemma~\ref{lem:circ2} implies that $\epsilon\cdot \minrk{A_F}\leq
  n-\log_2\opt{A_F}\leq \width{F}$.  By Lemma~\ref{lem:circ1}, the
  circuit $F$ can be transformed into an equivalent linear depth-$2$
  circuit of the same degree and width at most $\minrk{A_F}\leq
  \width{F}/\epsilon$.
\end{proof}
Hence, together with Valiant's result, the Min-Rank Conjecture implies
that a linear operator $M\x$ requires a super-linear number of gates
in any log-depth circuit over $\{\&,\lor,\neg\}$, if every depth-$2$
circuit for $M\x$ over $\{\oplus,1\}$ of width $w=O(n/\ln\ln n)$
requires degree~$n^{\Omega(1)}$.

Finally, let us show that the only ``sorrow'', when trying to
linearize a depth-$2$ circuit, is the possible non-linearity of
\emph{output} gates---non-linearity of gates on the middle layer is no
problem.

\begin{lem}\label{lem:circ3}
  Let $F$ be a depth-$2$ circuit computing a linear operator.  If all
  gates on the output layer are linear boolean functions, then $F$ can
  be transformed into an equivalent linear depth-$2$ circuit of the
  same degree and width.
\end{lem}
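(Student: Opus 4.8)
The claim is that if every gate on the output layer is a linear (parity) function, then the whole depth‑$2$ circuit $F$ can be made linear without changing its degree or width. The plan is to leave the output gates exactly as they are, and only replace each middle gate $h_k$ by a linear gate, showing that the output gates can recover the old value of $h_k(\x)$ using the linear replacements together with the direct input--output wires they already have. First I would fix notation: let $h_1,\dots,h_w$ be the middle gates and $g_1,\dots,g_m$ the output gates, with $g_i(\x,\h)=\skl{\c_i}{\x}\+\skl{\d_i}{\h}$ for some $(0,1)$-vectors $\c_i\in\gf^n$, $\d_i\in\gf^w$, since the $g_i$ are linear. Because $F$ computes a linear operator $\x\mapsto M\x$, we have $\skl{\m_i}{\x}=\skl{\c_i}{\x}\+\skl{\d_i}{H(\x)}$ for all $\x\in\gf^n$, where $H=(h_1,\dots,h_w)$.

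**The key step.** The function $\x\mapsto\skl{\d_i}{H(\x)}=\skl{\m_i}{\x}\+\skl{\c_i}{\x}$ is a parity of the input variables, for every $i$; write it as $\skl{\v_i}{\x}$ with $\v_i=\m_i\+\c_i$. Thus the operator $\x\mapsto(\skl{\v_1}{\x},\dots,\skl{\v_m}{\x})$ equals $x\mapsto D\,H(\x)$ where $D$ is the $m$-by-$w$ matrix with rows $\d_i$; in other words the linear map $D H(\cdot):\gf^n\to\gf^m$ agrees with the linear map $x\mapsto Vx$ ($V$ the matrix with rows $\v_i$) on all of $\gf^n$. Now I replace the middle layer as follows: let $r=\rk{V}$ and pick $r$ rows of $V$, say $\v_{i_1},\dots,\v_{i_r}$, forming a basis of the row space of $V$; put linear gates computing $\skl{\v_{i_1}}{\x},\dots,\skl{\v_{i_r}}{\x}$ on the middle layer. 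Since $r\le w$ (as $V=D H$ has rank at most the number of middle gates $w$), this does not increase the width. Every $\skl{\v_i}{\x}$ is a fixed $GF_2$-combination of these $r$ values, so the $i$th output gate, which must output $\skl{\m_i}{\x}=\skl{\c_i}{\x}\+\skl{\v_i}{\x}$, can compute $\skl{\c_i}{\x}$ from its direct input wires (these wires exist because $\c_i$ had support inside the ``seen'' entries, which are exactly the direct wires to gate $i$) and add the appropriate combination of the $r$ new middle gates; it remains a linear gate. The degree of gate $i$ is unchanged since its direct input wires are untouched.

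**Main obstacle.** The only subtlety is the bookkeeping of \emph{which} direct input--output wires are available to output gate $i$: I must argue that $\c_i$ is supported on the set of positions $j$ for which there is a direct wire $x_j\to g_i$ in the original circuit. This holds because $\c_i$ is precisely the coefficient vector of the direct-wire contribution to $g_i$ in $F$, so its support is contained in that wire set by definition; no new direct wires are introduced and the degree is preserved verbatim. Everything else is linear algebra over $GF_2$: the passage from $\{h_k\}$ to the $r\le w$ basis parities, and the observation $\rk{V}\le w$. I expect this to be entirely routine once the identity $\skl{\m_i}{\x}=\skl{\c_i}{\x}\+\skl{\v_i}{\x}$ with $\v_i$ depending only on the input is isolated; this mirrors exactly the construction in the proof of Lemma~\ref{lem:circ1}, with the role of the completion $B$ played here by the matrix $V$ of the input-dependent part of the output gates.
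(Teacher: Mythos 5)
Your proposal is correct, and it reaches the conclusion by a construction that differs from the paper's in an instructive way, although both rest on the same key observation. You isolate the fact that the middle layer's contribution to output $i$, namely $\skl{\vec{d}_i}{H(\x)}=\skl{\m_i\+\c_i}{\x}$, is forced to be a parity of the inputs; the paper isolates the equivalent fact $C\cdot H(\x)=D\x$ with $D=M\+B$ linear. From there the paper keeps the entire wiring (both the direct wires and the middle-to-output matrix $C$) untouched and replaces each middle gate $h_k$ \emph{in place} by its linear part $h_k'(\x)=\sum_j x_j h_k(\e_j)$, verifying $C\cdot H(\x)=C\cdot H'(\x)$ by linearity; this preserves the width exactly and the topology of the circuit. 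You instead discard the original middle gates altogether, install $r=\rk{V}$ parity gates computing a row basis of $V=M\+B$, and rewire the outputs, essentially re-running the construction of Lemma~\ref{lem:circ1} with $V$ in the role of the completion $B$ there. Your rank bound is sound (every $V\x$ lies in the column space of the $m$-by-$w$ matrix with rows $\vec{d}_i$, so $\rk{V}\le w$, even though $H$ itself is non-linear and ``$V=DH$'' is only an equality of maps, not of matrix products), and your bookkeeping of the supports of the $\c_i$ correctly guarantees that no new direct wires are needed, so the degree is preserved. What each buys: your route can strictly decrease the width to $\rk{M\+B}$, while the paper's route shows the slightly stronger structural fact that one may linearize each middle gate individually without touching a single wire.
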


\begin{proof}
  Let $M$ be an $m$-by-$n$ $(0,1)$-matrix, and let $F$ be a depth-$2$
  circuit of width $w$ computing $M\x$. Let $H=(h_1,\ldots,h_w)$ be
  the operator $H:\gf^n\to\gf^w$ computed by the gates on the middle
  layer.  Assume that all output gates of $F$ are linear boolean
  functions.  Let $B$ be the $m$-by-$n$ adjacency $(0,1)$-matrix of
  the bipartite graph formed by the direct input-output wires, and $C$
  be the $m$-by-$w$ adjacency $(0,1)$-matrix of the bipartite graph
  formed by the wires joining the gates on the middle layer with those
  on the output layer.  Then
  \[
  M\x=B\x\+C\cdot H(\x)\qquad\mbox{for all $\x\in\gf^n$,}
  \]
  where $C\cdot H(\x)$ is the product of the matrix $C$ with the
  vector $\y=H(\x)$.  Hence,
  \begin{equation}\label{eq:middle1}
    C\cdot H(\x)=D\x
  \end{equation}
  is a linear operator with $D=M\+B$.  Write each vector
  $\x=(x_1,\ldots,x_n)$ as the linear combination
  \begin{equation}\label{eq:middle2}
    \x=\sum_{i=1}^n x_i\e_i
  \end{equation}
  of unit vectors $\e_1,\ldots,\e_n\in\gf^n$, and replace the operator
  $H$ computed on the middle layer by a \emph{linear} operator
  \begin{equation}\label{eq:middle3}
    H'(\x):=\sum_{i=1}^n x_iH(\e_i) \pmod{2}\,.
  \end{equation}
  Then, using the linearity of the matrix-vector product, we obtain
  that (with all sums mod~$2$):
  \begin{xalignat*}{2}
    C\cdot H(\x)&= D\cdot \Big(\sum x_i\e_i\Big)
    && \mbox{by (\ref{eq:middle1}) and (\ref{eq:middle2})}\\
    &=\sum x_i D\e_i && \mbox{linearity}\\
    &=\sum x_i C\cdot H(\e_i) &&  \mbox{by (\ref{eq:middle1})}\\
    &=C\cdot \Big(\sum x_i H(\e_i)\Big) && \mbox{linearity}\\
    &=C\cdot H'(\x) && \mbox{by (\ref{eq:middle3})}\,.
  \end{xalignat*}
  Hence, we again have that $M\x=B\x\+C\cdot H'(\x)$, meaning that the
  obtained \emph{linear} circuit computes the same linear operator
  $M\x$.
\end{proof}

\section{Bounds on $\opt{A}$}
\label{sec:special}

Recall that $\opt{A}$ is the largest possible number of vectors in a
solution for a given $(0,1,\ast)$-matrix~$A$. The simplest properties
of this parameter are summarized in the following

\begin{lem}\label{lem:split}
  Let $A$ be an $m$-by-$n$ $(0,1,\ast)$-matrix.  If $A'$ is obtained
  by removing some rows of $A$, then $\opt{A'}\geq\opt{A}$.  If
  $A=[B,C]$ where $B$ is an $m$-by-$p$ submatrix of $A$ for some
  $1\leq p\leq n$, then
  \[
  \opt{B}\cdot\opt{C}\leq \opt{A}\leq \opt{B}\cdot 2^{n-p}\,.
  \]
\end{lem}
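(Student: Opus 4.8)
The plan is to prove the three inequalities separately, in each case by directly manipulating solutions (sets of the form $\{\x\colon M\x=G(\x)\}$) rather than reasoning about min-ranks.

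First, the monotonicity claim $\opt{A'}\geq\opt{A}$ when $A'$ drops some rows of $A$: if $L=\{\x\colon M\x=G(\x)\}$ is a solution for $A$ with $|L|=\opt{A}$, then restricting $M$ and $G$ to the rows retained in $A'$ gives a completion $M'$ of $A'$ and an operator $G'$ consistent with $A'$; every $\x\in L$ still satisfies $M'\x=G'(\x)$ (we only deleted constraints), so $L$ is a solution for $A'$ and $\opt{A'}\geq|L|=\opt{A}$.

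Next, the lower bound $\opt{B}\cdot\opt{C}\leq\opt{A}$ for $A=[B,C]$ with $B$ the first $p$ columns. Let $L_B\subseteq\gf^p$ be an optimal solution for $B$, witnessed by a completion $M_B$ and operator $G_B$ consistent with $B$, and similarly $L_C\subseteq\gf^{n-p}$ witnessed by $M_C,G_C$. Form the block-row matrix-by-column: set $M=[M_B,M_C]$ (an $m$-by-$n$ completion of $A$), and define $G(\u,\v):=G_B(\u)\+G_C(\v)$ for $\u\in\gf^p,\v\in\gf^{n-p}$. The key point is consistency: the $i$th component of $G_B$ depends only on star-positions in the $i$th row of $B$, and of $G_C$ only on those in the $i$th row of $C$, so their sum depends only on star-positions in the $i$th row of $A=[B,C]$. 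For $\x=(\u,\v)$ we have $M\x=M_B\u\+M_C\v$, so $M\x=G(\x)$ holds exactly when $M_B\u\+M_C\v=G_B(\u)\+G_C(\v)$; this is implied whenever $M_B\u=G_B(\u)$ \emph{and} $M_C\v=G_C(\v)$, i.e. whenever $\u\in L_B$ and $\v\in L_C$. Hence $L_B\times L_C$ is a solution for $A$, giving $\opt{A}\geq|L_B|\cdot|L_C|=\opt{B}\cdot\opt{C}$.

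Finally, the upper bound $\opt{A}\leq\opt{B}\cdot 2^{n-p}$. Let $L=\{\x\colon M\x=G(\x)\}$ be an optimal solution for $A$, and write $M=[M_B,M_C]$ and $\x=(\u,\v)$ as above. For a fixed $\v_0\in\gf^{n-p}$, the slice $L_{\v_0}=\{\u\in\gf^p\colon (\u,\v_0)\in L\}$ consists of those $\u$ with $M_B\u\+M_C\v_0=G(\u,\v_0)$, i.e. $M_B\u=G(\u,\v_0)\+M_C\v_0$. Define $G'(\u):=G(\u,\v_0)\+M_C\v_0$; since $\v_0$ is a constant, $G'$ is consistent with $B$ (each component still only depends on star-positions of the corresponding row of $B$, as the extra term $M_C\v_0$ is a constant). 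Thus $L_{\v_0}$ is a solution for $B$, so $|L_{\v_0}|\leq\opt{B}$. Summing over the at most $2^{n-p}$ choices of $\v_0$ gives $|L|=\sum_{\v_0}|L_{\v_0}|\leq\opt{B}\cdot 2^{n-p}$, as claimed.

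I expect no serious obstacle here; the only points needing care are the bookkeeping of which variables each $G_i$ is allowed to depend on (the consistency condition) in the product construction, and the observation that adding a constant vector to an operator preserves consistency — both are straightforward, and the latter is essentially the trick already used in the proof of the Canonical Completions observation.
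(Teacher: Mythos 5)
Your proposal is correct and follows essentially the same route as the paper: restriction of rows for the first claim, the product construction $M=[M_B,M_C]$, $G(\u,\v)=G_B(\u)\+G_C(\v)$ for the lower bound, and slicing by a fixed suffix $\v_0$ with the shifted operator $G(\u,\v_0)\+M_C\v_0$ for the upper bound. The only cosmetic difference is that you sum $|L_{\v_0}|$ over all $2^{n-p}$ suffixes while the paper picks the largest slice by averaging; these are the same argument.
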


\begin{proof}
  The first claim $\opt{A'}\geq \opt{A}$ is obvious, since addition of
  new equations can only decrease the number of solutions in any
  system of equations.

  To prove $\opt{A}\leq \opt{B}\cdot 2^{n-q}$, take an optimal
  solution $L_A=\{\x\colon M\x=G(\x)\}$ for $A$; hence,
  $|L_A|=\opt{A}$.  Fix a vector $\b\in\gf^{n-p}$ for which the set
  \[
  L_B=\{\y\in \gf^p\colon (\y,\b)\in L_A\}
  \]
  is the largest one; hence, $|L_B|\geq \opt{A}/2^{n-p}$. The
  completion $M$ of $A$ has the form $M=[M',M'']$, where $M'$ is a
  completion of $B$ and $M''$ is a completion of $C$. If we define an
  operator $G':\gf^p\to\gf^m$ by
  \[
  G'(\y):=G(\y,\b)\+M''\b\,,
  \]
  then $M'\y=G'(\y)$ for all $\y\in L_B$.  Hence, $L_B$ is a solution
  for $B$, implying that $\opt{A}\leq |L_B|\cdot 2^{n-p} \leq
  \opt{B}\cdot 2^{n-p}$.

  To prove $\opt{A}\geq \opt{B}\cdot\opt{C}$, let
  $L_B=\{\y\in\gf^p\colon M'\y=G'(\y)\}$ be an optimal solution for
  $B$, and let $L_C=\{\z\in\gf^{n-p}\colon M''\z=G''(\z)\}$ be an
  optimal solution for $C$.  For any pair $\x=(\y,\z)\in L_B\times
  L_C$, we have that $M\x=G(\x)$, where $M=[M',M'']$ and
  $G(\y,\z):=G'(\y)\+G''(\z)$.  Hence, the set $L_B\times
  L_C\subseteq\gf^n$ is a solution for $A$, implying that
  $\opt{B}\cdot\opt{C}=|L_B\times L_C|\leq \opt{A}$, as claimed.
\end{proof}

Let $A$ be an $m$-by-$n$ $(0,1,\ast)$-matrix.  The min-rank conjecture
claims that the largest number $\opt{A}$ of vectors in a solution for
$A$ can be upper bounded in terms of the min-rank of $A$ as
$\opt{A}\leq 2^{n-\epsilon\cdot\minrk{A}}$.  The claim is true if the
min-rank of $A$ is ``witnessed'' by some $(0,1)$-submatrix of $A$,
that is, if $A$ contains a $(0,1)$-submatrix of rank equal to the
min-rank of~$A$. This is a direct consequence of the following simple

\begin{lem}\label{lem:star-free}
  If $A$ is an $m$-by-$n$ $(0,1,\ast)$-matrix, then $\opt{A}\leq
  2^{n-\rk{B}}$ for every $(0,1)$-submatrix $B$ of $A$.
\end{lem}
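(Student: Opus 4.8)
The plan is to pass from the $(0,1)$-submatrix $B$ to a submatrix of $A$ that contains \emph{all} columns of $A$ touched by $B$, so that the corresponding components of any solution $L$ for $A$ are forced to satisfy an honest \emph{linear} system. Concretely, let $B$ be a $(0,1)$-submatrix of $A$, sitting in row set $I$ and column set $J$ with $|J|=p$. Since $B$ has no stars, for each $i\in I$ the $i$th row of $A$ has no stars in the columns of $J$; in particular, by Remark~\ref{rem:one-star}'s underlying observation, an operator $G$ consistent with $A$ cannot let its $i$th component ($i\in I$) depend on any variable $x_j$ with $j\in J$.

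First I would reduce to the square-ish case $p=\rk{B}$: by discarding columns of $J$ we may assume the columns of $B$ are linearly independent, so $\rk{B}=p$. Now take an optimal solution $L=\{\x:M\x=G(\x)\}$ for $A$ with $M$ the canonical completion (using the Canonical Completions observation), and split coordinates as $\x=(\y,\z)$ with $\y\in\gf^{p}$ the $J$-coordinates. Restrict attention to the $|I|$ equations indexed by $I$. Writing the canonical completion's $I$-rows as $[B \mid C']$ (where the $J$-block is exactly $B$ since $B$ is star-free there), these equations read
\[
B\y \+ C'\z = G_I(\x)\,,
\]
and by the consistency remark above, each component of $G_I$ depends only on $\z$-variables (those are the only columns where $A$ has stars in rows $I$). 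Hence for each fixed value $\z=\b$ of the $\z$-coordinates, the right-hand side $G_I(\y,\b)\+C'\b$ is a \emph{constant} vector $\c_\b\in\gf^{|I|}$, so the set of admissible $\y$ is contained in the affine subspace $\{\y:B\y=\c_\b\}$, which has at most $2^{p-\rk{B}}=2^{0}=1$ element once columns are independent (in general $2^{p-\rk{B}}$ elements). Summing over the at most $2^{n-p}$ choices of $\b$ gives $|L|\le 2^{n-p}\cdot 2^{p-\rk{B}}=2^{n-\rk{B}}$, as desired.

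The main obstacle — really the only point needing care — is the bookkeeping in the splitting step: one must make sure that projecting $A$ onto the columns $J$ genuinely produces a star-free block in rows $I$ (this is exactly the hypothesis that $B$ is a $(0,1)$-submatrix), and that ``consistent with $A$'' then forbids $G_I$ from seeing the $\y$-variables. Everything else is the elementary count ``a linear system in $\y$ of rank $\rk{B}$ has at most $2^{p-\rk{B}}$ solutions for each right-hand side,'' combined with the trivial bound $2^{n-p}$ on the number of right-hand sides. Note this argument uses no minimization over completions — it bounds $\opt{A}$ directly by the (honest) rank of the star-free block $B$ — which is why, combined with $\minrk{A}\le\rk{B}$ when $B$ realizes the min-rank, it yields the promised special case $\opt{A}\le 2^{n-\minrk{A}}$ of the Min-Rank Conjecture.
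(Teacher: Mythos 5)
Your proof is correct and follows essentially the same route as the paper: the key point in both is that star-freeness of the block $B$ forces the right-hand side of the rows in $I$ to be independent of the $J$-variables, so fixing the remaining coordinates leaves a linear system $B\y=\c_\b$ with at most $2^{p-\rk{B}}$ solutions, and summing over the $2^{n-p}$ choices of $\b$ gives the bound. The paper merely packages the row-removal and column-splitting steps as Lemma~\ref{lem:split} and then observes that only constant operators are consistent with a star-free matrix, whereas you inline that splitting directly (your citation of Remark~\ref{rem:one-star} is unnecessary --- the definition of consistency already gives what you need).
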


\begin{proof}
  Let $B$ be a $p$-by-$q$ $(0,1)$-submatrix of $A$.  Since $B$ has no
  stars, only constant operators can be consistent with $B$. Hence, if
  $L\subseteq\gf^q$ is a solution for $B$, then there must be a vector
  $\b\in\gf^p$ such that $B\x=\b$ for all $\x\in L$.  This implies
  $|L|\leq 2^{q-\rk{B}}$. Together with Lemma~\ref{lem:split}, this
  yields $\opt{A}\leq 2^{q-\rk{B}}\cdot 2^{n-q}=2^{n-\rk{B}}$.
\end{proof}

The \emph{max-rank} $\maxrk{A}$ of a $(0,1,\ast)$-matrix $A$ is a
maximal possible rank of its completion. A \emph{line} of $A$ is
either its row or its column. A \emph{cover} of $A$ is a set $X$ of
its lines covering all stars.  Let $\cov{A}$ denote the smallest
possible number of lines in a cover of~$A$.

\begin{lem}\label{lem:maxrk}
  For every $m$-by-$n$ $(0,1,\ast)$-matrix $A$, we have that
  \[
  \opt{A}\leq 2^{n-\maxrk{A}+\cov{A}}\,.
  \]
\end{lem}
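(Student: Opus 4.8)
The plan is to reduce the statement to Lemma~\ref{lem:star-free} by extracting a large star-free submatrix of $A$ with the help of a minimum cover. Fix a cover $X$ of $A$ consisting of exactly $\cov{A}$ lines, and let $B$ be the submatrix of $A$ induced by the rows and columns that do \emph{not} belong to $X$. Since $X$ covers every star of $A$, the submatrix $B$ contains no stars at all; that is, $B$ is a genuine $(0,1)$-submatrix of $A$, and it has $n'\le n$ columns for some $n'$, with $n-n'$ equal to the number of columns used by $X$.

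The key step is to lower-bound $\rk{B}$ in terms of $\maxrk{A}$. Take a completion $M$ of $A$ with $\rk{M}=\maxrk{A}$. Deleting a single line (row or column) from a matrix decreases its rank by at most one, so deleting the $\cov{A}$ lines of $X$ from $M$ leaves a submatrix $M_B$ with $\rk{M_B}\ge \maxrk{A}-\cov{A}$. But $M$ agrees with $A$ in every position that is not a star of $A$, and $B$ avoids all stars of $A$; hence $M_B$ is literally the matrix $B$, and therefore $\rk{B}\ge \maxrk{A}-\cov{A}$.

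It now suffices to invoke Lemma~\ref{lem:star-free} with this $(0,1)$-submatrix $B$: it yields $\opt{A}\le 2^{n-\rk{B}}\le 2^{n-\maxrk{A}+\cov{A}}$, which is exactly the claimed inequality. (Concretely, Lemma~\ref{lem:split} lets one first delete the rows of $A$ outside $B$ without decreasing $\opt{}$, then split off the star-free columns of $B$, and finally use that only constant operators are consistent with a star-free matrix — but since Lemma~\ref{lem:star-free} already packages this, no recomputation is needed.)

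The only subtlety worth flagging is the identification $M_B=B$ in the second paragraph: one must use that $X$ is a genuine \emph{cover} of the stars, not merely an arbitrary set of $\cov{A}$ lines, so that the restriction of the max-rank completion to the uncovered lines coincides with the star-free submatrix $B$ rather than being just one of its completions. Everything else is the elementary rank/deletion inequality together with the previously established Lemma~\ref{lem:star-free}, so I expect no real obstacle beyond this bookkeeping.
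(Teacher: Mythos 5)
Your proposal is correct and follows essentially the same route as the paper: delete the lines of a minimum cover to obtain a star-free submatrix, observe that its rank is at least $\maxrk{A}-\cov{A}$, and apply Lemma~\ref{lem:star-free}. The only difference is cosmetic: you spell out the rank-deletion argument that the paper dismisses with ``clearly'' (and backs by the citation to the completion-rank formula of Cohen et al.), which is fine.
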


\begin{proof}
  Given a cover $X$ of the stars in $A$ by lines, remove all these
  lines, and let $A_X$ be the resulting $(0,1)$-submatrix of $A$.
  Clearly, we have: $\maxrk{A}\leq \rk{A_X}+|X|$. (In fact, it is
  shown in \cite{CJRW89} that $\maxrk{A}=\min_X
  \left(\rk{A_X}+|X|\right)$, where the minimum is over all covers $X$
  of $A$.) Take a cover $X$ of $A$ of size $|X|=\cov{A}$. Hence,
  $\maxrk{A}\leq \rk{A_X}+\cov{A}$.  Since $A_X$ is a
  $(0,1)$-submatrix of $A$, Lemma~\ref{lem:star-free} yields
  $\opt{A}\leq 2^{n-\rk{A_X}}$, where $\rk{A_X}\geq \maxrk{A}-|X|=
  \maxrk{A}-\cov{A}$.
\end{proof}

Given a depth-$2$ circuit $F$, let $\match{F}$ denote the largest
number of wires in a matching formed by direct input-output
wires. That is, $\match{F}$ is the largest number of $\ast$-entries in
the matrix $A_F$ of $F$, no two on the same line.  By the well-known
K\"onig--Egev\'ary theorem, stating  that the size of a largest
matching in a bipartite graph is equal to the smallest set of vertices
which together touch every edge, we have that
$\match{A}=\cov{A_F}$. This leads to the following

\begin{cor}\label{cor:match}
  Every depth-$2$ circuit $F$ computing a linear operator can be
  transformed into an equivalent linear depth-$2$ circuit $F'$ of the
  same degree and
  \[
  \width{F'}\leq \width{F}+\match{F}\,.
  \]
\end{cor}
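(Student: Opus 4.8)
The plan is to reduce Corollary~\ref{cor:match} to Lemma~\ref{lem:maxrk}, via the chain of results already established. Given a depth-$2$ circuit $F$ computing a linear operator $M\x$, we know from Lemma~\ref{lem:circ1} that $F$ can be transformed into an equivalent linear depth-$2$ circuit $F'$ of the same degree and with $\width{F'}\leq\minrk{A_F}$. So it suffices to show $\minrk{A_F}\leq\width{F}+\match{F}$. First I would invoke Lemma~\ref{lem:circ2}, which gives $\width{F}\geq n-\log_2\opt{A_F}$, i.e. $\log_2\opt{A_F}\geq n-\width{F}$. Combining this with the bound of Lemma~\ref{lem:maxrk}, namely $\opt{A_F}\leq 2^{n-\maxrk{A_F}+\cov{A_F}}$, yields $n-\width{F}\leq n-\maxrk{A_F}+\cov{A_F}$, hence $\maxrk{A_F}\leq\width{F}+\cov{A_F}$.

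Next I would use the observation, recorded just before the corollary, that by the K\"onig--Egev\'ary theorem $\match{F}=\cov{A_F}$ (the set of $\ast$-entries of $A_F$ with no two on a common line is exactly a matching in the bipartite graph of direct input-output wires, and the minimum line cover equals the maximum matching). Therefore $\maxrk{A_F}\leq\width{F}+\match{F}$. Since trivially $\minrk{A_F}\leq\maxrk{A_F}$, we get $\minrk{A_F}\leq\width{F}+\match{F}$, and plugging this into the width bound from Lemma~\ref{lem:circ1} gives $\width{F'}\leq\width{F}+\match{F}$, as claimed. The degree is preserved because Lemma~\ref{lem:circ1} preserves it.

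The only place requiring a little care is the direction of the inequality between $\minrk{A_F}$ and $\maxrk{A_F}$: we need the \emph{upper} bound on $\minrk{A_F}$ that comes from the trivial fact $\minrk{A_F}\leq\maxrk{A_F}$, and it is the \emph{upper} bound on $\maxrk{A_F}$ from Lemma~\ref{lem:maxrk} (through $\opt{A_F}$) that we chain against. So the argument goes: width of $F$ lower-bounds $n-\log_2\opt{A_F}$; Lemma~\ref{lem:maxrk} lower-bounds $\opt{A_F}$ in terms of $n-\maxrk{A_F}+\cov{A_F}$; hence $\width{F}\geq\maxrk{A_F}-\cov{A_F}\geq\minrk{A_F}-\match{F}$; finally Lemma~\ref{lem:circ1} converts the bound $\minrk{A_F}\leq\width{F}+\match{F}$ into the desired statement about $F'$. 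I do not anticipate a genuine obstacle here — the content is entirely in the earlier lemmas, and the corollary is a bookkeeping combination of Lemma~\ref{lem:circ1}, Lemma~\ref{lem:circ2}, Lemma~\ref{lem:maxrk}, and the K\"onig--Egev\'ary identity $\match{F}=\cov{A_F}$.
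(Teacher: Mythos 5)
Your proposal is correct and follows essentially the same route as the paper's own proof: combine Lemma~\ref{lem:circ2} with Lemma~\ref{lem:maxrk} to get $\maxrk{A_F}-\cov{A_F}\leq\width{F}$, use K\"onig--Egev\'ary to replace $\cov{A_F}$ by $\match{F}$, and feed $\minrk{A_F}\leq\maxrk{A_F}\leq\width{F}+\match{F}$ into Lemma~\ref{lem:circ1}. The only blemish is a verbal slip in your recap where you say Lemma~\ref{lem:maxrk} ``lower-bounds $\opt{A_F}$'' (it upper-bounds it), but the inequalities you actually chain are the right ones.
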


\begin{proof}
  Let $A_F$ be the $(0,1,\ast)$-matrix of $F$. By
  Lemmas~\ref{lem:circ2} and \ref{lem:maxrk}, we have that
  \begin{align*}
  \width{F}&\geq n-\log_2\opt{A_F}\geq
  n-\left[n-\maxrk{A_F}+\cov{A_F}\right]\\
  & = \maxrk{A_F}-\cov{A_F} =
  \maxrk{A_F}-\match{F}\,.
  \end{align*}
  By Lemma~\ref{lem:circ1}, the circuit $F$ can be transformed into an
  equivalent linear depth-$2$ circuit of the same degree and width at
  most $\minrk{A_F}\leq \maxrk{A_F}\leq \width{F}+\match{F}$.
\end{proof}

\section{Row and column min-rank}
\label{sec:row-col}

We are now going to show that the min-rank conjecture holds for
stronger versions of min-rank---row min-rank and column min-rank.

If $A$ is a $(0,1,\ast)$-matrix of min-rank $r$ then, for every
assignment of constants to stars, the resulting $(0,1)$-matrix will
have $r$ linearly independent columns as well as $r$ linearly
independent rows.  However, for different assignments these
columns/rows may be different.  It is natural to ask whether the
min-rank conjecture is true if the matrix $A$ has $r$ columns (or $r$
rows) that remain linearly independent under any assignment of
constants to stars?

Namely, say that $(0,1,\ast)$-vectors are \emph{dependent} if they can
be made linearly dependent over $GF_2$ by setting their $\ast$-entries to
a constants $0$ and $1$; otherwise, the vectors are \emph{independent}.

\begin{rem}\rm
  The dependence of $(0,1,\ast)$-vectors can be defined by adding to
  $\gf$ a new element $\ast$ satisfying
  $\alpha\+\ast=\ast\+\alpha=\ast$ for $\alpha\in\{0,1,\ast\}$.  Then
  a set of $(0,1,\ast)$-vectors is dependent iff some its subset sums
  up to a $(0,\ast)$-vector. Indeed, if \emph{some} subset sums up to
a  $(0,\ast)$-vector, then we can set the $\ast$-entries to constants so that
the corresponding subset of $(0,1)$-vectors will sum up (over $GF_2$) to
an all-$0$ vector. On the other hand, if \emph{no} subset sums up to a
 $(0,\ast)$-vector, for every subset, there must be a position in which
all vectors in this subset have no stars, and the sum of these positions
over $GF_2$ is~$1$.
\end{rem}

\begin{rem}\rm
  A basic fact of Linear Algebra, leading to the Gauss-Algorithm, is
  that linear independence of vectors $\x,\y\in\gf^n$ implies that the
  vectors $\x+\y$ and $\y$ are linear independent as well. For
  $(0,1,\ast$)-vectors this does not hold anymore. Take, for example,
  $\x=(0,1)$ and $\y=(1,\ast)$.  Then $\x\+\y=(1,\ast)=\y$.
\end{rem}

For a $(0,1,\ast)$-matrix $A$, define its \emph{column min-rank},
$\wminrkcol{A}$, as the maximum number of independent columns, and its
\emph{row min-rank}, $\wminrkrow{A}$, as the maximum number of
independent rows. In particular, both $\wminrkrow{A}$ and
$\wminrkcol{A}$ are at least $r$ if $A$ contains an $r\times r$
``triangular'' submatrix, that is, a submatrix with zeroes below (or
above) the diagonal and ones on the diagonal:
\[
\Delta=
\begin{pmatrix}
  1 & \circledast & \circledast  & \circledast\\
  0 &     1       &   \circledast     &   \circledast         \\
  0 &     0       &       1   &    \circledast        \\
  0 & 0 & 0 & 1
\end{pmatrix}\,,
\]
where $\circledast\in\{0,1,\ast\}$.  It is clear that neither
$\wminrkcol{A}$ nor $\wminrkrow{A}$ can exceed the min-rank of
$A$. Later (Lemma~\ref{lem:gap} below) we will give an example of a
matrix $A$ where both $\wminrkcol{A}$ and $\wminrkrow{A}$ are by a
logarithmic factor smaller than $\minrk{A}$. The question about a more
precise relation between these parameters remains open (see
Problem~\ref{probl:gap}).

Albeit for $(0,1)$-matrices we always have that their row-rank
coincides with column-rank, for $(0,1,\ast)$-matrices this is no more
true.  In particular, for some $(0,1,\ast)$-matrices $A$, we have that
$\wminrkrow{A}\neq \wminrkcol{A}$.

\begin{ex}\rm
  Consider the following $(0,1,\ast)$-matrix:
  \[
  A=
  \begin{pmatrix}
    1 & 1 & * & 1 \\
    1 & 0 & 1 & *\\
    1 & * & 0 & 0
  \end{pmatrix}\,.
  \]
  Then $\wminrkrow{A}=\minrk{A}=3$ but $\wminrkcol{A}=2$.  To see that
  $\wminrkrow{A}=3$, just observe that the rows cannot be made
  linearly dependent by setting the stars to $0$ or $1$: the sum of
  all three vectors is not a $\{0,\ast\}$-vector because of the $1$st
  column, and the pairwise sums are not $\{0,\ast\}$-vectors because,
  for each pair of rows there is a column containing $0$ and $1$.  To
  see that $\wminrkcol{A}=2$, observe that the last three columns are
  dependent (each row has a star). Moreover, for every pair of these
  columns, there is an assignment of constants to stars such that
  either the resulting $(0,1)$-columns are equal or their sum equals
  the first column.
\end{ex}

We first show that the min-rank conjecture holds with ``min-rank''
replaced by ``column min-rank''.

\begin{thm}[Column min-rank]\label{thm:weak-col}
  Let $A$ be a $(0,1,\ast)$-matrix with $n$ columns and of column
  min-rank $r$. Then $\opt{A}\leq 2^{n-r}$.
\end{thm}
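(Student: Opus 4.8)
The plan is to reduce to the star-free case (Lemma~\ref{lem:star-free}) by a projection/averaging argument on the columns. Let $C$ be a set of $r$ independent columns of $A$, and think of $A$ as $[B,D]$ after permuting columns so that $B$ consists of the $r$ columns in $C$ and $D$ consists of the remaining $n-r$ columns. By Lemma~\ref{lem:split}, $\opt{A}\le \opt{B}\cdot 2^{n-r}$, so it suffices to show $\opt{B}\le 1$, i.e.\ that any solution for the $m$-by-$r$ matrix $B$ has at most one vector. So from now on I work with $B$, whose $r$ columns are independent in the sense of the excerpt.

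First I would record what ``independent columns'' buys us: by the Remark characterizing dependence, no nonempty subset of the columns of $B$ sums (in the extended arithmetic with the absorbing $\ast$) to a $(0,\ast)$-vector. Equivalently, for every nonzero $\z\in\gf^r$, the ``combination'' $\sum_{j:z_j=1}\text{(column $j$ of $B$)}$ has, in some row $i$, all its summed entries in $\{0,1\}$ (no $\ast$), with the mod-$2$ sum equal to $1$. In matrix language: for every nonzero $\z\in\gf^r$ there is a row $i$ such that (a) $a_{ij}\ne\ast$ for all $j$ with $z_j=1$, and (b) $\sum_{j:z_j=1} a_{ij}\equiv 1\pmod 2$.

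Now suppose $L$ is a solution for $B$, witnessed by a completion $M$ (canonical, by the Observation) and an operator $G=(g_1,\dots,g_m)$ consistent with $B$, so $M\x=G(\x)$ for all $\x\in L$. Take any two vectors $\x,\y\in L$ and set $\z=\x\+\y$. If $\z\ne\nulis$, pick a row $i$ as above for this $\z$. Because $g_i$ depends only on the coordinates $j$ with $a_{ij}=\ast$, and $\z$ is supported (within those coordinates where it could matter) exactly where we need it, the key point is that on the coordinates $j$ where $z_j=1$ we have $a_{ij}\ne\ast$, so $g_i$ does \emph{not} depend on any coordinate on which $\x$ and $\y$ differ except possibly outside the support of $\z$ --- hence $g_i(\x)=g_i(\y)$. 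On the other hand $\skl{\m_i}{\x}\+\skl{\m_i}{\y}=\skl{\m_i}{\z}=\sum_{j:z_j=1} m_{ij}=\sum_{j:z_j=1} a_{ij}\equiv 1\pmod 2$, using that $M$ is the canonical completion so $m_{ij}=a_{ij}$ whenever $a_{ij}\ne\ast$. Thus $\skl{\m_i}{\x}\ne\skl{\m_i}{\y}$ while $g_i(\x)=g_i(\y)$, contradicting $M\x=G(\x)=G(\y)=M\y$ on row $i$. Hence $\x=\y$, so $|L|\le 1$, giving $\opt{B}\le 1$ and $\opt{A}\le 2^{n-r}$.

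The step I expect to be the only real subtlety is the claim ``$g_i(\x)=g_i(\y)$'': I need that $g_i$ reads no coordinate where $\x$ and $\y$ disagree, but independence only guarantees $a_{ij}\ne\ast$ for $j$ in the support of $\z=\x\+\y$ \emph{restricted to the chosen combination} --- and the chosen combination is exactly the set $\{j:z_j=1\}$, which is the full support of $\z$. So in fact $a_{ij}\ne\ast$ for \emph{every} $j$ with $x_j\ne y_j$, which is precisely what makes $g_i$ (a function of the $\ast$-columns of row $i$) agree on $\x$ and $\y$. I would double-check this alignment carefully, since it is the heart of the argument; everything else is the splitting lemma plus arithmetic over $GF_2$.
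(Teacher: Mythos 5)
Your proof is correct, and while it follows the paper's first step — permuting columns and applying Lemma~\ref{lem:split} to reduce everything to showing that an $m$-by-$r$ matrix $B$ with $r$ independent columns has $\opt{B}\leq 1$ — it settles that key point by a genuinely different and more elementary argument. The paper observes that such a submatrix has $\minrk{B}=r$, and given two distinct vectors $\x,\y$ in a solution it builds a pair-specific completion $M$ row by row (taking $\p_i=\nulis$ when $\skl{\a_i}{\x}=\skl{\a_i}{\y}$, and $\p_i=\e_j$ for a star position $j$ where $\x,\y$ differ otherwise, such a $j$ existing because $g_i(\x)\neq g_i(\y)$) so that $M\x=M\y$; since every completion of $B$ has full column rank, $M\z=\c$ has at most one solution, a contradiction. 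You instead stay with the canonical completion and invoke the Remark's characterization of independence: for $\z=\x\+\y\neq\nulis$ there is a row $i$ that is star-free on the support of $\z$ and whose entries there sum to $1$, so $g_i(\x)=g_i(\y)$ while $\skl{\m_i}{\x}\neq\skl{\m_i}{\y}$ — a contradiction in a single row, with no rank argument and no constructed completion. The two mechanisms are in a sense dual: you find a row on whose star positions $\x$ and $\y$ agree, the paper exploits rows where they must differ in a star position; your version makes the combinatorial meaning of column independence explicit, while the paper's highlights that the submatrix has min-rank $r$, the quantity appearing in the conjecture. The subtlety you flagged is indeed sound: every coordinate where $\x$ and $\y$ disagree lies in the support of $\z$, hence is star-free in row $i$, so $g_i$, which reads only star positions of that row, takes the same value on $\x$ and $\y$.
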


\begin{proof}
  Any $m$-by-$n$ $(0,1,\ast)$-matrix $B$ of column min-rank $r$ must
  contain an $m\times r$ submatrix $A$ of min-rank $r$. Since
  $\opt{B}\leq \opt{A}\cdot 2^{n-r}$ (Lemma~\ref{lem:split}), it is
  enough to show that $\opt{A}\leq 1$ for all $m$-by-$r$
  $(0,1,\ast)$-matrices $A$ of min-rank $r$.

  To do this, let $L$ be a solution for $A$. Then there is an operator
  $G=(g_1,\ldots,g_m):\gf^r\to\gf^m$ such that $G$ is consistent with
  $A$ and $\skl{\a_i}{\x}=g_i(\x)$ holds for all $\x\in L$ and all
  $i=1,\ldots,m$.  Here $\a_1,\ldots,\a_m$ are the rows of $A$ with
  all stars set to $0$.

  For the sake of contradiction, assume that $|L|\geq 2$ and fix any
  two vectors $\x\neq\y\in L$.  Our goal is to construct a vector
  $\c\in\gf^m$ and a completion $M$ of $A$ such that $M\x=M\y=\c$.
  Since $M$ must have rank $r$, this will give the desired
  contradiction, because at most $2^{r-\rk{M}}=2^0=1$ vectors $\z$ can
  satisfy $M\z=\c$.

  If $M$ is a completion of $A=(a_{ij})$, then its $i$th row must have
  the form $\m_i=\a_i\+\p_i$ where $\p_i\in\gf^n$ is some vector with
  no $1$'s in positions where the $i$th row of $A$ has no stars. To
  construct the desired vector $\p_i$ for each $i\in[m]$, we consider
  two possible cases.  (Recall that the vectors $\x$ and $\y$ are
  fixed.)

  {\bf Case 1}: $\skl{\a_i}{\x}=\skl{\a_i}{\y}$.  In this case we can
  take $\p_i=\nulis$ and $c_i=\skl{\a_i}{\x}$.  Then
  $\skl{\m_i}{\x}=\skl{\m_i}{\y}=\skl{\a_i}{\x}=c_i$, as desired.

  {\bf Case 2}: $\skl{\a_i}{\x}\neq \skl{\a_i}{\y}$.  In this case we
  have that $g_i(\x)\neq g_i(\y)$, that is, the vectors $\x$ and $\y$
  must differ in some position $j$ where the $i$th row of $A$ has a
  star. Then we can take $\p_i:=\e_j$ (the $j$th unit vector) and
  $c_i:=\skl{\a_i}{\x}\+ x_j$. With this choice of $\p_i$, we again
  have
  \[
  \skl{\m_i}{\x}=\skl{\a_i}{\x}\+ \skl{\p_i}{\x} =\skl{\a_i}{\x}\+
  \skl{\e_j}{\x}= \skl{\a_i}{\x}\+x_j=c_i
  \]
  and, since $\skl{\a_i}{\x}\neq \skl{\a_i}{\y}$ and $x_j\neq y_j$,
  \[
  \skl{\m_i}{\y}=\skl{\a_i}{\y}\+ \skl{\p_i}{\y} =\skl{\a_i}{\y}\+
  \skl{\e_j}{\y} =\skl{\a_i}{\x}\+x_j=c_i\,. \qedhere
  \]
\end{proof}

\begin{ex}\rm
  It is not difficult to verify that, for the $(0,1,\ast)$-matrix $A$
  given by (\ref{eq:matrix}), we have that
  $\wminrkcol{A}=\minrk{A}=2$. Hence, no linear solution of the system
  of semi-linear equations (\ref{eq:system}) can have more than
  $\lin{A}=2^{6-2}=32$ vectors. Theorem~\ref{thm:weak-col} implies
  that, in fact, \emph{no} solution can have more than this number of
  vectors.
\end{ex}

The situation with \emph{row} min-rank is more complicated.  In this
case we are only able to prove an upper bound $\opt{A}\leq 2^{n-r}$
under an additional restriction that the star-positions in the rows of
$A$ form a chain under set-inclusion.

Recall that $(0,1,\ast)$-vectors are \emph{independent} if they cannot
be made linearly dependent over $GF_2$ by setting stars to
constants. The row min-rank of a $(0,1,\ast)$-matrix is the largest
number $r$ of its independent rows.  Since adding new rows can only
decrease $\opt{A}$, it is enough to consider $r$-by-$n$
$(0,1,\ast)$-matrices $A$ with $\minrk{A}=r$.

If $r=1$, that is, if $A$ consists of just one row, then $\opt{A}\leq
2^{n-1}=2^{n-r}$ holds. Indeed, since $\minrk{A}=1$, this row cannot
be a $(0,\ast)$-row. So, there must be at least one $1$ in, say, the
$1$st position.  Let $L_A=\{\x\colon \skl{\a_1}{\x}=g_1(\x)\}$ be a
solution for $A$, where $\a_1$ is the row of $A$ with all stars set
to~$0$.  Take the unit vector $\e_1=(1,0,\ldots,0)$ and split the
vectors in $\gf^n$ into $2^{n-1}$ pairs $\{\x,\x\+\e_1\}$.  Since the
boolean function $g_1$ cannot depend on the first variable $x_1$, we
have that $g_1(\x\+\e_1)=g_1(\x)$. But
$\skl{\a_i}{\x\+\e_1}=\skl{\a_i}{\x}\+1\neq \skl{\a_i}{\x}$.  Hence,
at most one of the two vectors $\x$ and $\x\+\e_1$ from each pair
$\{\x,\x\+\e_1\}$ can lie in $L_A$, implying that $|L_A|\leq 2^{n-1}$.

To extend this argument for matrices with more rows, we need the
following definition.  Let $A=(a_{ij})$ be an $r$-by-$n$
$(0,1,\ast)$-matrix, and $\a_1,\ldots,\a_r$ be the rows of $A$ with
all stars set to $0$.  Let $S_i=\{j\colon a_{ij}=\ast\}$ be the set of
star-positions in the $i$th row of~$A$.  It will be convenient to
describe the star-positions by diagonal matrices.  Namely, let $D_i$
be the incidence matrix of stars in the $i$th row of $A$. That is,
$D_i$ is a diagonal $n$-by-$n$ $(0,1)$-matrix whose $j$th diagonal
entry is $1$ iff $j\in S_i$. In particular, $D_i\x=\nulis$ means that
$x_j=0$ for all $j\in S_i$.

\begin{dfn}\label{dfn:isol}\rm
  A matrix $A$ is \emph{isolated} if there exist vectors
  $\z_1,\ldots,\z_r\in\gf^n$ such that, for all $1\leq i\leq r$, we
  have $D_i\z_i=\nulis$ and
  \[
  \skl{\a_j}{\z_i}=\begin{cases}
    1 & \mbox{ if $j=i$;}\\
    0 & \mbox{ if $j<i$.}
  \end{cases}
  \]
  If $D_1\z_i=\ldots=D_i\z_i=\nulis$, then the matrix is
  \emph{strongly isolated}.
\end{dfn}

\begin{lem}\label{lem:isol}
  If $A$ is a strongly isolated $r$-by-$n$ $(0,1,\ast)$-matrix, then
  $\opt{A}\leq 2^{n-r}$.
\end{lem}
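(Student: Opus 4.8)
The plan is to generalize the $r=1$ argument given just before Definition~\ref{dfn:isol}, where we split $\gf^n$ into pairs $\{\x,\x\+\e_1\}$ and noted that at most one element of each pair lies in the solution. For general $r$, the natural replacement for $\e_1$ is the tuple of ``shift vectors'' $\z_1,\ldots,\z_r$ supplied by the strong isolation hypothesis. First I would fix an arbitrary solution $L=\{\x\colon \skl{\a_i}{\x}=g_i(\x)\text{ for all }i\}$ for $A$, where $M$ is the canonical completion and $G=(g_1,\ldots,g_r)$ is consistent with $A$. The goal is to exhibit an injective map from $L$ into a set of size $2^{n-r}$, or equivalently to partition $\gf^n$ into $2^{n-r}$ cosets of the subgroup $W:=\spnm{\z_1,\ldots,\z_r}$ generated by the shift vectors and argue that each coset meets $L$ in at most one point — provided we first check that $\z_1,\ldots,\z_r$ are linearly independent, so that $|W|=2^r$ and there are exactly $2^{n-r}$ cosets.

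The heart of the argument is the following claim: for every $\x\in\gf^n$ and every nonzero $\w=\sum_{i\in T}\z_i\in W$ (with $\emptyset\neq T\subseteq[r]$), we have $\x\in L \implies \x\+\w\notin L$. To prove it, let $k=\max T$. Because $A$ is \emph{strongly} isolated, $D_i\z_i=\nulis$ for all $i\le k$ in the relevant sense; more to the point, for the coordinate $i=k$ we have $D_k\z_j=\nulis$ for every $j\le k$ by the strong isolation condition (this is exactly where strong, rather than plain, isolation is used). Hence $D_k\w=\nulis$, which means $\x$ and $\x\+\w$ agree on all star-positions of row~$k$, so $g_k(\x\+\w)=g_k(\x)$. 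On the other hand,
\[
\skl{\a_k}{\x\+\w}=\skl{\a_k}{\x}\+\sum_{j\in T}\skl{\a_k}{\z_j}
=\skl{\a_k}{\x}\+\skl{\a_k}{\z_k}=\skl{\a_k}{\x}\+1,
\]
where the middle sum collapses because $\skl{\a_k}{\z_j}=0$ for $j<k$ by Definition~\ref{dfn:isol}. So if $\x\in L$ then $g_k(\x\+\w)=g_k(\x)=\skl{\a_k}{\x}\neq\skl{\a_k}{\x\+\w}$, whence $\x\+\w\notin L$. In particular, taking $\x=\nulis$ and $\x\+\w$ both candidates shows no two distinct elements of $W$ can simultaneously... more precisely, applying the claim with $\w\neq\nulis$ also yields the linear independence of $\z_1,\ldots,\z_r$: if $\sum_{i\in T}\z_i=\nulis$ for some nonempty $T$ with maximum $k$, the computation above would give $0=\skl{\a_k}{\nulis}=1$, a contradiction. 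Thus $|W|=2^r$.

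Finally I would assemble these facts: $W$ is a subgroup of $(\gf^n,\+)$ of order $2^r$, so $\gf^n$ splits into $2^{n-r}$ cosets $\x\+W$; within any such coset, if two distinct elements $\x\+\w$ and $\x\+\w'$ both lay in $L$, then setting $\w''=\w\+\w'\neq\nulis$ we would have $\x\+\w\in L$ and $(\x\+\w)\+\w''\in L$, contradicting the claim. Hence $|L\cap(\x\+W)|\le 1$ for every coset, giving $|L|\le 2^{n-r}$ and therefore $\opt{A}\le 2^{n-r}$. The step I expect to be the main obstacle — or at least the one requiring the most care — is verifying that $D_k\w=\nulis$ uses precisely the strong isolation hypothesis and not merely isolation: with plain isolation we only control $D_i\z_i$, not $D_k\z_j$ for $j<k$, so the cancellation $g_k(\x\+\w)=g_k(\x)$ could fail; this is exactly why the weaker notion needs the chain condition on star-sets, handled separately in Theorem~\ref{thm:weak-row}.
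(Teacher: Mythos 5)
Your overall strategy (cosets of the span $W$ of the isolating vectors, each meeting $L$ at most once) is sound, but the key claim fails as you have written it because you read both parts of Definition~\ref{dfn:isol} in the wrong direction. You take $k=\max T$ and assert (i) $D_k\z_j=\nulis$ for all $j\le k$ ``by strong isolation'' and (ii) $\skl{\a_k}{\z_j}=0$ for $j<k$ ``by Definition~\ref{dfn:isol}''. Neither is what the definitions give: strong isolation says $D_1\z_i=\cdots=D_i\z_i=\nulis$, i.e.\ $D_j\z_i=\nulis$ only for row indices $j\le i$, so it controls $D_k\z_j$ when $k\le j$, not when $k>j$; likewise the isolation condition gives $\skl{\a_j}{\z_i}=0$ only for $j<i$ (row index smaller than vector index), so $\skl{\a_k}{\z_j}$ with $j<k$ is completely unconstrained. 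With $k=\max T$ your sum $\skl{\a_k}{\w}$ need not collapse to $1$ and $\x$ and $\x\+\w$ need not agree on the stars of row $k$, so the claim $\x\in L\Rightarrow\x\+\w\notin L$ does not follow.

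The repair is local: take $k=\min T$ instead. Then for every $j\in T$ we have $j\ge k$, so strong isolation yields $D_k\z_j=\nulis$ (hence $D_k\w=\nulis$ and $g_k(\x\+\w)=g_k(\x)$), and the isolation condition yields $\skl{\a_k}{\z_j}=0$ for $j>k$ and $\skl{\a_k}{\z_k}=1$, so $\skl{\a_k}{\w}=1$; the same computation also gives the linear independence of $\z_1,\ldots,\z_r$ exactly as you intended. With this fix your argument is correct, and it is a genuinely different packaging from the paper's: the paper proceeds by induction on $r$, using only the single vector $\z_r$ to pair points of the solution of the first $r-1$ rows and halve the count at each step, whereas your version uses all $r$ vectors at once and gets the bound $|L|\le 2^{n-r}$ in one pigeonhole step over the $2^{n-r}$ cosets of $W$. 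The two are essentially an induction and its unrolled form; yours avoids the inductive bookkeeping, the paper's avoids having to establish linear independence of the $\z_i$ explicitly.
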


\begin{proof}
  Let $\a_1,\ldots,\a_r$ be the rows of $A$ with all stars set to
  $0$. We prove the lemma by induction on $r$.  The basis case $r=1$
  is already proved above. For the induction step $r-1\mapsto r$, let
  \[
  L_A=\{\x\in\gf^n\colon \skl{\a_i}{\x}=g_i(\x) \mbox{ for all }
  i=1,\ldots,r\}
  \]
  be an optimal solution for $A$, and let $B$ be a submatrix of $A$
  consisting of its first $r-1$ rows. Then
  \[
  L_B=\{\x\in\gf^n\colon \skl{\a_i}{\x}=g_i(\x) \mbox{ for all }
  i=1,\ldots,r-1\}
  \]
  is a solution for~$B$.  Since $A$ is strongly isolated, the matrix
  $B$ is strongly isolated as well.  The induction hypothesis implies
  that $|L_B|\leq 2^{n-(r-1)}$.

  Let $\z=\z_r$ be the $r$-th isolating vector.  For each row
  $i=1,\ldots,r-1$, the conditions $\skl{\z}{\a_i}=0$ and
  $D_i\z=\nulis$ imply that $\skl{(\x\+\z)}{\a_i}=\skl{\x}{\a_i}$ and
  $g_i(\x\+\z)=g_i(\x)$.  That is,
  \[
  \mbox{$\x\in L_B$ iff $\x\+\z\in L_B$.}
  \]
  For the $r$th row, the conditions $\skl{\z}{\a_r}=1$ and
  $D_r\z=\nulis$ imply that $\skl{(\x\+\z)}{\a_r}\neq \skl{\x}{\a_r}$
  whereas $g_r(\x\+\z)=g_r(\x)$.  That is,
  \[
  \mbox{$\x\in L_A$ iff $\x\+\z\not\in L_A$.}
  \]
  Hence, for every vector $\x\in L_B$, only one of the vectors $\x$
  and $\x\+\z$ can belong to $L_A$, implying that
\[
\opt{A}=|L_A|\leq |L_B|/2\leq 2^{n-r}\,. \qedhere
\]
\end{proof}

We are now going to show that $(0,1,\ast)$-matrices with some
conditions on the distribution of stars in them are strongly isolated.
For this, we need the following two facts.  A \emph{projection} of a
vector $\x=(x_1,\ldots,x_n)$ onto a set of positions
$I=\{i_1,\ldots,i_k\}$ is the vector
\[
\proj{\x}{I}=(x_{i_1},\ldots,x_{i_k})\,.
\]
A $(0,1,\ast)$-vector $\x$ is \emph{independent of}
$(0,1,\ast)$-vectors $\y_1,\ldots,\y_k$ if no completion of $\x$ can
be written as a linear combination of some completions of these
vectors.

\begin{lem}\label{lem:proj}
  Let $\x,\y_1,\ldots,\y_k$ be $(0,1,\ast)$-vectors, and $I=\{i\colon
  x_i\neq \ast\}$.  If $\x$ is independent of $\y_1,\ldots,\y_k$, then
  $\proj{\x}{I}$ is also independent of
  $\proj{\y_1}{I},\ldots,\proj{\y_k}{I}$.
\end{lem}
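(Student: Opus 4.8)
The plan is to prove the contrapositive: assume $\proj{\x}{I}$ is \emph{not} independent of $\proj{\y_1}{I},\ldots,\proj{\y_k}{I}$, and produce from this a completion of $\x$ that is a linear combination of completions of the $\y_i$, contradicting the hypothesis that $\x$ is independent of $\y_1,\ldots,\y_k$. So fix completions $u$ of $\proj{\x}{I}$ and $v_1,\ldots,v_k$ of $\proj{\y_1}{I},\ldots,\proj{\y_k}{I}$ together with scalars $\lambda_1,\ldots,\lambda_k\in\gf$ (over $GF_2$ this is just a choice of a subset $S\subseteq[k]$) such that $u=\sum_{i} \lambda_i v_i$ in the coordinates indexed by $I$. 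Note that on the coordinates in $I$ the vector $\x$ has no stars, so $\proj{\x}{I}$ is its own unique completion there and $u=\proj{\x}{I}$; the content of the assumption is really that $\proj{\x}{I}=\sum_i\lambda_i\,\proj{v_i}{I}$ for some completions $v_i$ of $\proj{\y_i}{I}$.

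The key step is to extend this identity from the coordinates in $I$ to all $n$ coordinates. For a coordinate $j\notin I$ we have $x_j=\ast$, so in building a completion $M$ of $\x$ we are free to assign $M_j$ to whatever value $\sum_i \lambda_i w_{i,j}$ takes, where $w_i$ is the completion of $\y_i$ we are constructing. Concretely, for each $i$ choose a completion $w_i$ of $\y_i$ whose restriction to $I$ equals $v_i$ (possible, since $v_i$ is a completion of $\proj{\y_i}{I}$ and the remaining coordinates of $\y_i$ can be filled arbitrarily, say by setting the remaining stars to $0$); then define $M\in\gf^n$ by $M:=\sum_{i}\lambda_i w_i$ over $GF_2$. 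On coordinates in $I$ we have $M_j=\sum_i\lambda_i v_{i,j}=x_j$, so $M$ agrees with $\x$ there; on coordinates $j\notin I$, $\x$ has a star, so any value of $M_j$ is consistent with $\x$. Hence $M$ is a completion of $\x$, and by construction $M$ is a $GF_2$-linear combination of the completions $w_i$ of $\y_i$. This contradicts the independence of $\x$ from $\y_1,\ldots,\y_k$, completing the proof.

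The only thing that needs a little care — and what I would flag as the main (minor) obstacle — is the direction of the extension argument: it is crucial that we go from a dependence on $I$ to a dependence on all of $[n]$, and not the other way. This works precisely because $I$ is the set of \emph{non-star} positions of $\x$: outside $I$ the vector $\x$ imposes no constraint, so a linear combination of completions of the $\y_i$ that matches $\x$ on $I$ can always be completed to match $\x$ everywhere simply by letting the out-of-$I$ coordinates fall where they may. Conversely, restricting a global dependence to $I$ is automatic. There are no nontrivial calculations; the argument is entirely a matter of bookkeeping which coordinates are forced and which are free.
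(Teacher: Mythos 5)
Your proposal is correct and follows essentially the same route as the paper: argue by contraposition, use that $\proj{\x}{I}$ is star-free so the assumed dependence fixes it exactly on $I$, extend the chosen completions of the projections to full completions of the $\y_i$, and use the stars of $\x$ outside $I$ to absorb the remaining coordinates of the linear combination, yielding a completion of $\x$ that is dependent on completions of the $\y_i$. Your write-up is only slightly more explicit than the paper's (e.g.\ in filling the out-of-$I$ stars of the $\y_i$ by $0$), but the idea is identical.
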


\begin{proof}
  Assume that $\proj{\x}{I}$ is dependent on the projections
  $\proj{\y_1}{I},\ldots,\proj{\y_k}{I}$.  Then there is an assignment
  of stars to constants in the vectors $\y_i$ such that $\proj{\x}{I}$
  can be written as a linear combination of the projections
  $\proj{\y_1'}{I},\ldots,\proj{\y_k'}{I}$ on $I$ of the resulting
  $(0,1)$-vectors $\y_1',\ldots,\y_k'$. But since $\x$ has stars in
  all positions outside $I$, these stars can be set to appropriate
  constants so that the resulting $(0,1)$-vector $\x'$ will be a
  linear combination of $\y_1',\ldots,\y_k'$, a contradiction.
\end{proof}

\begin{lem}\label{lem:lin}
  Let $\a\in\gf^n$ be a vector and $M$ be an $m$-by-$n$ $(0,1)$-matrix
  of rank $r\leq n-1$. If $\a$ is linearly independent of the rows of
  $M$, then there exists a set $Z\subseteq \gf^n$ of $|Z|\geq
  2^{n-r-1}$ vectors such that, for all $\z\in Z$, we have
  $\skl{\z}{\a}=1$ and $M\z=\nulis$.
\end{lem}

\begin{proof}
  Let $Z=\{\z\colon M\z=\nulis, \skl{\a}{\z}=1\}$, and let $M'$ be the
  matrix $M$ with an additional row $\a$.  Note that
  $Z=\ker{M}\setminus \ker{M'}$, where $\ker{M}=\{\z\colon
  M\z=\nulis\}$ is the kernel of $M$.  Since $\rk{M'}=\rk{M}+1\leq n$,
  we have that $|\ker{M'}|=|\ker{M}|/2$, implying that
  \[
  |Z|=|\ker{M}\setminus \ker{M'}|=|\ker{M}|/2\geq
  2^{n-r-1}\,. \qedhere
  \]
\end{proof}

\begin{lem}\label{lem:almost}
  If $A$ is an $r$-by-$n$ $(0,1,\ast)$-matrix with $\minrk{A}=r$, then
  $A$ is isolated.
\end{lem}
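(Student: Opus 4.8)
The plan is to construct the isolating vectors $\z_1,\ldots,\z_r$ one at a time, in order $i=1,2,\ldots,r$, and for each $i$ to use Lemma~\ref{lem:lin} to find $\z_i$ with the required orthogonality pattern. Recall we need $D_i\z_i=\nulis$ (so $\z_i$ vanishes on the star-positions $S_i$ of row $i$) together with $\skl{\a_i}{\z_i}=1$ and $\skl{\a_j}{\z_i}=0$ for all $j<i$. The natural move is to fix $i$ and work inside the coordinate subspace indexed by the non-star positions $I_i=[n]\setminus S_i=\{j:a_{ij}\neq\ast\}$ of row $i$. Since we demand $D_i\z_i=\nulis$, the vector $\z_i$ is really a vector in $\gf^{I_i}$ (padded with zeroes on $S_i$), and for such $\z_i$ the scalar products $\skl{\a_j}{\z_i}$ depend only on the projections $\proj{\a_j}{I_i}$.

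First I would observe that, because $\minrk{A}=r$ and $A$ has exactly $r$ rows, the rows of $A$ are independent in the sense of Section~\ref{sec:row-col}: no completion makes them linearly dependent. In particular, for the fixed index $i$, the $i$th row $A_i$ (as a $(0,1,\ast)$-vector) is independent of the other rows, hence certainly independent of $A_1,\ldots,A_{i-1}$. Next, note that $A_i$ has stars exactly on $S_i$, so $I_i=\{j:(A_i)_j\neq\ast\}$ is precisely the set appearing in Lemma~\ref{lem:proj}. Applying Lemma~\ref{lem:proj} with $\x=A_i$ and $\y_1,\ldots,\y_{i-1}$ the first $i-1$ rows of $A$, I get that the $(0,1)$-vector $\proj{\a_i}{I_i}$ (this projection has no stars, being row $i$ restricted to its non-star positions) is independent of the projections $\proj{A_1}{I_i},\ldots,\proj{A_{i-1}}{I_i}$, hence independent of every completion of them; in particular $\proj{\a_i}{I_i}$ is linearly independent over $GF_2$ of the $(0,1)$-vectors $\proj{\a_1}{I_i},\ldots,\proj{\a_{i-1}}{I_i}$.

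Now I would invoke Lemma~\ref{lem:lin} inside $\gf^{I_i}$: let $N_i$ be the $(i-1)$-by-$|I_i|$ matrix whose rows are $\proj{\a_1}{I_i},\ldots,\proj{\a_{i-1}}{I_i}$, and let $\a=\proj{\a_i}{I_i}$. Since $\a$ is linearly independent of the rows of $N_i$, the rank of $N_i$ is at most $i-1\leq |I_i|-1$ (the inequality $i-1 \le |I_i|-1$, i.e.\ $i\le|I_i|$, holds because otherwise the $i-1$ independent rows together with $\a$ would be $i$ independent vectors in a space of dimension $<i$), so Lemma~\ref{lem:lin} gives a nonempty set $Z\subseteq\gf^{I_i}$ of vectors $\v$ with $N_i\v=\nulis$ and $\skl{\a}{\v}=1$. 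Pick any such $\v$ and let $\z_i\in\gf^n$ be its zero-extension to all of $[n]$ (zeroes on $S_i$). Then $D_i\z_i=\nulis$ by construction, $\skl{\a_i}{\z_i}=\skl{\proj{\a_i}{I_i}}{\v}=1$, and $\skl{\a_j}{\z_i}=\skl{\proj{\a_j}{I_i}}{\v}=0$ for every $j<i$; the reduction to projections is valid precisely because $\z_i$ is supported on $I_i$. Doing this for every $i=1,\ldots,r$ produces the family $\z_1,\ldots,\z_r$ witnessing that $A$ is isolated.

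The step I expect to be the crux is the passage through Lemma~\ref{lem:proj}: one must be careful that "$A_i$ independent of $A_1,\ldots,A_{i-1}$'' (which follows from $\minrk{A}=r$) transfers correctly to genuine linear independence of the \emph{specific} $(0,1)$-projections $\proj{\a_j}{I_i}$ obtained by setting stars to $0$, rather than just to "no completion is dependent.'' This is exactly what Lemma~\ref{lem:proj} delivers, since $\proj{A_j}{I_i}$ may still contain stars (wherever row $j$ has a star inside $I_i$), and independence of $\proj{A_i}{I_i}$ from those star-bearing projections implies independence from each of their completions, in particular from the all-stars-to-$0$ completion $\proj{\a_j}{I_i}$. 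Everything else is a direct application of Lemma~\ref{lem:lin} and bookkeeping about supports. Note that this argument gives $D_i\z_i=\nulis$ but not necessarily $D_1\z_i=\cdots=D_{i-1}\z_i=\nulis$, so it shows $A$ is isolated but not, in general, strongly isolated — consistent with the fact that the next results treat the strongly isolated case separately.
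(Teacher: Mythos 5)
Your construction is essentially the paper's own proof: for each $i$ you project onto the star-free positions $I_i$ of row $i$, use Lemma~\ref{lem:proj} to transfer independence to the $(0,1)$-projections obtained by setting stars to $0$, and then use Lemma~\ref{lem:lin} to produce $\z_i$ supported on $I_i$. (The paper takes the projections of \emph{all} rows $j\neq i$, so its $\z_i$ satisfies $\skl{\a_j}{\z_i}=0$ for every $j\neq i$; your restriction to $j<i$ is enough for the definition of isolated, but this is an inessential difference.)

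There is, however, one flawed step: your verification of the rank hypothesis of Lemma~\ref{lem:lin}. You argue $\rk{N_i}\le i-1\le |I_i|-1$, justifying $i\le |I_i|$ on the grounds that otherwise the ``$i-1$ independent rows'' of $N_i$ together with $\a$ would be too many independent vectors in $\gf^{I_i}$. First, the rows of $N_i$, i.e.\ the projections $\proj{\a_j}{I_i}$ with $j<i$, need not be linearly independent: independence of the $(0,1,\ast)$-rows of $A$ does not survive projecting onto $I_i$ and setting stars to $0$ for the rows $j\neq i$ (Lemma~\ref{lem:proj} only protects the $i$th row). Second, the inequality $i\le|I_i|$ is false in general: for the matrix with rows $(0,1,0,0)$, $(0,0,1,0)$, $(1,\ast,\ast,\ast)$ one has $\minrk{A}=3=r$, yet $|I_3|=1<3$. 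Fortunately the inequality you actually need, $\rk{N_i}\le |I_i|-1$, follows in one line from what you already established: since $\proj{\a_i}{I_i}$ is not a linear combination of the rows of $N_i$, the row space of $N_i$ is a proper subspace of $\gf^{I_i}$, hence of dimension at most $|I_i|-1$ --- this is exactly how the paper argues. With that local repair your proof is correct and coincides with the paper's.
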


\begin{proof} Let $\a_1,\ldots,\a_r$ be the rows of $A$ with all stars
  set to $0$.  Let $I\subseteq\{1,\ldots,n\}$ be the set of all
  star-free positions in the $i$th row of $A$, and consider an
  $(r-1)$-by-$|I|$ $(0,1)$-matrix $M_i$ whose rows are the projections
  $\a_j'=\proj{\a_j}{I}$ of vectors $\a_j$ with $j\neq i$ onto the set
  $I$.  By Lemma~\ref{lem:proj}, the projection $\a_i'=\proj{\a_i}{I}$
  of the $i$th vector $\a_i$ onto $I$ cannot be written as a linear
  combination of the rows of $M_i$; hence, $\rk{M_i}\leq |I|-1$.
  Since $2^{|I|-\rk{M_i}-1}\geq 2^0=1$, Lemma~\ref{lem:lin} gives us a
  vector $\z_i'\in\gf^{|I|}$ such that $\skl{\z_i'}{\a_i'}=1$ and
  $\skl{\z_i'}{\a_j'}=0$ for all $j\neq i$. But then
  $\z_i:=(\z_i',\nulis)$ is the desired $(0,1)$-vector:
  $D_i\z_i=D_i\cdot \nulis=\nulis$,
  $\skl{\z_i}{\a_i}=\skl{\z_i'}{\a_i'}=1$, and
  $\skl{\z_i}{\a_j}=\skl{\z_i'}{\a_j'}=0$ for all rows $j\neq i$.
\end{proof}

Say that an $r$-by-$n$ $(0,1,\ast)$-matrix $A$ is \emph{star-monotone}
if the sets $S_1,\ldots,S_r$ of star-positions in its rows form a
chain, that is, if $S_1\subseteq S_2\subseteq\ldots\subseteq S_r$.

\begin{thm}[Star-monotone matrices]\label{thm:weak-row}
  Let $A$ be a $(0,1,\ast)$-matrix with $n$ columns. If $A$ contains
  an $r$-by-$n$ star-monotone submatrix of min-rank $r$, then
  $\opt{A}\leq 2^{n-r}$.
\end{thm}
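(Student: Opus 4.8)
The plan is to combine Lemma~\ref{lem:split} with Lemma~\ref{lem:isol}, reducing everything to showing that a star-monotone matrix of full min-rank is \emph{strongly} isolated. By Lemma~\ref{lem:split}, since $A$ contains an $r$-by-$n$ star-monotone submatrix $A'$ of min-rank $r$, we have $\opt{A}\leq \opt{A'}\cdot 2^{n-r'}$ where $A'$ itself has $n$ columns, so actually it suffices to prove $\opt{A'}\leq 2^{n-r}$ directly for $A'$; that is, we may assume $A$ is itself an $r$-by-$n$ star-monotone matrix with $\minrk{A}=r$. Now by Lemma~\ref{lem:isol}, it is enough to show that \emph{every} star-monotone $r$-by-$n$ matrix of min-rank $r$ is strongly isolated.

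The key observation is that for star-monotone matrices the distinction between ``isolated'' (Definition~\ref{dfn:isol}) and ``strongly isolated'' collapses. Recall Lemma~\ref{lem:almost} already gives us isolating vectors $\z_1,\ldots,\z_r$ with $D_i\z_i=\nulis$ and $\skl{\a_j}{\z_i}=1$ if $j=i$, $=0$ if $j<i$. For strong isolation we need additionally $D_1\z_i=\cdots=D_i\z_i=\nulis$, i.e.\ $\z_i$ must vanish on \emph{all} star-positions of rows $1,\ldots,i$, not just on those of row $i$ alone. But star-monotonicity says $S_1\subseteq S_2\subseteq\cdots\subseteq S_r$, so $S_1\cup\cdots\cup S_i = S_i$. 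Hence $D_i\z_i=\nulis$ already implies $x_t = 0$ for all $t\in S_1\cup\cdots\cup S_i$, which is exactly $D_1\z_i=\cdots=D_i\z_i=\nulis$. Therefore the isolating vectors produced by Lemma~\ref{lem:almost} are automatically strongly isolating, so $A$ is strongly isolated, and Lemma~\ref{lem:isol} yields $\opt{A}\leq 2^{n-r}$.

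I expect the only subtlety is bookkeeping in the reduction step: one should check that passing to the submatrix $A'$ and invoking Lemma~\ref{lem:split} together with the first assertion of that lemma (removing rows only increases $\opt{}$) genuinely reduces to the $r$-by-$n$ case without changing the number of columns $n$; since the star-monotone submatrix is required to have $n$ columns in the statement, this is immediate. The ``main obstacle'' is thus really just recognizing that star-monotonicity is precisely the hypothesis that makes the nested vanishing conditions in the definition of strong isolation redundant — once that is seen, the proof is a one-line composition of Lemmas~\ref{lem:almost} and~\ref{lem:isol}. No new estimates or constructions are needed beyond what those two lemmas already provide.
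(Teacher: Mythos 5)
Your proposal is correct and follows essentially the same route as the paper: reduce to the $r$-by-$n$ star-monotone matrix itself via the row-removal property, invoke Lemma~\ref{lem:almost} for isolation, observe that the chain condition $S_j\subseteq S_i$ upgrades $D_i\z_i=\nulis$ to strong isolation, and conclude with Lemma~\ref{lem:isol}. No substantive differences from the paper's argument.
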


\begin{proof} Since addition of new rows can only decrease the size of
  a solution, we can assume that $A$ itself is an $r$-by-$n$
  star-monotone matrix of min-rank $r$.  Let $\a_1,\ldots,\a_r$ be the
  rows of $A$ with all stars set to $0$.  By Lemma~\ref{lem:almost},
  the matrix $A$ is isolated.  That is, there exist vectors
  $\z_1,\ldots,\z_r\in\gf^n$ such that: $\skl{\a_i}{\z_j}=1$ iff
  $i=j$, and $D_i\z_i=\nulis$ for all $1\leq i\leq r$. Since
  $S_j\subseteq S_i$ for all $j<i$, this last condition implies that
  $D_j\z_i=\nulis$ for all $1\leq j<i\leq r$, that is, $A$ is strongly
  isolated.  Hence, we can apply Lemma~\ref{lem:isol}.
\end{proof}

\section{Solutions as independent sets in Cayley graphs}
\label{sec:cayley}

Let $A=(a_{ij})$ be an $m$-by-$n$ $(0,1,\ast)$-matrix.  In the
definition of solutions $L$ for $A$ we take a completion $M$ of $A$
and an operator $G(\x)$, and require that $M\x=G(\x)$ for all $\x\in
L$. The operator $G=(g_1,\ldots,g_m)$ can be arbitrary---the only
restriction is that its $i$th component $g_i$ can only depend on
variables corresponding to stars in the $i$th row of~$A$.  In this
section we show that the actual \emph{form} of operators $G$ can be
ignored---only star-positions are important.  To do this, we associate
with $A$ the following set of ``forbidden'' vectors:
\[
K_A=\{\x\in \gf^n\colon\mbox{$\exists i\in[m]$\ \ $D_i\x=\nulis$ and
  $\skl{\a_i}{\x}=1$}\}\,,
\]
where $D_i$ is the incidence $n$-by-$n$ $(0,1)$-matrix of stars in the
$i$th row of $A$, and $\a_i$ is the $i$th row of $A$ with all stars
set to $0$.  Hence, $K_A$ is a union $K_A=\bigcup_{i=1}^m K_i$ of $m$
affine spaces
\[
K_i=\bigg\{\x\colon \begin{pmatrix}D_i\\ \a_i\end{pmatrix}\x
=\begin{pmatrix} \nulis \\ 1\end{pmatrix}\bigg\}\,.
\]

\begin{lem}\label{lem:forb}
For every vector $\x\in\{0,1\}^n$, $\x\in K_A$ if and only if $M\x\neq\nulis$
for all completions $M$ of $A$.
\end{lem}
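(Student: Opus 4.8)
The plan is to prove both directions of the equivalence $\x\in K_A \iff M\x\neq\nulis$ for all completions $M$ of $A$, working with the explicit description $K_A=\bigcup_i K_i$ where $K_i=\{\x\colon D_i\x=\nulis \text{ and } \skl{\a_i}{\x}=1\}$. The key observation driving everything is that the $i$th row $\m_i$ of any completion $M$ has the form $\m_i=\a_i\+\p_i$ with $\p_i$ supported on the star-positions $S_i$ of row $i$; equivalently $\p_i = D_i\p_i$. So for any $\x$, $\skl{\m_i}{\x}=\skl{\a_i}{\x}\+\skl{\p_i}{\x}=\skl{\a_i}{\x}\+\skl{\p_i}{D_i\x}$, and the freedom in choosing the completion is exactly the freedom in choosing each $\p_i$ supported on $S_i$.

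First I would prove the easy direction: if $\x\in K_A$, then $M\x\neq\nulis$ for every completion $M$. Pick the index $i$ witnessing $\x\in K_i$, so $D_i\x=\nulis$ and $\skl{\a_i}{\x}=1$. For any completion $M$, since $\p_i$ is supported on $S_i$ and $D_i\x=\nulis$ (i.e. $\x$ is zero on all star-positions of row $i$), we get $\skl{\p_i}{\x}=0$, hence $\skl{\m_i}{\x}=\skl{\a_i}{\x}=1$. So the $i$th coordinate of $M\x$ is $1$, and $M\x\neq\nulis$.

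Next the contrapositive of the other direction: if $\x\notin K_A$, I would exhibit a completion $M$ with $M\x=\nulis$. Since $\x\notin K_i$ for every $i$, for each row $i$ we have: either $D_i\x\neq\nulis$, i.e. $\x$ has a $1$ in some star-position $j\in S_i$, or $D_i\x=\nulis$ but $\skl{\a_i}{\x}=0$. In the first case, set $\p_i:=\skl{\a_i}{\x}\cdot\e_j$ for such a $j$; this is supported on $S_i$, so $\m_i=\a_i\+\p_i$ defines a legal completion of row $i$, and $\skl{\m_i}{\x}=\skl{\a_i}{\x}\+\skl{\a_i}{\x}\cdot x_j=\skl{\a_i}{\x}(1\+x_j)=\skl{\a_i}{\x}\cdot 0=0$ since $x_j=1$. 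In the second case, set $\p_i:=\nulis$; then $\skl{\m_i}{\x}=\skl{\a_i}{\x}=0$. Doing this row by row produces a completion $M$ with every coordinate of $M\x$ equal to $0$, i.e. $M\x=\nulis$. Combining the two directions gives the lemma.

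I do not expect a genuine obstacle here — the statement is essentially bookkeeping about the structure of completions row by row. The one point requiring slight care is the case split in the backward direction (handling a row where $\x$ is nonzero on some star-position versus a row where it is zero on all of them), and making sure the chosen $\p_i$ is genuinely supported on $S_i$ so that $M$ is a bona fide completion of $A$; this is immediate since $\e_j$ with $j\in S_i$ is supported on $S_i$ and $\nulis$ is trivially so. It is also worth remarking that this lemma makes the affine-space decomposition $K_A=\bigcup_{i=1}^m K_i$ manifest and sets up the Cayley-graph characterization of solutions in the next section.
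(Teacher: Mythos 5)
Your proof is correct and follows essentially the same route as the paper: the forward direction uses that a completion can only differ from $\a_i$ on star positions where $\x$ vanishes, and the backward direction builds a completion row by row via $\m_i=\a_i$ or $\m_i=\a_i\+\e_j$ exactly as in the paper (your case split on $D_i\x\neq\nulis$ versus $D_i\x=\nulis$, with the uniform choice $\p_i=\skl{\a_i}{\x}\cdot\e_j$, is just a cosmetic repackaging of the paper's split on the value of $\skl{\a_i}{\x}$). No gaps.
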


\begin{proof}
$(\Rightarrow)$: Take a vector $\x\in K_A$. Then there exists an $i\in[m]$
such that vector $\x$ has zeroes in all positions, where the $i$th row
of $A$ has stars, and   $\skl{\a_i}{\x}=1$, where $\a_i$
is obtained by setting all stars in this row to $0$. So, if $\b_i$ is any
completion of the $i$th row of $A$ then $\skl{\b_i}{\x}
=\skl{\a_i}{\x}=1$. Thus, the scalar product of $\x$ with the $i$th row
of any completion of $A$ must be equal to~$1$.

$(\Leftarrow)$: Take a vector $\x\not\in K_A$. We have to show that then
$M\x=\nulis$ for at least one completion $M$ of $A$. The fact that $\x$
does not belong to $K_A$ means that for each $i\in[m]$ either (i)
$\skl{\a_i}{\x}=0$, or (ii) $\skl{\a_i}{\x}=1$ but vector $\x$ has a $1$
in some position $j$, where the $i$th row of $A$ has a star.
We can therefore construct the $i$th row $\m_i$ of the desired
completion $M$ of $A$ with $M\x=\nulis$
by taking $\m_i=\a_i$, if (i), and $m_i=\a_i+\e_j$, if (ii). In both cases
we have  $\skl{\m_i}{\x}=0$, as desired.
\end{proof}

The \emph{sum-set} of two sets of vectors $S,T\subseteq\gf^n$
is the set of vectors
\[
S+T=\{\x\+\y\colon\x\in S \mbox{ and } \y\in T\}\,.
\]

\begin{thm}\label{thm:cayley}
  A set $L\subseteq\gf^n$ is a solution for $A$ if and only if
  $(L+L)\cap K_A=\emptyset$.
\end{thm}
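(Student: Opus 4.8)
The plan is to prove both directions by unwinding the definition of a solution and using the characterization of $K_A$ from Lemma~\ref{lem:forb}. The guiding idea is that a solution $L=\{\x\colon M\x=G(\x)\}$ is really a condition that depends only on the \emph{differences} $\x\+\y$ of its elements, since the operator $G$ is consistent with $A$ and therefore behaves predictably on vectors that agree on all star-positions of a given row.

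For the forward direction, suppose $L=\{\x\colon M\x=G(\x)\}$ is a solution for $A$, with $G$ consistent with $A$; by the Canonical Completions observation we may take $M$ to be the canonical completion, so the $i$th row of $M$ is $\a_i$. Assume for contradiction that there are $\x,\y\in L$ with $\x\+\y\in K_A$. Then, by definition of $K_A$, there is an index $i$ with $D_i(\x\+\y)=\nulis$ and $\skl{\a_i}{\x\+\y}=1$. The first condition says $\x$ and $\y$ agree on every star-position of the $i$th row, so since $g_i$ depends only on those positions we get $g_i(\x)=g_i(\y)$. But the second condition gives $\skl{\a_i}{\x}\neq\skl{\a_i}{\y}$. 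Since $\x,\y\in L$ we have $\skl{\a_i}{\x}=g_i(\x)$ and $\skl{\a_i}{\y}=g_i(\y)$, so $g_i(\x)\neq g_i(\y)$, a contradiction. Hence $(L+L)\cap K_A=\emptyset$.

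For the converse, suppose $L\subseteq\gf^n$ satisfies $(L+L)\cap K_A=\emptyset$; we must produce a completion $M$ and a consistent operator $G$ with $M\x=G(\x)$ for all $\x\in L$. Here I expect the main work. The idea is to build $G$ row by row. Fix $i$, and define a partition of $\gf^n$ into classes according to the projection onto the star-positions $S_i$ of row $i$: two vectors are in the same class iff they agree outside $S_i$. For each class $C$ that meets $L$, pick a representative $\x_C\in C\cap L$ and set $g_i(\x):=\skl{\a_i}{\x_C}$ for all $\x$ in that class; for classes not meeting $L$, define $g_i$ arbitrarily (say $0$). This $g_i$ depends only on the coordinates in $S_i$, so $G=(g_1,\dots,g_m)$ is consistent with $A$. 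Taking $M$ the canonical completion, it remains to check $\skl{\a_i}{\x}=g_i(\x)$ for every $\x\in L$ and every $i$. Given such $\x$, let $C$ be its class and $\x_C$ the representative; then $\x\+\x_C$ vanishes outside $S_i$, i.e.\ $D_i(\x\+\x_C)=\nulis$. Now $\x,\x_C\in L$, so $\x\+\x_C\in L+L$, hence $\x\+\x_C\notin K_A$; combined with $D_i(\x\+\x_C)=\nulis$, the definition of $K_A$ forces $\skl{\a_i}{\x\+\x_C}=0$, i.e.\ $\skl{\a_i}{\x}=\skl{\a_i}{\x_C}=g_i(\x)$, as required.

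The one point demanding care is consistency across different rows: the classes used to define $g_1,\dots,g_m$ are determined by \emph{different} star-sets $S_1,\dots,S_m$, but this causes no conflict because each $g_i$ is defined independently on its own partition and the verification above is carried out separately for each $i$. A second small subtlety is the choice of representatives: any representative in $C\cap L$ works, since for any two $\x_C,\x_C'\in C\cap L$ we have $\x_C\+\x_C'\in(L+L)\setminus K_A$ with $D_i(\x_C\+\x_C')=\nulis$, giving $\skl{\a_i}{\x_C}=\skl{\a_i}{\x_C'}$, so $g_i$ is well defined regardless of the choice. This completes the argument in both directions.
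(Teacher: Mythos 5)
Your proof is correct and follows essentially the same route as the paper's: the forward direction derives a contradiction from the consistency of $g_i$, and your converse builds each $g_i$ on the classes of vectors sharing the same projection onto the star-positions $S_i$, which is exactly the paper's construction of a function separating the projections of $\{\x\in L\colon\skl{\a_i}{\x}=0\}$ and $\{\x\in L\colon\skl{\a_i}{\x}=1\}$, made concrete via representatives. One wording slip to fix in the converse: ``agree outside $S_i$'' and ``vanishes outside $S_i$'' should both read ``on $S_i$'' (i.e.\ same projection onto the star-positions), which is plainly what you intend and use, since otherwise $g_i$ would not be consistent with $A$ and $D_i(\x\+\x_C)=\nulis$ would not follow.
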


\begin{proof}
  Observe that the sum $\x\+\y$ of two vectors belongs to $K_A$ iff
  these vectors coincide on all stars of at least one row of $A$ such
  that $\skl{\a_i}{\x}\neq \skl{\a_i}{\y}$. By this observation, we
  see that the condition $(L+L)\cap K_A=\emptyset$ is equivalent to:
  \begin{equation}\label{eq:sol0}
    \forall\x,\y\in L\ \ \forall i\in[m]:\  D_i\x= D_i\y\ \ \mbox{implies}\ \
    \skl{\a_i}{\x}= \skl{\a_i}{\y}.
  \end{equation}
  Having made this observation, we now turn to the actual proof of
  Theorem~\ref{thm:cayley}.

  $(\Rightarrow)$ Let $L$ be a solution for $A$. Hence, there is an
  operator $G=(g_1,\ldots,g_m)$ consistent with $A$ such that
  $\skl{\a_i}{\x}=g_i(\x)$ for all $\x\in L$ and all rows $i\in[m]$.
  To show that then $L$ must satisfy (\ref{eq:sol0}), take any two
  vectors $\x,\y\in L$ and assume that $D_i\x=D_i\y$.  This means that
  vectors $\x$ and $\y$ must coincide in all positions where the $i$th
  row of $A$ has stars. Since $g_i$ can only depend on these
  positions, this implies $g_i(\x)=g_i(\y)$, and hence,
  $\skl{\a_i}{\x}= \skl{\a_i}{\y}$.

  $(\Leftarrow)$ Assume that $L\subseteq\gf^n$ satisfies
  (\ref{eq:sol0}). We have to show that then there exists an operator
  $G=(g_1,\ldots,g_m)$ consistent with $A$ such that
  $\skl{\a_i}{\x}=g_i(\x)$ for all $\x\in L$ and $i\in[m]$; here, as
  before, $\a_i$ is the $i$th row of $A$ with all stars set to~$0$.
  The $i$th row of $A$ splits the set $L$ into two subsets
  \[
  L_i^{0}=\{\x\in L\colon\skl{\a_i}{\x}=0\}\ \ \mbox{and}\ \
  L_i^{1}=\{\x\in L\colon\skl{\a_i}{\x}=1\}\,.
  \]
  Condition (\ref{eq:sol0}) implies that $D_i\x\neq D_i\y$ for all
  $(\x,\y)\in L_i^0\times L_i^1$. That is, if $S_i$ is the set of
  star-positions in the $i$th row of $A$, then the projections
  $\proj{\x}{S_i}$ of vectors $\x$ in $L_i^0$ onto these positions
  must be different from all the projections $\proj{\y}{S_i}$ of
  vectors $\y$ in $L_i^1$.  Hence, we can find a boolean function
  $g_i:\gf^{S_i}\to\gf$ taking different values on these two sets of
  projections. This function will then satisfy
  $g_i(\x)=\skl{\a_i}{\x}$ for all $\x\in L$.
\end{proof}

A \emph{coset} of a set of vectors $L\subseteq\gf^n$ is a set
$\v+L=\{\v\+\x\colon \x\in L\}$ with $\v\in\gf^n$.  Since
$(\v+L)+(\v+L)=L+L$, Theorem~\ref{thm:cayley} implies:

\begin{cor}\label{cor:coset}
  Every coset of a solution for a $(0,1,\ast)$-matrix $A$ is also a
  solution for $A$.
\end{cor}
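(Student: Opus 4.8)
The plan is to derive the corollary directly from Theorem~\ref{thm:cayley} by the stated algebraic identity. First I would recall the characterization: $L$ is a solution for $A$ if and only if $(L+L)\cap K_A=\emptyset$. The key observation is that forming a coset does not change the sum-set. Concretely, fix any $\v\in\gf^n$ and set $L'=\v+L$. Then for arbitrary $\x,\y\in L$ we have $(\v\+\x)\+(\v\+\y)=\x\+\y$, since $\v\+\v=\nulis$ over $GF_2$; hence every element of $L'+L'$ is already an element of $L+L$, and conversely every $\x\+\y\in L+L$ equals $(\v\+\x)\+(\v\+\y)\in L'+L'$. Therefore $L'+L'=L+L$ as sets.

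The rest is immediate. If $L$ is a solution for $A$, then $(L+L)\cap K_A=\emptyset$ by the ``only if'' direction of Theorem~\ref{thm:cayley}; combining this with $L'+L'=L+L$ gives $(L'+L')\cap K_A=\emptyset$, and the ``if'' direction of Theorem~\ref{thm:cayley} then yields that $L'=\v+L$ is a solution for $A$ as well. Since $\v$ was arbitrary, every coset of $L$ is a solution.

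I do not expect any genuine obstacle here: the entire content is the one-line group-theoretic fact $(\v+L)+(\v+L)=L+L$ over the Abelian group $(\gf^n,\+)$, which holds because $-\v=\v$ in characteristic $2$, together with a direct appeal to the equivalence already established in Theorem~\ref{thm:cayley}. The only thing worth stating carefully is that the equality of sum-sets is an equality of \emph{sets} (so that intersecting with $K_A$ is unaffected), which is exactly what the two-way inclusion above gives.
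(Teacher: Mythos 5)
Your proposal is correct and follows exactly the paper's own argument: the identity $(\v+L)+(\v+L)=L+L$ over $(\gf^n,\+)$ combined with the characterization of solutions in Theorem~\ref{thm:cayley}. The paper states this in one line; your two-way inclusion just spells out the same set equality.
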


\begin{rem}\label{rem:cayley}\rm
  A Cayley graph over the Abelian group $(\gf^n,\+)$ generated by a
  set $K\subseteq \gf^n$ of vectors has all vectors in $\gf^n$ as
  vertices, and two vectors $\x$ and $\y$ are joined by an edge iff
  $\x\+\y\in K$.  Theorem~\ref{thm:cayley} shows that solutions for a
  $(0,1,\ast)$-matrix $A$ are precisely the independent sets in a
  Cayley graph generated by a special set~$K_A$.
\end{rem}

\begin{rem}\rm
  If $A$ is an $m$-by-$n$ $(0,1)$-matrix, that is, has no stars at
  all, then $K_A=\{\x\colon A\x\neq\nulis\}$. Hence, in this case, a
  set $L\subseteq\gf^n$ is a solution for $A$ iff there is a vector
  $\b\in\gf^m$ such that $A\x=\b$ for all $\x\in L$. That is, in this
  case, $\ker{A}=\{\x\colon A\x=\nulis\}$ is an optimal solution.
\end{rem}

\section{Structure of linear solutions}
\label{sec:linear}

By Theorem~\ref{thm:cayley}, a set of vectors $L\subseteq\gf^n$ is a
solution for an $m$-by-$n$ $(0,1,\ast)$-matrix $A$ if and only if
$(L+L)\cap K_A=\emptyset$, where $K_A\subseteq\gf^n$ is the set of
``forbidden'' vectors for $A$.  Thus, \emph{linear}
solutions are precisely vector subspaces of $\gf^n$ avoiding the set
$K_A$. Which subspaces these are? We will show (Theorem~\ref{thm:lin-str})
that these are precisely the subspaces lying entirely in the kernel of
some completion of~$A$.

Each vector subspace of $\gf^n$ is a kernel $\ker{H}=\{\x\colon
H\x=\nulis\}$ of some $(0,1)$-matrix $H$. Hence, linear solutions for
$A$ are given by matrices $H$ such that $H\x\neq \nulis$ for all
$\x\in K_A$; in this case we also say that the matrix $H$ \emph{separates}
$K_A$ from zero.  By the \emph{span-matrix} of a $(0,1)$-matrix $H$ we
will mean the matrix $\spnm{H}$ whose rows are all linear combinations
of the rows of~$H$.

\begin{lem}\label{lem:separ}
  Let $A$ be a $(0,1,\ast)$-matrix and $H$ be $(0,1)$-matrix. Then
$\ker{H}$ is a solution for $A$ iff $\spnm{H}$ contains a completion of~$A$.
\end{lem}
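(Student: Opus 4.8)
The plan is to prove both directions by unwinding the definitions of $K_A$, the kernel $\ker{H}$, and the span-matrix $\spnm{H}$, and crucially to invoke Lemma~\ref{lem:forb}, which characterizes $K_A$ as exactly the set of vectors $\x$ with $M\x\neq\nulis$ for \emph{every} completion $M$ of $A$. The key reformulation is: $\ker{H}$ is a solution for $A$ iff (by Theorem~\ref{thm:cayley}, since $\ker{H}$ is linear so $\ker{H}+\ker{H}=\ker{H}$) we have $\ker{H}\cap K_A=\emptyset$, i.e. every vector in $K_A$ is separated from zero by $H$. So I must show that $H$ separates $K_A$ from zero if and only if $\spnm{H}$ contains some completion of $A$.

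For the direction $(\Leftarrow)$, suppose $\spnm{H}$ contains a completion $M$ of $A$, meaning every row of $M$ is a linear combination of rows of $H$. Take any $\x\in K_A$. By Lemma~\ref{lem:forb}, $M\x\neq\nulis$, so some row $\m_i$ of $M$ has $\skl{\m_i}{\x}=1$. Since $\m_i$ is a linear combination of rows of $H$, if we had $H\x=\nulis$ then $\skl{\m_i}{\x}$ would also be $0$, a contradiction. Hence $H\x\neq\nulis$, so $H$ separates $K_A$ from zero, and $\ker{H}$ is a solution.

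For $(\Rightarrow)$, suppose $\ker{H}$ is a solution, i.e. $\ker{H}\cap K_A=\emptyset$; equivalently every $\x$ with $H\x=\nulis$ satisfies $M\x=\nulis$ for some completion $M$ of $A$ (contrapositive of Lemma~\ref{lem:forb}: $\x\notin K_A$). I want to build a single completion $M$ all of whose rows lie in $\spnm{H}$. I would construct $M$ row by row: for the $i$th row of $A$, with star-set $S_i$, the candidate rows of a completion are the vectors $\a_i\+\p$ where $\p$ ranges over $(0,1)$-vectors supported on $S_i$. I need to pick one such $\p_i$ so that $\a_i\+\p_i\in\spnm{H}$, i.e. so that $\skl{\a_i\+\p_i}{\x}=0$ for all $\x\in\ker{H}$ — because a vector lies in $\spnm{H}$ (the row space of $H$) exactly when it is orthogonal to $\ker{H}=\ker{H}^{\perp\perp}$. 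So the task reduces to: for each row $i$, find $\p_i$ supported on $S_i$ with $\skl{\a_i}{\x}=\skl{\p_i}{\x}$ for all $\x\in\ker{H}$. The linear functional $\x\mapsto\skl{\a_i}{\x}$ restricted to $\ker{H}$ must be expressible as $\x\mapsto\skl{\p_i}{\x}$ for some $\p_i$ supported on $S_i$; this is possible precisely when $\skl{\a_i}{\x}=0$ for every $\x\in\ker{H}$ that vanishes on all of $S_i$ (i.e. with $D_i\x=\nulis$) — and that is exactly the statement that no such $\x$ lies in $K_i\subseteq K_A$, which holds since $\ker{H}\cap K_A=\emptyset$. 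A short linear-algebra argument (restricting to coordinates in $S_i$, or a dimension count on the quotient of $\ker{H}$ by $\{\x\in\ker{H}:D_i\x=\nulis\}$) then produces the desired $\p_i$.

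The main obstacle is the $(\Rightarrow)$ direction, specifically justifying that the obstruction-free condition "no $\x\in\ker{H}$ with $D_i\x=\nulis$ has $\skl{\a_i}{\x}=1$" really suffices to produce $\p_i$ supported on $S_i$ representing the functional $\skl{\a_i}{\cdot}$ on $\ker{H}$; this is where I must be careful, likely arguing via the projection $\proj{\ker{H}}{S_i}$ and the fact that a linear functional on a subspace $U\subseteq\gf^n$ can be represented by a vector supported on coordinate set $S$ iff it vanishes on $U\cap\{\x:D_i\x=\nulis\}$. Everything else is a routine application of the standard duality $\ker{H}^\perp=\langle\text{rows of }H\rangle$ together with Lemma~\ref{lem:forb} and Theorem~\ref{thm:cayley}.
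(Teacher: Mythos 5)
Your proposal is correct and follows essentially the same route as the paper: both directions reduce, via Theorem~\ref{thm:cayley} and Lemma~\ref{lem:forb}, to the condition that $H$ separates $K_A$ from zero, and the $(\Rightarrow)$ direction then uses the $GF_2$ duality between $\ker{H}$ and the row space of $H$ to produce, row by row, a completion row $\a_i\+\p_i$ with $\p_i$ supported on the star positions. The paper phrases this last step as ``$\a_i$ lies in the span of the rows of $H$ and $D_i$, so $\a_i=\Alpha_i^{\top}H\+\Beta_i^{\top}D_i$,'' which is just the dual formulation of your functional-representation argument; your sketched linear-algebra step is a correct and standard fact, so there is no gap.
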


\begin{proof}
 To prove $(\Leftarrow)$, suppose that some completion $M$ of
  $A$ is a submatrix of $\spnm{H}$. Let $\x\in K_A$. By Lemma~\ref{lem:forb},
  we know
  that then $M\x\neq\nulis$, and hence, also $\spnm{H}\x\neq\nulis$.
  Since $H\x=\nulis$ would imply $\spnm{H}\x=\nulis$, we also have
  that $H\x\neq\nulis$.

  To prove $(\Rightarrow)$, suppose that $\ker{H}$ is a solution for $A$,
that is,  $H\x\neq \nulis$ for all $\x\in K_A$. Then,
  for every row $i\in[m]$ and every vector $\x\in\gf^n$, $H\x=\nulis$
  and $D_i\x=\nulis$ imply that $\skl{\a_i}{\x}=0$. This means that $\a_i$
must be a linear combination of rows of $H$ and $D_i$.
  Hence, for each $i$, the vector $\a_i$ must lie in the vector space
  spanned by the rows of $H$ and $D_i$, that is,
  $\a_i=\Alpha_i^{\top}H\+\Beta_i^{\top}D_i$ for some vectors
  $\Alpha_i$ and $\Beta_i$. In other words, the $i$th linear
  combination $\Alpha_i^{\top}H$ of the rows of $H$ is the $i$th row
  $\a_i\+\Beta_i^{\top}D_i$ of a particular completion $M$ of~$A$,
  implying that $M$ is a submatrix of~$\spnm{H}$, as desired.
\end{proof}

\begin{thm}\label{thm:lin-str}
  Let $A$ be a $(0,1,\ast)$-matrix. A linear subspace is a solution
  for $A$ if and only if it is contained in a kernel of some
  completion of $A$.
\end{thm}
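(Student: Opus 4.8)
The plan is to prove Theorem~\ref{thm:lin-str} by deriving it as a consequence of Lemma~\ref{lem:separ}. The key observation is that every linear subspace $V\subseteq\gf^n$ can be written as $V=\ker{H}$ for some $(0,1)$-matrix $H$ (for instance, take the rows of $H$ to be a basis of the orthogonal complement $V^\perp$). So a linear subspace $V$ is a solution for $A$ if and only if $\ker{H}$ is a solution for $A$ for the corresponding $H$.

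First I would handle the ``only if'' direction. Suppose $V$ is a linear solution for $A$, and write $V=\ker{H}$. By Lemma~\ref{lem:separ}, since $\ker{H}$ is a solution for $A$, the span-matrix $\spnm{H}$ contains some completion $M$ of $A$ as a submatrix. Then every row of $M$ is a linear combination of rows of $H$, so $H\x=\nulis$ implies $M\x=\nulis$; that is, $V=\ker{H}\subseteq\ker{M}$. This shows $V$ is contained in the kernel of a completion of $A$.

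For the ``if'' direction, suppose $V$ is a linear subspace contained in $\ker{M}$ for some completion $M$ of $A$. I want to show $V$ is a solution for $A$. By Theorem~\ref{thm:cayley} it suffices to check that $(V+V)\cap K_A=\emptyset$; since $V$ is linear, $V+V=V$, so I need $V\cap K_A=\emptyset$. Take any $\x\in V$. Then $M\x=\nulis$ since $V\subseteq\ker{M}$. But by Lemma~\ref{lem:forb}, every vector of $K_A$ satisfies $M'\x\neq\nulis$ for \emph{all} completions $M'$ of $A$; in particular such an $\x$ cannot lie in $\ker{M}$. Hence $\x\notin K_A$, and therefore $V\cap K_A=\emptyset$, so $V$ is a solution.

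I do not expect any serious obstacle: both directions are short once Lemmas~\ref{lem:forb},~\ref{lem:separ} and Theorem~\ref{thm:cayley} are in hand. The only point requiring a moment's care is the routine fact that an arbitrary linear subspace of $\gf^n$ arises as the kernel of some $(0,1)$-matrix, which is immediate from linear algebra over $GF_2$. Alternatively, the ``if'' direction can be proved even more directly via Lemma~\ref{lem:forb} without invoking Theorem~\ref{thm:cayley}, by exhibiting the operator $G$ explicitly as the affine map $\x\mapsto M\x$ restricted appropriately; but the Cayley-graph route above is cleanest.
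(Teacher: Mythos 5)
Your proof is correct and follows essentially the same route as the paper: the ``if'' direction via Lemma~\ref{lem:forb} together with Theorem~\ref{thm:cayley} (using $V+V=V$), and the ``only if'' direction by writing $V=\ker{H}$ and applying Lemma~\ref{lem:separ} to get a completion $M$ inside $\spnm{H}$, whence $V\subseteq\ker{M}$. No gaps; the auxiliary fact that every subspace of $\gf^n$ is the kernel of a $(0,1)$-matrix is used implicitly in the paper as well.
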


\begin{proof}
  ($\Leftarrow$): If a linear subspace $L\subseteq\gf^n$ lies in a
  kernel of some completion of $A$ then $L\cap K_A=\emptyset$, by
 Lemma~\ref{lem:forb}. Since $L+L=L$, the set $L$ must be
 a solution for $A$, by Theorem~\ref{thm:cayley}.

  ($\Rightarrow$): Let $L\subseteq\gf^n$ be an arbitrary linear
  solution for $A$. Then $L+L= L$ and $L\cap K_A=\emptyset$.  Take a
  $(0,1)$-matrix $H$ with $L=\ker{H}$. Since $\ker{H}\cap
  K_A=\emptyset$, the matrix $H$ separates $K_A$ from
  zero. Lemma~\ref{lem:separ} implies that then $\spnm{H}$ must
  contain some completion $M$ of $A$.  But then
  $L=\ker{H}=\ker{\spnm{H}}\subseteq \ker{M}$, as claimed.
\end{proof}

\begin{cor}\label{cor:lin-mr}
  For any $(0,1,\ast)$-matrix $A$ we have that
  $\lin{A}=2^{n-\minrk{A}}$.
\end{cor}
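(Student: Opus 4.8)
The plan is to derive Corollary~\ref{cor:lin-mr} directly from Theorem~\ref{thm:lin-str}. The quantity $\lin{A}$ is, by definition, the maximum size of a linear solution for $A$. By Theorem~\ref{thm:lin-str}, the linear solutions for $A$ are precisely the linear subspaces $L\subseteq\gf^n$ that are contained in $\ker{M}$ for some completion $M$ of $A$. So the largest linear solution is obtained by taking $L=\ker{M}$ itself for the completion $M$ that makes this kernel as large as possible, i.e.\ the completion of smallest rank.

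First I would argue the upper bound $\lin{A}\leq 2^{n-\minrk{A}}$. Let $L$ be any linear solution; by Theorem~\ref{thm:lin-str} there is a completion $M$ of $A$ with $L\subseteq\ker{M}$. Since $\ker{M}$ is a subspace of $\gf^n$ of dimension $n-\rk{M}$, we get $|L|\leq|\ker{M}|=2^{n-\rk{M}}\leq 2^{n-\minrk{A}}$, the last inequality because $\rk{M}\geq\minrk{A}$ by the definition of min-rank. Taking the maximum over all linear solutions $L$ gives $\lin{A}\leq 2^{n-\minrk{A}}$.

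Next I would argue the matching lower bound. Let $M$ be a completion of $A$ achieving $\rk{M}=\minrk{A}$, which exists by the definition of $\minrk{A}$ as a minimum over completions. Set $L=\ker{M}$. This is a linear subspace contained in the kernel of a completion of $A$ (namely $M$ itself), so by the ($\Leftarrow$) direction of Theorem~\ref{thm:lin-str} it is a linear solution for $A$. Its size is $|L|=2^{n-\rk{M}}=2^{n-\minrk{A}}$. Hence $\lin{A}\geq 2^{n-\minrk{A}}$, and combining the two inequalities yields $\lin{A}=2^{n-\minrk{A}}$.

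There is essentially no obstacle here: the whole content has been pushed into Theorem~\ref{thm:lin-str} and into the elementary fact that the kernel of a rank-$r$ matrix over $GF_2$ with $n$ columns has exactly $2^{n-r}$ elements. The only point to be slightly careful about is that Theorem~\ref{thm:lin-str} is stated for arbitrary linear subspaces, so one should note explicitly that a kernel $\ker{M}$ is itself a valid linear solution (apply the easy direction of the theorem), which is what legitimizes the lower bound construction.
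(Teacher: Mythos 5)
Your proof is correct and follows essentially the same route as the paper: both directions are read off from Theorem~\ref{thm:lin-str}, identifying $\lin{A}$ with the maximum of $|\ker{M}|=2^{n-\rk{M}}$ over completions $M$, and then minimizing the rank. The paper states this more tersely, but the content is identical.
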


\begin{proof}
  By Theorem~\ref{thm:lin-str}, $\lin{A}$ is the maximum of
  $|\ker{M}|=2^{n-\rk{M}}$ over all completions $M$ of $A$.  Since
  $\minrk{A}$ is the minimum of $\rk{M}$ over all completions $M$ of
  $A$, we are done.
\end{proof}

\begin{cor}[Alternative definition of min-rank]\label{cor:mr}
  For every $(0,1,\ast)$-matrix $A$ we have
  \[
  \minrk{A}=\min\{\rk{H}\colon \mbox{$H\x\neq \nulis$ for all
$\x\in K_A$}\}\,.
  \]
\end{cor}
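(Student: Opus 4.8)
The plan is to derive this alternative formula for $\minrk{A}$ directly from Corollary~\ref{cor:lin-mr} together with Theorem~\ref{thm:lin-str} (or, essentially equivalently, Lemma~\ref{lem:separ}). The point is that a matrix $H$ with $H\x\neq\nulis$ for all $\x\in K_A$ is exactly a matrix that ``separates $K_A$ from zero,'' and by Theorem~\ref{thm:lin-str} this happens precisely when $\ker{H}$ is a linear solution for $A$, which in turn happens precisely when $\ker{H}$ is contained in $\ker{M}$ for some completion $M$ of $A$.

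First I would prove the inequality $\minrk{A}\le \min\{\rk{H}\colon H\x\neq\nulis \text{ for all } \x\in K_A\}$. Take any matrix $H$ with $H\x\neq\nulis$ for all $\x\in K_A$. Then $\ker{H}\cap K_A=\emptyset$, so $\ker{H}$ is a linear solution for $A$ by Theorem~\ref{thm:cayley} (since $\ker{H}+\ker{H}=\ker{H}$). By Theorem~\ref{thm:lin-str}, $\ker{H}$ is contained in $\ker{M}$ for some completion $M$ of $A$, hence $2^{n-\rk{H}}=|\ker{H}|\le |\ker{M}|=2^{n-\rk{M}}$, giving $\rk{M}\le\rk{H}$. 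Since $\minrk{A}\le\rk{M}$, we get $\minrk{A}\le\rk{H}$, and taking the minimum over all such $H$ yields the claimed inequality.

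Conversely, I would show $\minrk{A}\ge \min\{\rk{H}\colon H\x\neq\nulis \text{ for all }\x\in K_A\}$. Let $M$ be a completion of $A$ achieving $\rk{M}=\minrk{A}$. By Lemma~\ref{lem:forb}, $M\x\neq\nulis$ for every $\x\in K_A$, so $M$ itself is an admissible choice of $H$ in the minimum on the right-hand side. Therefore that minimum is at most $\rk{M}=\minrk{A}$. Combining the two inequalities completes the proof.

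The only subtlety — hardly an obstacle — is making sure the two directions use the right implications: the ``$\le$'' direction needs the nontrivial structural result that any separating $H$ forces $\ker{H}$ into the kernel of a completion (Theorem~\ref{thm:lin-str}), while the ``$\ge$'' direction needs only the easy fact (Lemma~\ref{lem:forb}) that a completion separates $K_A$ from zero. I would present it compactly, essentially as a two-line corollary of Corollary~\ref{cor:lin-mr} and Theorem~\ref{thm:lin-str}, noting that both quantities equal $n-\log_2|L|$ maximized over linear solutions $L$.
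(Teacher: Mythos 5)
Your proof is correct and follows essentially the same route as the paper: the easy inequality comes from Lemma~\ref{lem:forb} (a minimum-rank completion $M$ itself separates $K_A$ from zero), and the harder one from the structural fact that a separating $H$ forces $\ker{H}$ into the kernel of some completion. The only cosmetic difference is that the paper invokes Lemma~\ref{lem:separ} directly and bounds $\rk{M}\leq\rk{\spnm{H}}=\rk{H}$, whereas you route through Theorem~\ref{thm:lin-str} and compare $|\ker{H}|\leq|\ker{M}|$; these are interchangeable repackagings of the same argument.
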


\begin{proof}
  Let $R$ be the smallest possible rank of a $(0,1)$-matrix separating
  $K_A$ from zero.  To prove $\minrk{A}\geq R$, let $M$ be a
  completion of $A$ with $\rk{M}=\minrk{A}$. By
  Lemma~\ref{lem:forb}, the matrix $M$ separates $K_A$ form
  zero. Hence, $R\leq \rk{M}= \minrk{A}$.

  To prove $\minrk{A}\leq R$, let $H$ be a $(0,1)$-matrix such that
  $H$ separates $K_A$ form zero and $\rk{H}=R$.  By
  Lemma~\ref{lem:separ}, the matrix $\spnm{H}$ must contain a
  completion $M$ of $A$. Hence, $\minrk{A}\leq \rk{M}\leq
  \rk{\spnm{H}}=\rk{H}=R$.
\end{proof}

By Lemma~\ref{lem:forb}, the complement of $K_A$ is the union
of kernels $\ker{M}$ of
all completions $M$ of $A$. So, Theorems~\ref{thm:cayley} and \ref{thm:lin-str}
imply that a subset $L\subseteq\{0,1\}^n$ is:
\begin{itemize}
\item a solution for $A$ iff $L+L\subseteq
\bigcup\, \left\{\ker{M}\colon\mbox{$M$ is a completion of $A$}\right\}$;
\item a linear solution for $A$ iff $L\subseteq\ker{M}$ for some
completion $M$ of $A$.
\end{itemize}

\section{Structure of general solutions}
\label{sec:gen}

The following theorem says that non-linear solutions must be ``very
non-linear'': they cannot contain large linear subspaces.  Recall that
in Valiant's setting (cf. Lemma~\ref{lem:valiant}) we may assume that
each row of a $(0,1,\ast)$-matrix contains at most $s=n^{\delta}$
stars, where $\delta>0$ is an arbitrary small constant.
Define the co-distance of a vector space as the smallest weight 
of a non-zero vector in its orthogonal complement.

\begin{thm}\label{thm:struct2}
  Let $L\subseteq\gf^n$ be a solution for an $m$-by-$n$
  $(0,1,\ast)$-matrix $A$, and let $s$ be the maximum number of stars
  in a row of $A$.  If $L$ contains a subspace of co-distance at least
  $s+1$, then $L$ lies in a linear solution for~$A$.
\end{thm}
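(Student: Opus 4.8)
The plan is to use the Cayley-graph characterization (Theorem~\ref{thm:cayley}) together with the structure theorem for linear solutions (Theorem~\ref{thm:lin-str}). Let $V\subseteq L$ be a subspace of co-distance at least $s+1$; we want to produce a completion $M$ of $A$ with $V\subseteq\ker{M}$, which by Theorem~\ref{thm:lin-str} gives a linear solution containing $V$, and then argue it also contains $L$. The first step is to apply Lemma~\ref{lem:forb}: since $V$ is a subspace, $V+V=V$, so $V$ being a solution means $V\cap K_A=\emptyset$, i.e. for every row $i$ and every $\x\in V$, the conditions $D_i\x=\nulis$ and $\skl{\a_i}{\x}=1$ cannot both hold. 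Hence for each $i$, $\a_i$ restricted to $V$ agrees with some linear functional supported on the star-positions $S_i$ of row $i$ — more precisely, $\a_i=\h_i+\p_i$ on $V$ where $\p_i$ is supported on $S_i$ and $\h_i$ vanishes on $V$. Here is where the co-distance hypothesis enters: I want to choose $\p_i$ to be genuinely supported on $S_i$ (a subset of at most $s$ coordinates) and equal to $\a_i$ on $V$; the obstruction to finding such a $\p_i$ would be a nonzero vector in the orthogonal complement of $V$ supported on $S_i\cup\{$some extra coordinate$\}$, which has weight at most $s+1$ — ruled out precisely by co-distance $\geq s+1$. So the key lemma to isolate and prove is: if $V$ has co-distance $\geq s+1$ and $W$ is any set of at most $s$ coordinates, then every linear functional on $V$ extends to a linear functional on $\gf^n$ supported on $W\cup\{j_0\}$ for a suitable single extra coordinate $j_0$ — or, more cleanly, the projection map $V\to\gf^{W}$ has the property that distinct cosets of $V^\perp$ project to... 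I will need to be careful here; the cleanest route is: since no nonzero vector of weight $\le s$ lies in $V^\perp$, the coordinate-projection $\proj{\cdot}{S_i}:V\to\gf^{S_i}$ is such that any linear functional that kills $V$ and is supported on $S_i$ is zero, whence $\a_i|_V$ has a (unique modulo the zero functional) representative $\p_i$ supported on $S_i$ — this is just linear algebra over the subspace $\proj{V}{S_i}$.

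Granting that, I set $\m_i:=\a_i+\p_i$. Since $\p_i$ is supported on $S_i$ (the star-positions of row $i$), $M=(\m_i)$ is a completion of $A$. And for $\x\in V$ we have $\skl{\m_i}{\x}=\skl{\a_i}{\x}+\skl{\p_i}{\x}$, where by construction $\p_i$ agrees with $\a_i$ as a functional on $V$, so $\skl{\m_i}{\x}=0$; thus $V\subseteq\ker{M}$. By Theorem~\ref{thm:lin-str}, $\ker{M}$ is a linear solution for $A$.

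It remains to show $L\subseteq\ker{M}$, not merely $V\subseteq\ker{M}$ — this is the step I expect to be the main obstacle, since a priori $L$ could stick out of $\ker{M}$. The idea is to use that $L$ is a solution: by Theorem~\ref{thm:cayley}, $(L+L)\cap K_A=\emptyset$, i.e. condition \eqref{eq:sol0} holds for all pairs in $L$. Fix $\x\in L$. Because $V\subseteq L$, condition \eqref{eq:sol0} applied to pairs $(\x,\v)$ with $\v\in V$ says: for every row $i$, $D_i\x=D_i\v$ implies $\skl{\a_i}{\x}=\skl{\a_i}{\v}$. Now I want to deduce $\skl{\m_i}{\x}=0$ for all $i$, i.e. $\skl{\a_i}{\x}=\skl{\p_i}{\x}$. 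The point is that $\proj{V}{S_i}$ is large — again by co-distance, $V^\perp$ contains no nonzero vector supported on $S_i$, so $\proj{\cdot}{S_i}$ restricted to $V$ is surjective onto $\proj{V}{S_i}$ with the property that one can find $\v\in V$ with $D_i\v=D_i\x$ provided $D_i\x\in\proj{V}{S_i}$ (in coordinates: $\v$ matching $\x$ on $S_i$). If such $\v$ exists, then $\skl{\a_i}{\x}=\skl{\a_i}{\v}=\skl{\p_i}{\v}=\skl{\p_i}{\x}$ (the last because $\p_i$ is supported on $S_i$ and $\x,\v$ agree there), giving $\skl{\m_i}{\x}=0$. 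The case $D_i\x\notin\proj{V}{S_i}$ needs a separate argument: here $D_i\x$ differs from $D_i\v$ for every $\v\in V$, so \eqref{eq:sol0} imposes nothing on the pair $(\x,\v)$; but I can instead replace $M$ by a slightly different completion — or, better, observe that in this case I have freedom in choosing $\p_i$ on the coordinates of $S_i$ where $\x$ disagrees with all of $V$, and I use that one-dimensional slack to also force $\skl{\m_i}{\x}=0$ while keeping $V\subseteq\ker{\m_i}$. Making this last adjustment work uniformly over all rows simultaneously — and checking that the resulting $M$ is still a single completion with $L\cup V$ in its kernel — is the delicate bookkeeping, but the co-distance bound $s+1$ is exactly what guarantees enough room: each row touches at most $s$ coordinates, leaving one degree of freedom beyond pinning down the functional on $V$.
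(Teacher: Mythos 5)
Your route is the paper's route, but there is a genuine gap at exactly the point you flag as ``delicate bookkeeping,'' and the missing observation is the one thing the co-distance hypothesis is actually for. If $V$ has co-distance at least $s+1$, then for any set $S$ of at most $s$ coordinates the projection $\proj{V}{S}$ is not merely ``large'': it is \emph{all} of $\gf^{S}$. Indeed, $\proj{V}{S}$ is a subspace of $\gf^{S}$, and if it were proper, some nonzero functional on $\gf^{S}$ would vanish on it, i.e.\ some nonzero vector supported in $S$ (hence of weight at most $s$) would lie in $V^{\perp}$, contradicting co-distance $\geq s+1$; this is precisely Claim~\ref{clm:rem1} of the paper. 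Consequently, for every $\x\in L$ and every row $i$ there \emph{always} exists $\v\in V$ with $D_i\v=D_i\x$, so your case ``$D_i\x\notin\proj{V}{S_i}$'' is vacuous, and your first case already finishes the proof: condition \eqref{eq:sol0} for the pair $(\x,\v)\in L\times L$ gives $\skl{\a_i}{\x}=\skl{\a_i}{\v}=\skl{\p_i}{\v}=\skl{\p_i}{\x}$, hence $\skl{\m_i}{\x}=0$ for every row, i.e.\ $L\subseteq\ker{M}$. As written, however, your proposal does not prove the theorem: the unresolved case is handled only by a sketched per-row modification of $M$ (``one degree of freedom,'' adjusting $\p_i$ where $\x$ disagrees with all of $V$), which is not an argument — it is not even clear such adjustments could be made consistently for all $\x\in L$ and all rows simultaneously while keeping $V\subseteq\ker{M}$, and the claimed spare degree of freedom is never justified. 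Fortunately none of that is needed.

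Two smaller points. First, constructing a completion $M$ with $V\subseteq\ker{M}$ requires no co-distance assumption at all (your suggestion that the obstruction there is a weight-$(s+1)$ vector in $V^{\perp}$ is off target): since $V\subseteq L$ is a subspace, $V$ is itself a linear solution, and Theorem~\ref{thm:lin-str} immediately supplies a completion $M$ with $V\subseteq\ker{M}$ — this is exactly what the paper cites; your explicit construction of $\p_i$ by factoring $\skl{\a_i}{\cdot}$ restricted to $V$ through $\proj{\cdot}{S_i}$ is a correct re-derivation of that special case, but the co-distance hypothesis enters only later, through the surjectivity above. Second, modulo these repairs your argument coincides with the paper's proof, which runs the last step contrapositively: if some $\y\in L$ had $\skl{\m_i}{\y}=1$, pick $\x\in V$ agreeing with $\y$ on the star positions of row $i$ and conclude $\x\+\y\in K_A$, contradicting $(L+L)\cap K_A=\emptyset$.
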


\begin{proof}
  Since $L$ is a solution for~$A$, $W$ is a linear solution for $A$ as
  well.  Hence, by Theorem~\ref{thm:lin-str}, $W$ is contained in a
  kernel of some completion $M$ of~$A$. Our goal is to show that then
  the entire solution $L$ must be contained in~$\ker{M}$.  To show
  this, we will use the following simple fact.

\begin{clm}\label{clm:rem1}
  Let $W\subseteq \gf^n$ be a linear subspace of co-distance at least $k+1$.
  Then, for every $k$-element subset $S\subseteq[n]$ and for every
  vector $\y\in \gf^n$, there is a vector $\x\in W$ such that
  $\x\neq\nulis$ and $\proj{\y}{S}=\proj{\x}{S}$.
\end{clm}
\begin{proof}[Proof of Claim]
The set of all projections of vectors in $W$ onto $S$ forms a linear 
subspace. If this subspace would be proper, then some non-zero vector, 
whose support lies in $S$, would belong to the orthogonal complement 
of $W$, a contradiction.
\end{proof}

Assume now that $L\not\subseteq \ker{M}$, and take a vector $\y\in
L\setminus \ker{M}$. Since $\y\not\in\ker{M}$, we have that
$\skl{\m_i}{\y}=1$ for at least one row $\m_i$ of $M$.  Let $S$ be the
set of star-positions in the $i$th row of $A$ (hence, $|S|\leq s$),
and let $\a_i$ be this row of $A$ with all stars set to~$0$. By
Claim~\ref{clm:rem1}, there must be a vector $\x\in W\subseteq
L\cap\ker{M}$ with $\proj{\y}{S}=\proj{\x}{S}$, that is,
$D_i(\x\+\y)=\nulis$.  But $\x\in\ker{M}$ implies that
$\skl{\m_i}{\x}=0$.  Hence, $\skl{\m_i}{\x\+\y}=
\skl{\m_i}{\x}\+\skl{\m_i}{\y}=\skl{\m_i}{\y}=1$. Since the vector
$\a_i$ can only differ from $\m_i$ in star-positions of the $i$th row
of $A$ and, due to $D_i(\x\+\y)=\nulis$, the vector $\x\+\y$ has no
$1$'s in these positions, we obtain that $\skl{\a_i}{\x\+\y}=1$.
Hence, the vector $\x\+\y$ belongs to $K_A$, a contradiction with
$\x,\y\in L$.

This completes the proof of Theorem~\ref{thm:struct2}.
\end{proof}

\section{Relation to codes}
\label{sec:codes}

Let $1\leq r< n$ be integers. A (binary) error-correcting code of
minimal distance $r+1$ is a set $C\subseteq\gf^n$ of vectors, any two
of which differ in at least $r+1$ coordinates. A code is \emph{linear}
if it forms a linear subspace over $GF_2$.  The question on how good
linear codes are, when compared to non-linear ones, is a classical
problem in Coding Theory. We now will show that this is just a special
case of a more general ``$\opt{A}$ versus $\lin{A}$'' problem for
$(0,1,\ast)$-matrices, and that Min-Rank Conjecture in this special
case holds true.

An $(n,r)$-\emph{code matrix}, or just an $r$-\emph{code matrix} if
the number $n$ of columns is not important, is a $(0,1,\ast)$-matrix
with $n$ columns and $m=(r+1){{n}\choose{r}}$ rows, each of which
consists of $n-r$ stars and at most one $0$.  The matrix is
constructed as follows.  For every $r$-element subset $S$ of
$[n]=\{1,\ldots,n\}$ include in $A$ a block of $r+1$ rows $\a$ with
$a_i=\ast$ for all $i\not\in S$, $a_i\in\{0,1\}$ for all $i\in S$, and
$|\{i\in S\colon a_i=0\}|\leq 1$. That is, each of these rows has
stars outside $S$ and has at most one $0$ within $S$. For $r=3$ and
$S=\{1,2,3\}$ such a block looks like
\[
\begin{pmatrix}
  1 & 1 & 1 & \ast & \cdots & \ast\\
  0 & 1 & 1 & \ast & \cdots & \ast\\
  1 & 0 & 1 & \ast & \cdots & \ast\\
  1 & 1 & 0 & \ast & \cdots & \ast
\end{pmatrix}\,.
\]
A Hamming ball around the all-$0$ vector $\nulis$ is defined by
\[
\Ball{r}=\{\x\in\gf^n\colon 0\leq|\x|\leq r\}\,,
\]
where $|\x|=x_1+\cdots+x_n$ is the number of $1$'s in $\x$.

\begin{obs}\label{obs:codes0}
  If $A$ is an $r$-code matrix, then
  $K_A=\Ball{r}\setminus\{\nulis\}$.
\end{obs}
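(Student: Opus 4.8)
The plan is to unwind both definitions and check that membership in $K_A$ matches membership in $\Ball{r}\setminus\{\nulis\}$, vector by vector. Recall $K_A=\bigcup_i K_i$ where $K_i=\{\x\colon D_i\x=\nulis,\ \skl{\a_i}{\x}=1\}$, the union being over all $m=(r+1)\binom{n}{r}$ rows $\a$ of the $r$-code matrix. Each such row is indexed by an $r$-subset $S\subseteq[n]$ together with a choice of where (if anywhere) the single $0$ inside $S$ sits; outside $S$ the row is all stars, so $D_i\x=\nulis$ exactly says $\x$ is supported inside $S$, i.e. $\proj{\x}{[n]\setminus S}=\nulis$.

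For the inclusion $\Ball{r}\setminus\{\nulis\}\subseteq K_A$: take $\x\neq\nulis$ with $|\x|\le r$. Choose an $r$-subset $S$ containing the support of $\x$; this is possible since $|\x|\le r$. Now I want a row $\a$ of the block for $S$ with $\skl{\a}{\x}=1$, where $\a$ restricted to $S$ is a $(0,1)$-vector having at most one $0$. If $|\x|$ is odd, the all-ones row (the ``at most one $0$'' allowance being used with zero $0$'s) already gives $\skl{\a}{\x}=|\x|\bmod 2=1$. If $|\x|$ is even, then since $\x\neq\nulis$ pick any $j$ in the support of $\x$ and take the row of the $S$-block whose only $0$ sits in coordinate $j$; then $\skl{\a}{\x}=|\x|-x_j=|\x|-1$ is odd, so $\skl{\a}{\x}=1$. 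Either way $\x\in K_i$ for that row, hence $\x\in K_A$.

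For the reverse inclusion $K_A\subseteq\Ball{r}\setminus\{\nulis\}$: suppose $\x\in K_i$ for some row $\a$ belonging to the block of an $r$-subset $S$. From $D_i\x=\nulis$ we get that $\x$ is supported inside $S$, so $|\x|\le|S|=r$, i.e. $\x\in\Ball{r}$. And $\x\neq\nulis$, because $\skl{\a_i}{\x}=1$ forces $\x$ to be nonzero. Hence $\x\in\Ball{r}\setminus\{\nulis\}$. Combining the two inclusions gives $K_A=\Ball{r}\setminus\{\nulis\}$.

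There is no real obstacle here; the only point that needs a moment's care is the parity split in the first inclusion, where the ``at most one $0$'' freedom in the definition of the code matrix is exactly what lets us flip the parity of $\skl{\a}{\x}$ when $|\x|$ is even, and the hypothesis $\x\neq\nulis$ is precisely what guarantees a support coordinate $j$ to place that $0$. Everything else is bookkeeping about which coordinates are stars versus constants in a code-matrix row.
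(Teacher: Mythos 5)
Your proof is correct and follows essentially the same route as the paper: your parity case split (all-ones row when $|\x|$ is odd, the row with its single $0$ at a support coordinate when $|\x|$ is even) is exactly the paper's observation that no nonzero vector can be orthogonal to all $r+1$ vectors $\vienas,\vienas\+\e_1,\ldots,\vienas\+\e_r$, and the rest is the same bookkeeping about stars forcing the support into $S$.
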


\begin{proof}
  Observe that no vector $\x\in\gf^r$, $\x\neq\nulis$ can be
  orthogonal to all $r+1$ vectors $\vienas,
  \vienas\+\e_1,\ldots,\vienas\+\e_r$ in $\gf^r$ with at most one $0$.
  Indeed, if $\skl{\x}{\vienas}=0$ then $\skl{\x}{\vienas\+\e_i}=x_i$
  for all $i=1,\ldots,r$.  By this observation, a vector $\x$ belongs
  to $K_A$ iff there is an $r$-element set $S\subseteq[n]$ of
  positions such that $\proj{\x}{S}\neq\nulis$ and
  $\proj{\x}{\overline{S}}=\nulis$, that is, iff $\x\neq\nulis$ and
  $\x\in\Ball{r}$.
\end{proof}

\begin{obs}\label{obs:codes}
  If $A$ is an $(n,r)$-code matrix, then the solutions for $A$ are
  error-correcting codes of minimal distance $r+1$, and linear
  solutions for $A$ are linear codes.
\end{obs}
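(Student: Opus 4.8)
The plan is to combine Theorem~\ref{thm:cayley} with Observation~\ref{obs:codes0}, after which the statement becomes essentially a translation between two vocabularies. First I would recall that, by Theorem~\ref{thm:cayley}, a set $L\subseteq\gf^n$ is a solution for $A$ if and only if $(L+L)\cap K_A=\emptyset$. By Observation~\ref{obs:codes0}, when $A$ is an $r$-code matrix we have $K_A=\Ball{r}\setminus\{\nulis\}$; that is, $K_A$ is exactly the set of nonzero vectors of Hamming weight at most~$r$.

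Next I would unwind what the condition $(L+L)\cap K_A=\emptyset$ says: for all $\x,\y\in L$, the vector $\x\+\y$ is \emph{not} a nonzero vector of weight at most $r$, i.e.\ either $\x\+\y=\nulis$ (equivalently $\x=\y$) or $|\x\+\y|\geq r+1$. Since $|\x\+\y|$ is precisely the number of coordinates in which $\x$ and $\y$ differ, this is exactly the assertion that any two distinct vectors of $L$ differ in at least $r+1$ coordinates, i.e.\ $L$ is an error-correcting code of minimal distance $r+1$. The converse is the same chain of equivalences read backwards: any code of minimal distance $r+1$ satisfies $(L+L)\cap K_A=\emptyset$ and is therefore a solution for $A$. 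This gives the first assertion.

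For the second assertion I would simply observe that a linear solution for $A$ is, by definition, a solution that forms a linear subspace of $\gf^n$, while a linear code is, by definition, a code that forms a linear subspace; combining this with the first assertion, the linear solutions for $A$ are precisely the linear codes of minimal distance $r+1$. I do not expect any genuine obstacle here: the only ``work'' is the passage from the sum-set language of Theorem~\ref{thm:cayley} to the Hamming-distance language of coding theory, and that passage is immediate once Observation~\ref{obs:codes0} has identified $K_A$ with the punctured Hamming ball $\Ball{r}\setminus\{\nulis\}$.
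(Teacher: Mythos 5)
Your proposal is correct and follows essentially the same route as the paper: the paper's proof likewise combines Theorem~\ref{thm:cayley} with Observation~\ref{obs:codes0} and then translates $(L+L)\cap(\Ball{r}\setminus\{\nulis\})=\emptyset$ into the statement that distinct vectors of $L$ differ in at least $r+1$ coordinates, with the linear case following immediately.
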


\begin{proof}
  We have $(L+L)\cap (\Ball{r}\setminus\{\nulis\})=\emptyset$ iff
  $|\x\+\y|\geq r+1$ for all $\x\neq\y\in L$, that is, iff every two
  vectors $\x\neq \y\in L$ differ in at least $r+1$ positions. Hence,
  every solution for an $r$-code matrix $A$ is a code of minimal
  distance at least $r+1$, and linear solutions are linear codes.
\end{proof}

\begin{lem}\label{lem:codes}
  For code matrices, the min-rank conjecture holds with a constant
  $\epsilon>0$.
\end{lem}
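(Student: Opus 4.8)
The plan is to restate the lemma in the language of coding theory and then sandwich the min-rank between the sphere-packing bound and the Gilbert--Varshamov bound. Fix an $(n,r)$-code matrix $A$. By Observations~\ref{obs:codes0} and~\ref{obs:codes}, $\opt{A}$ is exactly the maximum number of codewords in a binary code of length $n$ and minimal distance $r+1$, while by Corollary~\ref{cor:lin-mr} we have $\lin{A}=2^{n-\minrk{A}}$, which is the maximum size of such a \emph{linear} code; in particular $\minrk{A}=n-\log_2\lin{A}$. Throughout, write $V(n,k)=\sum_{i=0}^{k}\binom{n}{i}=|\Ball{k}|$ for the volume of a Hamming ball of radius $k$.

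First I would bound $\minrk{A}$ from above: the Gilbert--Varshamov bound yields a \emph{linear} code of length $n$ and minimal distance $r+1$ with at least $2^{n}/V(n,r)$ codewords, so $\lin{A}\geq 2^{n}/V(n,r)$ and hence $\minrk{A}\leq\log_2 V(n,r)$. Second, I would bound $\opt{A}$ from above: the Hamming balls of radius $t:=\lfloor r/2\rfloor$ around the codewords of a code of minimal distance $r+1\geq 2t+1$ are pairwise disjoint, so the sphere-packing bound gives $\opt{A}\cdot V(n,t)\leq 2^{n}$, i.e. $n-\log_2\opt{A}\geq\log_2 V(n,t)$. Combining the two, it suffices to exhibit an absolute constant $\epsilon>0$ with $\log_2 V(n,\lfloor r/2\rfloor)\geq\epsilon\cdot\log_2 V(n,r)$: then $n-\log_2\opt{A}\geq\log_2 V(n,t)\geq\epsilon\log_2 V(n,r)\geq\epsilon\cdot\minrk{A}$, which is the desired inequality $\opt{A}\leq 2^{n-\epsilon\minrk{A}}$.

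For the ball-volume comparison I would use the elementary inequality $\binom{n}{a+b}\leq\binom{n}{a}\binom{n}{b}$ valid whenever $a+b\leq n$ (it follows from $\binom{n}{a}\binom{n-a}{b}=\binom{n}{a+b}\binom{a+b}{a}$ together with $\binom{n-a}{b}\leq\binom{n}{b}$). Writing each $i\leq 2t$ as $i=\lceil i/2\rceil+\lfloor i/2\rfloor$ with both summands $\leq t$ gives $\binom{n}{i}\leq V(n,t)^2$, and $\binom{n}{2t+1}\leq\binom{n}{t}\binom{n}{t+1}\leq n\,V(n,t)^2$; summing over $i\leq r\leq 2t+1$ yields $V(n,r)\leq 3n\cdot V(n,t)^{2}$. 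When $r\geq 2$ we have $t\geq 1$, hence $V(n,t)\geq n+1$ and so $3n\leq V(n,t)^2$, which upgrades the last bound to $\log_2 V(n,r)\leq 4\log_2 V(n,t)$; thus $\epsilon=1/4$ works for all $r\geq2$. The degenerate case $r=1$ I would handle by hand: a code of minimal distance $2$ contains at most one vector from each pair $\{\x,\x\+\e_1\}$, so $\opt{A}\leq 2^{n-1}$, while $\minrk{A}=1$ (by Corollary~\ref{cor:mr}, since $K_A=\{\e_1,\dots,\e_n\}$ and the all-ones row separates it from zero), and $2^{n-1}\leq 2^{n-\minrk{A}/4}$.

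I expect the only real obstacle to be quantitative: extracting an \emph{absolute} constant from the comparison of $V(n,\lfloor r/2\rfloor)$ with $V(n,r)$ uniformly over all $1\leq r<n$ — small $r$ and the lower-order terms are the delicate regime, whereas the coarse asymptotic estimate $\log_2 V(n,\lambda n)\approx H(\lambda)\,n$ (with $H$ the binary entropy function) is not by itself enough. One should also take care that the invoked Gilbert--Varshamov bound is the version for \emph{linear} codes, so that it genuinely controls $\minrk{A}$ through $\lin{A}$.
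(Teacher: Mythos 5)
Your proposal is correct and takes essentially the same route as the paper: an upper bound $\opt{A}\leq 2^n/|\Ball{t}|$ via the Hamming/sphere-packing bound (the paper gets this by noting $\Ball{t}$ is a clique in the Cayley graph generated by $K_A$) combined with the Gilbert--Varshamov lower bound $\lin{A}\geq 2^n/|\Ball{r}|$ for linear codes. The only difference is that you make explicit the ball-volume comparison $\log_2 V(n,r)\leq 4\log_2 V(n,\lfloor r/2\rfloor)$ and the degenerate case $r=1$, a quantitative step the paper glosses over when it asserts that $\epsilon$ is ``about $1/2$''.
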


\begin{proof}
  Let $A$ be an $(n,r)$-code matrix; hence,
  $K_A=\Ball{r}\setminus\{\nulis\}$.  Set $t:=\lfloor(r-1)/2\rfloor$.
  Since $|\x\+\y|\leq 2t<r$ for all $\x,\y\in \Ball{t}$, the sum of
  any two vectors $\x\neq\y$ from $\Ball{t}$ lies in $K_A$, implying
  that $\Ball{t}$ is a clique in the Cayley graph generated by $K_A$.
  Since, by Remark~\ref{rem:cayley}, solutions for $A$ are independent
  sets in this graph, and since in any graph the number of its
  vertices divided by the clique number is an upper bound on the size
  of any independent set, we obtain:
  \begin{equation}\label{eq:hamming}
    \opt{A}\leq 2^n/|\Ball{t}|= 2^n\Big/\sum_{i=0}^t{{n}\choose{i}}\,,
  \end{equation}
  which is the well-known Hamming bound for codes.  On the other hand,
  Gilbert-Varshamov bound says that linear codes in $\gf^n$ of
  dimension $k$ and minimum distance $d$ exist, if
  \[
  \sum_{i=0}^{d-2}{{n-1}\choose{i}}<2^{n-k}\,.
  \]
  Hence,
  \begin{equation}\label{eq:GV}
    \lin{A}\geq 2^n\Big/\sum_{i=0}^{r}{{n}\choose{i}}\,.
  \end{equation}
  Together with (\ref{eq:hamming}), this implies that the inequality
  (\ref{eq:first}) holds with $\epsilon$ about $1/2$.
\end{proof}

The example of code matrices also shows that the gap between min-rank
and row/column min-rank may be at least logarithmic in~$n$.

\begin{lem}\label{lem:gap}
  If $A$ is an $(n,r)$-code matrix, then $\minrk{A}=\Omega(r\ln
  (n/r))$ but $\wminrkcol{A}\leq r+1$ and $\wminrkrow{A}\leq 2r$.
\end{lem}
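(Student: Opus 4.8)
The plan is to establish the three bounds separately, the first being a rank estimate and the last two being explicit linear-dependence constructions. For $\minrk{A}=\Omega(r\ln(n/r))$: by Corollary~\ref{cor:mr}, $\minrk{A}$ equals the smallest rank of a $(0,1)$-matrix $H$ with $H\x\neq\nulis$ for all $\x\in K_A$, and by Observation~\ref{obs:codes0} we have $K_A=\Ball{r}\setminus\{\nulis\}$. So such an $H$ must satisfy $H\x\neq\nulis$ for every nonzero $\x$ of weight at most $r$; equivalently, no nonzero combination of at most $r$ columns of $H$ vanishes, i.e. every $r$ columns of $H$ are linearly independent. A matrix with this property is the parity-check matrix of a code of minimum distance at least $r+1$, so if $H$ has $\rho$ rows then $2^{n-\rho}\le\lin{A}\le 2^n/|\Ball{t}|$ by the Hamming bound \eqref{eq:hamming} with $t=\lfloor(r-1)/2\rfloor$ (indeed any code avoiding $K_A$ obeys this), whence $\rho\ge\log_2|\Ball{t}|=\Omega(r\ln(n/r))$ by the standard estimate $\binom{n}{t}\ge(n/t)^t$. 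Taking the minimum over all valid $H$ gives $\minrk{A}=\Omega(r\ln(n/r))$.

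For $\wminrkcol{A}\le r+1$: I must show that any $r+2$ columns of $A$ are dependent as $(0,1,\ast)$-vectors, i.e. some subset of them sums to a $(0,\ast)$-vector. Recall each row of $A$ has exactly $n-r$ stars (supported outside some $r$-set $S$) and at most one $0$ inside $S$. Fix any $r+2$ columns, indexed by a set $T$ with $|T|=r+2$. In any row $\a$ with defining set $S$, the entries of $\a$ in positions of $T$: those in $T\setminus S$ are stars, and $|T\cap S|\le r<|T|$, so at most $r$ of the $T$-entries are non-star, and among those non-star entries at most one is a $0$. Now consider the sum (over $GF_2$ with the $\ast$-convention $\alpha\+\ast=\ast$) of \emph{all} $r+2$ chosen columns: in each row, this sum is $\ast$ if any of the $T$-entries in that row is a star, which happens whenever $T\not\subseteq S$; since $|S|=r<r+2$, this is the case for \emph{every} row. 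Hence the full sum of all $r+2$ columns is a $(0,\ast)$-vector — in fact an all-$\ast$ vector — so the columns are dependent, giving $\wminrkcol{A}\le r+1$.

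For $\wminrkrow{A}\le 2r$: I must exhibit, among any $2r+1$ rows of $A$, a subset summing to a $(0,\ast)$-vector. The structure of rows is the key: a row is determined by its defining $r$-set $S$ together with a choice of at most one position in $S$ to set to $0$. Among $2r+1$ rows, at least $\lceil(2r+1)/(r+1)\rceil\ge 2$ — more carefully, by pigeonhole, since there are $r+1$ row-types per block $S$, either two chosen rows share the same defining set $S$, or the $2r+1$ rows use at least $\lceil(2r+1)/(r+1)\rceil=2$ distinct sets; I should argue it differently. The cleanest route: if two of the chosen rows have the same defining set $S$, their sum is supported inside $S$ (stars cancel outside $S$) and, being a $(0,1)$-vector difference of two weight-$(\ge r-1)$ vectors of the described form, is a $(0,\ast)$-vector — actually a genuine $\nulis$ or a weight-$\le 2$ vector, but crucially it has \emph{no stars}, and it is either $\nulis$ (dependence, done) or we need it to be $(0,\ast)$, which it is since it is star-free and could be nonzero — so this needs care; I instead pick the $0$-positions to collide. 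The safe argument: among $2r+1\ge r+2$ rows there must be two rows with the same defining set $S$ \emph{and} the same designated $0$-position (or both all-ones within $S$), since each block has only $r+1$ rows and we have more than $r+1$... this does not immediately follow. The main obstacle is precisely pinning down the right pigeonhole; I expect the correct statement is that among any $2r$ rows one can find two from the same block $S$ (true once $2r>r+1$, i.e. $r\ge 2$), and two rows from the same block sum to a \emph{star-free} vector of weight at most $2$, hence either $\nulis$ (dependence) or supported on $\le 2$ positions — and then a third row from a block $S'$ disjoint-ish argument finishes it. I would work out the exact constant $2r$ by taking two rows from one block and one row from another chosen so that the three-fold sum becomes all-$\ast$ outside a common coordinate set and $\nulis$ on it, which is where the bound $2r$ rather than $r+1$ comes from. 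This is the step I expect to require the most bookkeeping.
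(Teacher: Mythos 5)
Your first two parts are essentially sound. The lower bound $\minrk{A}=\Omega(r\ln(n/r))$ follows the paper's own route (Corollary~\ref{cor:mr}, Observation~\ref{obs:codes0}, Hamming bound); just bound $\rk{H}$ rather than the number of rows $\rho$, e.g.\ via $|\ker{H}|=2^{n-\rk{H}}$, which is an immediate fix. Your column bound is correct and is a genuinely different (and arguably cleaner) argument than the paper's: you observe that the sum of any $r+2$ chosen columns is all-$\ast$ because every row has only $r$ non-star positions, hence the columns are dependent; the paper instead argues that $k$ independent columns force some row whose restriction is star-free of odd weight, and since such a restriction has at most one $0$ and at most $r$ ones, $k-1\le r$.

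The third claim, $\wminrkrow{A}\le 2r$, is a genuine gap: you explicitly leave it unfinished, and the route you sketch would not close it. The pigeonhole you hope for --- that among $2r$ or $2r+1$ rows two must come from the same block $S$ --- is false, since the matrix has $\binom{n}{r}$ blocks and an independent set of rows may use pairwise distinct defining sets; moreover, as you yourself note, two rows from the same block with different $0$-positions sum to a star-free vector of weight $2$, which is not a $(0,\ast)$-vector, so no dependence results. The paper's proof works quite differently, using only \emph{pairwise} independence plus a counting step. First, any two rows without a $0$-entry are dependent (their sum is a $(0,\ast)$-vector), so at most one such row occurs, and any two rows sharing the same $0$-position are likewise dependent, so the remaining $k$ rows have distinct $0$-positions. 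Next, for rows $\x,\y$ with $0$-positions $i\neq j$, independence forces $x_j=1$ or $y_i=1$ (otherwise $x_j=y_i=\ast$ and $\x\+\y$ is a $(0,\ast)$-vector). Charging each of the $\binom{k}{2}$ pairs to such a witnessing $1$-entry --- each $1$ of a fixed row is charged at most once because the $0$-positions are distinct --- produces a row with at least $(k-1)/2$ ones; since a row containing a $0$ has exactly $r-1$ ones, $k\le 2r-1$ and hence at most $2r$ independent rows. Nothing in your sketch supplies a substitute for this global counting argument, so as it stands the bound $\wminrkrow{A}\le 2r$ is unproved in your proposal.
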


\begin{proof}
  To prove $\minrk{A}=\Omega(r\ln (n/r))$, recall that
  $K_A=\Ball{r}\setminus\{\nulis\}$. Hence, Corollary~\ref{cor:mr}
  implies that $\minrk{A}$ is the smallest possible rank of a
  $(0,1)$-matrix $H$ such that
  $\ker{H}\cap\Ball{r}\subseteq\{\nulis\}$.  On the other hand, for
  any such matrix $H$, its kernel $L=\ker{H}$ is a (linear) code of
  minimal distance at least $r+1$ containing $|L|=2^{n-\rk{H}}$
  vectors. Since, by Hamming bound~(\ref{eq:hamming}), no code $L$ of
  distance at least $r+1$ can have more than $N=2^n/(n/r)^{O(r)}$
  vectors, we have that
  \[
  \rk{H}=n-\log_2|L|\geq n-\log_2N =\Omega(r \ln (n/r))\,.
  \]

  To prove that $\wminrkcol{A}\leq r+1$, suppose that $A$ contains
  some $m\times k$ submatrix $B$ of min-rank $k$. Since all $k$
  columns must be independent, at least one row $\b$ of $B$ must be
  $\ast$-free and contain an odd number $|\b|$ of $1$'s. But every row
  of $A$ (and hence, also $\b$) can contain at most one $0$, implying
  that $|\b|\geq k-1$. Together with $|\b|\leq r$, this implies that
  $k\leq r+1$.

To prove that $\wminrkrow{A}\leq 2r$, recall that each row of $A$
  consists of $n-r$ stars and at most one $0$; the remaining $r$ (or
  $r-1$) entries are $1$'s.  Suppose now that $A$ contains some set
  $X$ of $|X|=k+1$ independent rows. That is, no subset of
  these rows can be made linearly dependent by setting $\ast$'s to $0$
  or $1$. The rows in $X$ must be, in particular, \emph{pairwise}
  independent. This, in particular, means that the set $X$ can contain at
most one row without $0$-entries.
So, let $Y\subseteq X$ be a set of $|Y|=k$ rows
containing $0$-entries.
Take any two rows
$\x\neq \y\in Y$ with $x_i=0$ and $y_j=0$. Since $\x$ and $\y$ are
independent and have only $\ast$'s or $1$'s outside their $0$-entries,
we have that: $i\neq j$ and either $x_j=1$ or $y_i=1$.
This implies that
the total number of $1$'s in the rows of $Y$ must be at least the number
${{k}\choose{2}}$ of pairs of vectors in $Y$.
So, there must exist a row $x\in Y$ with
$|\x|\geq {{k}\choose{2}}/|Y|=(k-1)/2$. Together with  $|\x|\leq r-1$,
this implies that
  $k\leq 2r-1$, and thus, that $|X|=k+1\leq 2r$.
\end{proof}

\section{Conclusion and open problems}

In this paper we pose a conjecture about systems of semi-linear
equations and show its relation to proving super-linear lower bounds
for log-depth circuits. We then give a support for the conjecture by
proving that some its weaker versions are true.  We also show that
solutions are independent sets in particular Cayley graphs, thus
turning the conjecture in a more general (combinatorial)
setting. Using this, we prove several structural properties of sets of
solutions that might be useful when tackling the original conjecture.

We defined solutions for a given $m$-by-$n$ $(0,1,\ast)$-matrix $A$ as
sets $L\subseteq\gf^n$ of vectors $\x$ satisfying a system of
equations
\begin{equation}\label{eq:concl1}
  \skl{\a_i}{\x} = g_i (D_i\x)\qquad i=1,\ldots,m\,,
\end{equation}
where $\a_i$ is the $i$th row of $A$ with all stars replaced by $0$,
$g_i$ is an arbitrary boolean function, and $D_i$ is a diagonal
$n$-by-$n$ $(0,1)$-matrix corresponding to stars in the $i$th row of
$A$.  We have also shown (see Remark~\ref{rem:cayley}) that solutions
for $A$ are precisely the independent sets in a Cayley graph over the
Abelian group $(\gf^n,\+)$ generated by a special set of vectors
\begin{equation}\label{eq:concl2}
  K_A = \{ \x \colon \exists i\ D_i \x =\nulis \mbox{ and }
  \skl{\a_i}{\x} = 1\}\,.
\end{equation}
The following two questions about possible generalizations of the
min-rank conjecture naturally arise:
\begin{enumerate}
\item What if instead of diagonal matrices $D_i$ in (\ref{eq:concl1})
  we would allow other $(0,1)$-matrices?
\item What if instead of special generating sets $K_A$, defined by
  (\ref{eq:concl2}), we would allow other generating sets?
\end{enumerate}
The following two examples show that the min-rank conjecture cannot be
carried too far: its generalized versions are false.

\begin{ex}[Bad generating sets $K$]\label{ex:bad}\rm
  Let $G$ be a Cayley graph generated by the set $K\subseteq\gf^n$ of
  all vectors with more than $n-2\sqrt{n}$ ones.  If $L\subseteq\gf^n$
  consist of all vectors with at most $n/2-\sqrt{n}$ ones, then
  $(L+L)\cap K=\emptyset$, that is, $L$ is an independent set in $G$
  of size $|L|\geq 2^{n-O(\log n)}$.  But any \emph{linear}
  independent set $L'$ in $G$ is a vector space of dimension at most
  $n-2\sqrt{n}$. Hence, $|L'|\leq 2^{n-2\sqrt{n}}$, and the gap
  $|L|/|L'|$ can be as large as $2^{\Omega(\sqrt{n})}$.

  Note, however, that there is a big difference between the set $K$ we
  constructed and the sets $K_A$ arising form $(0,1,\ast)$-matrices
  $A$: generating sets $K_A$ must be almost ``closed downwards''.  In
  particular, if $\x\in K_A$ then \emph{all} nonzero vectors, obtained
  from $\x$ by flipping some even number of its $1$'s to $0$'s, must
  also belong to~$K_A$. Hence, this example does not refute the
  min-rank conjecture as such.
\end{ex}

\begin{ex}[Bad matrices $D_i$]\rm
  Let us now look what happens if we allow the matrices
  $D_1,\ldots,D_m$ in the definition of a system of semi-linear
  equations (\ref{eq:concl1}) be \emph{arbitrary} $n\times n$
  $(0,1)$-matrices.  A completion $M$ of $A$ can then be defined as a
  $(0,1)$-matrix with rows $\m_i=\a_i+\Alpha_i^{\top}D_i$.  Now define
  $\minrk{A|D_1, \ldots,D_r}$ as the minimal rank of such a completion
  of $A$.  Observe that this definition coincides with the ``old''
  min-rank, if we take the $D_i$'s to be the diagonal matrices
  corresponding the stars in the $i$th row of~$A$.

  However, Example~\ref{ex:bad} shows that the min-rank conjecture is
  false in this generalized setting.  To see why, we can define
  appropriate matrices $A,D_1, \ldots,D_m $ such that the
  corresponding set $K_A$ defined by (\ref{eq:concl2}) consists of
  vectors with more than $n - 2\sqrt{n}$ ones: for an arbitrary vector
  $\v$ with more than $n - 2\sqrt{n}$ ones just define $\a_i$ and
  $D_i$ such that the system $D_i \x = \nulis, \skl{\a_i}{\x}=1$ has
  $\v$ as its only solution.
\end{ex}

Except of the obvious open problem to prove or disprove the
linearization conjecture (Conjecture~\ref{conj:gen}) or the min-rank
conjecture (Conjecture~\ref{conj:goal}), there are several more
concrete problems.

We have shown (Lemma~\ref{lem:gap}) that the gap between min-rank and
row/column min-ranks may be as large as $\ln n$. It would be
interesting to find $(0,1,\ast)$-matrices $A$ with larger gap.
\begin{probl}\label{probl:gap}
  How large can the gap $\minrk{A}/\max\{\wminrkcol{A},
  \wminrkrow{A}\}$ be?
\end{probl}

The next question concerns the clique number $\cl{G_A}$ of (that is,
the largest number of vertices in) Cayley graphs $G_A$ generated by
the sets of the sets $K_A\subseteq\gf^n$ of the form
(\ref{eq:concl2}). By Remark~\ref{rem:cayley}, solutions for $A$ are
independent sets in this graph. Hence, $\opt{A}$ is just the
independence number $\alpha(G_A)$ of this graph. Since in any
$N$-vertex graph $G$ we have that $\cl{G}\cdot\alpha(G)\leq N$, this
yields $\opt{A}\leq 2^n/\cl{G_A}$. On the other hand, it is easy to
see that $\cl{G_A}\leq 2^{\rk{M}}$, where $M$ is a canonical
completion of $A$ obtained by setting all $\ast$'s to $0$: If
$C\subseteq \gf^n$ is a clique in $G_A$, then we must have $M\x\neq
M\y$ for all $\x\neq \y\in C$, because otherwise the vector $\x\+\y$
would not belong to~$K_A$.

\begin{probl}
  Give a lower bound on $\cl{G_A}$ in terms of min-rank $\minrk{A}$
  of $A$.
\end{probl}

Finally, it would be interesting to eliminate an annoying requirement
in Theorem~\ref{thm:weak-row} that the matrix $A$ must be
star-monotone.

\begin{probl}
  If $A$ is an $r$-by-$n$ $(0,1,\ast)$-matrix of min-rank $r$, is then
  $\opt{A}\leq 2^{n-r}$?
\end{probl}

\begin{table}[htbp]
  \caption{This table summarizes the concepts introduced in this paper.
    Here $A$ is a partially defined $m\times n$
    matrix with entries from $\{0,1,\ast\}$.}
  \begin{center}
    \begin{tabular}{p{0.3\textwidth}p{0.1\textwidth}p{0.5\textwidth}}
      \hline
      Concept &  Notation & Meaning\\
      \hline\hline
      & & \\
      Completion of $A$ & & A $(0,1)$-matrix obtained from $A$ by
      setting
      its $\ast$-entries to $0$ and~$1$.\\
      Canonical completion of $A$ & & All $\ast$-entries of $A$ set to $0$.\\
      Min-rank & $\minrk{A}$ & Minimal rank over $GF_2$ of a
      completion of $A$.
      \\
      Max-rank & $\maxrk{A}$
      &  Maximal rank over $GF_2$ of a completion of $A$.\\
      Operator $G$ consistent with $A$ & & The $i$th coordinate of
      $G:\{0,1\}^n\to\{0,1\}^m$ can only depend on variables
      corresponding to $\ast$-entries
      in the $i$th row of~$A$.\\
      Solution for $A$ & & A set $L\subseteq\{0,1\}^n$ of the form
      $L=\{\x\colon M\x=G(\x)\}$,
      where $M$ is a completion of $A$, and $G$ is an operator consistent with $A$.\\
      Linear solution for $A$ & & A solution for $A$ forming a linear
      subspace of
      $\{0,1\}^n$.\\
      &  $\opt{A}$ & Maximum size of a solution for $A$.\\
      &  $\lin{A}$ & Maximum size of a linear solution for $A$;
$\lin{A}=2^{n-\minrk{A}}$. \\
      Min-Rank Conjecture & & $\opt{A}\leq
      2^{n-\epsilon\cdot\minrk{A}}$ for a constant $\epsilon>0$.\\
      Independence of $(0,1,\ast)$-vectors & & Cannot be made linear
      dependent by setting
      $\ast$'s to constants.\\
      Row min-rank & $\wminrkrow{A}$ & Maximal number of independent rows.\\
      Column min-rank & $\wminrkcol{A}$ & Maximal number of independent columns.\\
      Incidence matrix of $\ast$'s & $D_i$ & Diagonal $(0,1)$-matrix
      with
      $D_i[j,j]=1$ iff $A[i,j]=\ast$.\\
      Set of forbidden vectors & $K_A$ & All vectors $\x\in\{0,1\}^n$
      such that $D_i\x=0$ and $\skl{\a_i}{\x}=1$, where $\a_i$ is the
      $i$th row of $A$ with all stars set to $0$. Main property: $L$ is
      a solution for $A$ iff
      $(L+L)\cap K_A=\emptyset$.\\
      \hline\hline
    \end{tabular}
  \end{center}
  \label{tab:1}
\end{table}

\end{document}